\DeclareMathAlphabet{\pazocal}{OMS}{zplm}{m}{n}
\definecolor{revision}{RGB}{0,0,255}
\newcommand{\revisionstart}{\begin{color}{revision}}
\newcommand{\revisionend}{~\!\!\end{color}}
\newtheorem{definition}{Definition}
\newtheorem{theorem}{Theorem}
\newtheorem{lemma}{Lemma}
\newcommand{\recht}{\operatorname}
\newcommand{\EV}[1]{\ensuremath{\mathbb{E}\left[\, #1 \,\right]}}
\renewcommand{\Pr}[1]{\ensuremath{\mathsf{Pr}\left[#1\right]}\xspace}
\newcommand{\says}[2]{\noindent\textcolor{purple}{\textbf{#1 says: }}\textcolor{blue}{#2}\xspace}
\newcommand{\mypara}[1]{\vspace*{0.05in}\noindent\textbf{#1} \xspace}
\newcommand{\Domain}{\ensuremath{D}\xspace}
\newcommand{\perturb}{\ensuremath{\Psi}\xspace}
\newcommand{\aggregate}{\ensuremath{\Phi}\xspace}
\newcommand{\olh}{\ensuremath{\mathsf{OLH}}\xspace}
\newcommand{\oue}{\ensuremath{\mathsf{OUE}}\xspace}
\newcommand{\grr}{\ensuremath{\mathsf{GRR}}\xspace}
\newcommand{\fo}{\ensuremath{\mathsf{FO}}\xspace}
\newcommand{\FO}{\ensuremath{\mathsf{FO}}\xspace}
\newcommand{\tuple}[1]{\ensuremath{\langle #1 \rangle}}
\begin{document}

\title{Locally Differentially Private Frequency Estimation with Consistency}

\author{\IEEEauthorblockN{Tianhao Wang$^1$, Milan Lopuhaä-Zwakenberg$^2$, Zitao Li$^1$, Boris Skoric$^2$, Ninghui Li$^1$}
	\IEEEauthorblockA{
		$^1$Purdue University, $^2$Eindhoven University of Technology\\
		\{tianhaowang, li2490, ninghui\}@purdue.edu, \{m.a.lopuhaa, b.skoric\}@tue.nl}
}
\IEEEoverridecommandlockouts
\makeatletter\def\@IEEEpubidpullup{6.5\baselineskip}\makeatother
\IEEEpubid{\parbox{\columnwidth}{
		Network and Distributed Systems Security (NDSS) Symposium 2020\\
		23-26 February 2020, San Diego, CA, USA\\
		ISBN 1-891562-61-4\\
		https://dx.doi.org/10.14722/ndss.2020.24157\\
		www.ndss-symposium.org
	}
	\hspace{\columnsep}\makebox[\columnwidth]{}}

\maketitle

\begin{abstract}
	Local Differential Privacy (LDP) protects user privacy from the data collector. LDP protocols have been increasingly deployed in the industry. A basic building block is frequency oracle (\fo) protocols, which estimate frequencies of values. While several \fo protocols have been proposed, the design goal does not lead to optimal results for answering many queries. In this paper, we show that adding post-processing steps to \fo protocols by exploiting the knowledge that all individual frequencies should be non-negative and they sum up to one can lead to significantly better accuracy for a wide range of tasks, including frequencies of individual values, frequencies of the most frequent values, and frequencies of subsets of values. We consider 10 different methods that exploit this knowledge differently. We establish theoretical relationships between some of them and conducted extensive experimental evaluations to understand which methods should be used for different query tasks.
\end{abstract}

\pagestyle{plain}

\section{Introduction}

Differential privacy (DP)~\cite{tcc:DworkMNS06} has been accepted as the \textit{de facto} standard for data privacy.
Recently, techniques for satisfying DP in the local setting, which we call {LDP}, have been studied and deployed.
In this setting, there are many users and one
aggregator.  The aggregator does not see the actual private data of each individual.  Instead, each user sends randomized information to the aggregator, who attempts to infer the data distribution based on that.
LDP techniques have been deployed by companies like
Apple~\cite{url:apple}, Google~\cite{ccs:ErlingssonPK14}, Microsoft~\cite{nips:DingKY17}, and Alibaba~\cite{sigmod:wang2019answering}.
Examples of use cases include collecting users' default browser homepage and search engine, in order to understand the unwanted or malicious hijacking of user settings; or frequently typed emoji's and words, to help with keyboard typing recommendation.

The fundamental tools in LDP are mechanisms to estimate frequencies of values.  Existing research~\cite{ccs:ErlingssonPK14,stoc:BassilyS15,uss:WangBLJ17,aistats:AcharyaSZ18,tiot:YeB18} has developed frequency oracle (\fo) protocols, where the aggregator can estimate the frequency of any chosen value in the specified domain (fraction of users reporting that value).
While these protocols were designed to provide unbiased estimations of individual frequencies while minimizing the estimation variance~\cite{uss:WangBLJ17}, they can perform poorly for some tasks.  In~\cite{infocom:JiaG18}, it is shown that when one wants to query the frequency of all values in the domain, one can obtain significant accuracy improvement by exploiting the belief that the distribution likely follows power law.
Also, some applications naturally require querying the sums of frequencies for values in a subset.  For example, with the estimation of each emoji's frequency, one may be interested in understanding what categories of emoji's are more popular and need to issue subset frequency queries.
For another example, in~\cite{ccs:ZhangWLHC18}, multiple attributes are encoded together and reported using LDP, and recovering the distribution for each attribute separately requires computing the frequencies of sets of encoded values.
For frequencies of a subset of values, simply summing up the estimations of all values is far from optimal, especially when the input domain is large.

We note that the problem of answering queries using information obtained from the frequency oracle protocols is an estimation problem.  Existing methods such as those in~\cite{uss:WangBLJ17} do not utilize any prior knowledge of the distribution to be estimated.  Due to the significant amount of noise needed to satisfy LDP, the estimations for many values may be negative.  Also, some LDP protocols may result in the total sum of frequencies to be different from one.  In this paper, we show that one can develop better estimation methods by exploiting the universal fact that all frequencies are non-negative and they sum up to 1.

Interestingly, when taking advantage of such prior knowledge, one introduces biases in the estimations.   For example, when we impose the non-negativity constraint, we are introducing positive biases in the estimation as a side effect.  Essentially, when we exploit prior beliefs, the estimations will be biased towards the prior beliefs.  These biases can cause some queries to be much more inaccurate.  For example, changing all negative estimations to zero improves accuracy for frequency estimations of individual values.  However, the introduced positive biases accumulate for range queries.  Different methods to utilize the prior knowledge introduces different forms of biases, and thus have different impacts for different kinds of queries.

In this paper, we consider 10 different methods, which utilizes prior knowledge differently.  Some methods enforce only non-negativity; some other methods enforce only that all estimations sum to 1; and other methods enforce both.  These methods can also be combined with the ``Power'' method in~\cite{infocom:JiaG18} that exploits power law assumption.

We evaluate these methods on three tasks, frequencies of individual values, frequencies of the most frequent values, and frequencies of subsets of values.  We find that there is no single method that out-performs other methods for all tasks.  A method that exploits only non-negativity performs the best for individual values; a method that exploits only the summing-to-one constraint performs the best for frequent values; and a method that enforces both can be applied in conjunction with Power to perform the best for subsets of values.

To summarize, the main contributions of this paper are threefold:
\begin{itemize}
	\item We introduced the consistency properties as a way to improve accuracy for \fo protocols under LDP, and summarized 10 different post-processing methods that exploit the consistency properties differently.
	\item We established theoretical relationships between Constrained Least Squares and Maximum Likelihood Estimation, and analyze which (if any) estimation biases are introduced by these methods.
	\item We conducted extensive experiments on both synthetic and real-world datasets, the results improved the understanding on the strengths and weaknesses of different approaches.
\end{itemize}

\mypara{Roadmap.}
In Section~\ref{sec:problem}, we give the problem definition, followed by the background information on \fo in Section~\ref{sec:fo}.
We present the post-processing methods in Section~\ref{sec:approach}.
Experimental results are presented in~\ref{sec:exp}.
Finally we discuss related work in Section~\ref{sec:related} and
provide concluding remarks in Section~\ref{sec:conc}.
\section{Problem Setting}
\label{sec:problem}

We consider the setting where there are many \emph{users} and one \emph{aggregator}.  Each user possesses a value ${v}$ from a finite domain $\Domain$, and the aggregator wants to learn the distribution of values among all users, in a way that protects the privacy of individual users.  More specifically, the aggregator wants to estimate, for each value  $v\in \Domain$, the fraction of users having $v$ (the number of users having $v$ divided by the population size).    Such protocols are called \textit{frequency oracle ($\fo$)} protocols under Local Differential Privacy (LDP), and they are the key building blocks of other LDP tasks.

\mypara{Privacy Requirement.}
An \fo protocol is specified by a pair of algorithms: $\perturb$ is used by each user to perturb her input value, and $\aggregate$ is used by the aggregator.  Each user sends $\perturb(v)$ to the aggregator.  The formal {privacy} requirement is that the algorithm $\perturb(\cdot)$ satisfies the following property:

\begin{definition}[$\epsilon$-Local Differential Privacy] \label{def:dlp}
	An algorithm $\perturb(\cdot)$ satisfies $\epsilon$-local differential privacy ($\epsilon$-LDP), where $\epsilon \geq 0$,
	if and only if for any input $v, v' \in \Domain$, we have
	\begin{equation*}
		\forall{y\in\! \Psi(D)}:\; \Pr{\perturb(v) = y} \leq e^{\epsilon}\, \Pr{\perturb(v') = y},\end{equation*}
	where $\Psi(D)$ is discrete and denotes the set of all possible outputs of $\perturb$.
\end{definition}

Since a user never reveals $v$ to the aggregator and reports only $\perturb(v)$, the user's privacy is still protected even if the aggregator is malicious.

\mypara{Utility Goals.}
The aggregator uses $\aggregate$, which takes the vector of all reports from users as the input, and produces $\mathbf{\tilde{f}} = \tuple{\tilde{f}_v}_{v\in D}$, the estimated frequencies of the $v \in D$ (i.e., the fraction of users who have input value $v$).  As $\perturb$ is a randomized function, the resulting $\mathbf{\tilde{f}}$ becomes inaccurate.

In existing work, the design goal for $\perturb$ and $\aggregate$ is that the estimated frequency for each $v$ is unbiased, and the variance of the estimation is minimized.  As we will show in this paper, these may not result in the most accurate answers to different queries.

In this paper, we consider three different query scenarios 1) query the frequency of every value in the domain, 2) query the aggregate frequencies of subsets of values, and 3) query the frequencies of the most frequent values.  For each value or set of values, we compute its estimate and the ground truth, and calculate their difference, measured by {Mean of Squared Error} (MSE).

\mypara{Consistency.}
We will show that the utility of existing mechanisms can be improved by enforcing the following consistency requirement.

\begin{definition}[Consistency]
	\label{def:consistency}
	The estimated frequencies are consistent if and only if the following two conditions are satisfied:
	\begin{enumerate}
		\item The estimated frequency of each value is non-negative.
		\item The sum of the estimated frequencies is $1$.
	\end{enumerate}
\end{definition}

\section{Frequency Oracle Protocols}
\label{sec:fo}

We review the state-of-the-art frequency oracle protocols.  We utilize the generalized view from~\cite{uss:WangBLJ17} to present the protocols, so that our post-processing procedure can be applied to all of them.

\subsection{Generalized Random Response (\grr)}

This \fo protocol generalizes the \emph{randomized response} technique~\cite{jasa:Warner65}.  Here each user with private value $v\in D$ sends the true value $v$ with probability $p$, and with probability $1-p$ sends a randomly chosen $v'\in D\setminus\{v\}$.
Suppose the domain $\Domain$ contains $d=|\Domain|$ values, the perturbation function is formally defined as
\begin{align}
	\forall_{y \in D}\;\Pr{\perturb_{\grr(\epsilon, d)}(v) \!=\! y}  \!=\! \left\{
	\begin{array}{lr}
		\!p\!=\!\frac{e^\epsilon}{e^\epsilon + d - 1}, & \mbox{if} \; y = v    \\
		\!q\!=\! \frac{1}{e^\epsilon + d - 1},         & \mbox{if} \; y \neq v \\
	\end{array}\label{eq:grr}
	\right.
\end{align}
This satisfies $\epsilon$-LDP since $\frac{p}{q}=e^\epsilon$.

From a population of $n$ users, the aggregator receives a length-$n$ vector $\textbf{y}=\langle y_1,y_2,\cdots,y_n\rangle$, where $y_i \in \Domain$ is the reported value of the $i$-th user.
The aggregator counts the number of times each value $v$ appears in $\textbf{y}$ and produces a length-$d$ vector $\mathbf{c}$ of natural numbers.
Observe that the components of $\mathbf{c}$ sum up to $n$, i.e., $\sum_{v \in \Domain} c_v = n$.
The aggregator then obtains the estimated frequency vector $\mathbf{\tilde{f}}$ by scaling each component of $\mathbf{c}$ as follows: \begin{align*}
	\tilde{f}_v = & \frac{\frac{c_v}{n} - q}{p - q}
	= \frac{\frac{c_v}{n} - \frac{1}{e^\epsilon + d - 1}}{\frac{e^\epsilon - 1}{e^\epsilon + d - 1}}
\end{align*}

As shown in~\cite{uss:WangBLJ17}, the estimation variance of \grr grows linearly in $d$; hence the accuracy deteriorates fast when the domain size $d$ increases.
This motivated the development of other \fo protocols.

\subsection{Optimized Local Hashing (\olh)}
This \fo deals with a large domain size $d$ by first using a random hash function to map an input value into a smaller domain of size $g$, and then applying randomized response to the hash value in the smaller domain.
In \olh, the reporting protocol is
\[
	\perturb_{\olh(\epsilon)}(v)\coloneqq\tuple{H,\;\perturb_{\grr(\epsilon, g)}(H(v))},
\]
where $H$ is randomly chosen from a family of hash functions that hash each value in $\Domain$ to $\{1\ldots g\}$,
and $\perturb_{\mathsf{GRR}(\epsilon, g)}$ is given in~\eqref{eq:grr}, while operating on the domain $\{1\ldots g\}$.
The hash family should have the property that the distribution of each $v$'s hashed result is uniform over $\{1\ldots g\}$ and independent from the distributions of other input values in $\Domain$.  Since $H$ is chosen independently of the user's input $v$, $H$ by itself carries no meaningful information.  Such a report $\tuple{H,r}$ can be represented by the set $Y=\{ y \in \Domain \mid H(y)=r \}$.
The use of a hash function can be viewed as a compression technique, which results in constant size encoding of a set.  For a user with value $v$, the probability that $v$ is in the set $Y$ represented by the randomized report $\tuple{H,r}$ is $p=\frac{e^\epsilon - 1}{e^\epsilon + g - 1}$ and the probability that a user with value $\ne v$ is in $Y$ is $q=\frac{1}{g}$.

For each value $x\in \Domain$, the aggregator first computes the vector $\mathbf{c}$ of how many times each value is in the reported set.
More precisely, let $Y_i$ denote the set defined by the user $i$, then $c_v = |\{i\mid H(v) \in Y_i\}|$.  The aggregator then scales it:
\begin{align}
	\label{eq:olh_aggregate}
	\tilde{f}_v =
	\frac{\frac{c_v}{n} - 1/g}{p - 1/g}
\end{align}
In \olh, both the hashing step and the randomization step result in information loss. The choice of the parameter $g$ is a tradeoff between losing information during the hashing step and losing information during the randomization step.  It is found that the estimation variance when viewed as a continuous function of $g$ is minimized when $g = e^\epsilon + 1$ (or the closest integer to $e^\epsilon + 1$ in practice)~\cite{uss:WangBLJ17}.

\subsection{Other \fo Protocols}

Several other \fo protocols have been proposed.  While they take different forms when originally proposed, in essence, they all have the user report some encoding of a subset $Y \subseteq \Domain$, so that the user's true value has a probability $p$ to be included in $Y$ and any other value has a probability $q<p$ to be included in $Y$.  The estimation method used in \grr and \olh (namely, $\tilde{f}_v = \frac{c_v/n - q}{p - q}$) equally applies.

\mypara{Optimized Unary Encoding~\cite{uss:WangBLJ17}} encodes a value in a size-$d$ domain using a length-$d$ binary vector, and then perturbs each bit independently.  The resulting bit vector encodes a set of values.  It is found in~\cite{uss:WangBLJ17} that when $d$ is large, one should flip the $1$ bit with probability $1/2$, and flip a $0$ bit with probability $1/e^{\epsilon}$.  This results in the same values of $p,q$ as \olh, and has the same estimation variance, but has higher communication cost (linear in domain size $d$).

\mypara{Subset Selection~\cite{tiot:YeB18,corr:WangHWNXYLQ16}} method reports a randomly selected subset of a fixed size $k$.
The sensitive value $v$ is included in the set with probability $p=1/2$.  For any other value, it is included with probability $q = p \cdot \frac{k-1}{d-1} + (1 - p) \cdot \frac{k}{d-1}$.
To minimize estimation variance, $k$ should be an integer equal or close to $d /( e^\epsilon + 1)$.
Ignoring the integer constraint, we have $q = \frac{1}{2}\cdot \frac{2k - 1}{d - 1} = \frac{1}{2}\cdot \frac{2\frac{d}{e^\epsilon + 1} - 1}{d - 1} = \frac{1}{e^\epsilon + 1}\cdot \frac{d - (e^\epsilon + 1)/2}{d - 1} < \frac{1}{e^\epsilon + 1}$.
Its variance is smaller than that of \olh.  However, as $d$ increases, the term $\frac{d - (e^\epsilon + 1)/2}{d - 1}$ gets closer and closer to $1$.  For a larger domain, this offers essentially the same accuracy as \olh, with higher communication cost (linear in domain size $d$).

\mypara{Hadamard Response~\cite{nips:BassilyNST17, aistats:AcharyaSZ18}} is similar to Subset Selection with $k=d/2$, where the Hadamard transform is used to compress the subset.  The benefit of adopting this protocol is to reduce the communication bandwidth (each user's report is of constant size).  While it is similar to \olh with $g=2$, its aggregation part $\Phi$ faster, because evaluating a Hadamard entry is practically faster than evaluating hash functions.  However, this FO is sub-optimal when $g=2$ is sub-optimal.

\subsection{Accuracy of Frequency Oracles}
In~\cite{uss:WangBLJ17}, it is proved that $\tilde{f}_v = \frac{c_v/n - q}{p - q}$ produces unbiased estimates.  That is, $\forall v\in\Domain, \;\EV{\tilde{f}_v} = f_v$.  Moreover, $\tilde{f}_v$ has variance
\begin{align}
	\sigma^2_v= & \frac{q(1-q) + {f}_v(p-q)(1-p-q)}{n(p-q)^2}\label{eq:var_general}
\end{align}
As $c_v$ follows Binomial distribution, by the central limit theorem, the estimate $\tilde{f}_v$ can be viewed as the true value $f_v$ plus a Normally distributed noise:
\begin{align}
	\tilde{f}_v \approx f_v + \mathcal{N}(0, \sigma_v).\label{eq:normal_general}
\end{align}

When $d$ is large and $\epsilon$ is not too large, ${f}_v(p-q)(1-p-q)$ is dominated by $q(1-q)$.  Thus, one can approximate Equation~\eqref{eq:var_general} and~\eqref{eq:normal_general} by ignoring the $f_v$.  Specifically,
\begin{align}
	\sigma^2\approx     & \;\frac{q(1-q)}{n(p-q)^2},\label{eq:var_appox}           \\
	\tilde{f}_v \approx & \; f_v + \mathcal{N}(0, \sigma).\label{eq:normal_approx}
\end{align}

As the probability each user's report support each value is independent, we focus on post-processing $\mathbf{\tilde{f}}$ instead of $\mathbf{Y}$.

%

\section{Towards Consistent Frequency Oracles}
\label{sec:approach}

While existing state-of-the-art frequency oracles are designed to provide unbiased estimations while minimizing the variance, it is possible to further reduce the variance by performing post-processing steps that use prior knowledge to adjust the estimations.
For example, exploiting the property that all frequency counts are non-negative can reduce the variance; however, simply turning all negative estimations to 0 introduces a systematic positive bias in all estimations.  By also ensuring the property that the sum of all estimations must add up to 1, one ensures that the sum of the biases for all estimations is 0.  However, even though the biases cancel out when summing over the whole domain, they still exist.
There are different post-processing methods that were explicitly proposed or implicitly used.  They will result in different combinations of variance reduction and bias distribution.  Selecting a post-processing method is similar to considering the bias-variance tradeoff in selecting a machine learning algorithm.

We study the property of several post-processing methods, aiming to understand how they compare under different settings, and how they relate to each other.
Our goal is to identify efficient post-processing methods that can give accurate estimations for a wide variety of queries.
We first present the baseline method that does not do any post-processing.

\begin{itemize}[leftmargin=*]
	\item
	      \underline{\textbf{Base}}:
	      \textit{We use the standard \fo as presented in Section~\ref{sec:fo} to obtain estimations of each value. }
\end{itemize}

Base has no bias, and its variance can be analytically computed (e.g., using~\cite{uss:WangBLJ17}).

\subsection{Baseline Methods}

When the domain is large, there will be many values in the domain that have a zero or very low true frequency; the estimation of them may be negative.  To overcome negativity, we describe three methods: Base-Pos, Post-Pos, and Base-Cut.

\begin{itemize}[leftmargin=*]
	\item
	      \underline{\textbf{Base-Pos}}:
	      \textit{After applying the standard \fo, we convert all negative estimations to $0$.}
\end{itemize}
This satisfies non-negativity, but the sum of all estimations is likely to be above 1.
This reduces variance, as it turns erroneous negative estimations to 0, closer to the true value.  As a result, for each individual value, Base-Pos results in an estimation that is at least as accurate as the Base method.  However, this introduces systematic positive bias, because some negative noise are removed or reduced by the process, but the positive noise are never removed.  This positive bias will be reflected when answering subset queries, for which Base-Pos results in biased estimations.  For larger-range queries, the bias can be significant.
\begin{lemma}
	\label{lem:base-pos-bias}
	Base-Pos will introduce positive bias to all values.
\end{lemma}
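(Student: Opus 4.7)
The plan is to show that clipping an unbiased estimator at zero strictly raises its expectation whenever the unclipped estimator can take negative values with positive probability. Let $X = \tilde{f}_v$ denote the Base estimate, so that $\mathbb{E}[X] = f_v$ by the unbiasedness recalled in Section~\ref{sec:fo}, and let $X^+ = \max(X, 0)$ denote the Base-Pos estimate.

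First I would split the expectation according to the sign of $X$:
\begin{equation*}
\mathbb{E}[X^+] \;=\; \mathbb{E}[X \cdot \mathbf{1}_{X \geq 0}] \;=\; \mathbb{E}[X] \;-\; \mathbb{E}[X \cdot \mathbf{1}_{X < 0}] \;=\; f_v + \mathbb{E}\bigl[|X| \cdot \mathbf{1}_{X < 0}\bigr].
\end{equation*}
The correction term on the right is nonnegative, and is strictly positive as soon as $\Pr{X < 0} > 0$, which gives the direction claimed by the lemma.

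The remaining step is to verify that $\Pr{\tilde{f}_v < 0} > 0$ for every $v \in \Domain$, so the bias is strictly positive rather than merely nonnegative. This follows from the discrete structure of the count $c_v$: it is a sum of independent Bernoulli indicators each with parameter strictly less than $1$, so the event $c_v = 0$ has positive probability. On that event $\tilde{f}_v = -q/(p-q) < 0$, since $q > 0$ and $p > q$ in every \fo protocol considered in Section~\ref{sec:fo}. Alternatively, one could invoke the Gaussian approximation of Equation~\eqref{eq:normal_general}, whose full support on $\mathbb{R}$ assigns positive mass to the negative half-line.

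The main obstacle is strictness rather than direction. The direction is immediate from the one-sided clipping: negative noise is truncated while positive noise is preserved, so the mean can only shift upward. The subtle part is ruling out degenerate cases at the extremes, such as a value $v$ whose true frequency is so close to $1$ that one might worry $\tilde{f}_v \geq 0$ almost surely. The binomial argument above handles this uniformly in $f_v$: for any finite $n$ and any $p < 1$, the event $c_v = 0$ has positive probability and forces a negative unprocessed estimate, ensuring a strictly positive bias after clipping.
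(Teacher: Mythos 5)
Your proof is correct and follows essentially the same route as the paper's: decompose the unbiased estimate $\EV{\tilde{f}_v}$ by the sign of $\tilde{f}_v$ and observe that clipping discards exactly the negative-part term. You additionally justify the strictness of the inequality via $\Pr{c_v = 0} > 0$, a point the paper's proof leaves implicit.
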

\begin{proof}
	The outputs of standard \FO are unbiased estimation, which means for any $v$,
	\begin{align*}
		f_v = \EV{\tilde{f}_v} = \EV{\tilde{f}_v \cdot \mathbf{1}[\tilde{f}_v \ge 0]} + \EV{\tilde{f}_v \cdot \mathbf{1}[\tilde{f}_v < 0]}
	\end{align*}
	As Base-Pos changes all negative estimated frequencies to 0, we have
	\begin{align*}
		\EV{f'_v} = \EV{\tilde{f}_v \cdot \mathbf{1}[\tilde{f}_v \ge 0]}
	\end{align*}
	After enforcing non-negativity constraints, the bias will be $\EV{f'_v} - f_v > 0$.
\end{proof}

\begin{itemize}[leftmargin=*]
	\item
	      \underline{\textbf{Post-Pos}}:
	      \textit{For each query result, if it is negative, we convert it to $0$.}
\end{itemize}
This method does not post-process the estimated distribution.  Rather, it post-processes each query result individually.
For subset queries, as the results are typically positive, Post-Pos is similar to Base.  On the other hand, when the query is on a single item, Post-Pos is equivalent to Base-Pos.

Post-Pos still introduces a positive bias, but the bias would be smaller for subset queries.  However, Post-Pos may give inconsistent answers in the sense that the query result on $A\cup B$, where $A$ and $B$ are disjoint, may not equal the addition of the query results for $A$ and $B$ separately.

\begin{itemize}[leftmargin=*]
	\item
	      \underline{\textbf{Base-Cut}}:
	      \textit{After standard \fo, convert everything below some sensitivity threshold to 0.}
\end{itemize}
The original design goal for frequency oracles is to recover frequencies for \emph{frequent} values, and oftentimes there is a sensitivity threshold so that only estimations above the threshold are considered.
Specifically, for each value, we compare its estimation with a threshold
\begin{align}
	T = F^{-1}\left(1-\frac{\alpha}{d}\right) \sigma,\label{eq:sig_threshold}
\end{align}
where $d$ is the domain size, $F^{-1}$ is the inverse of cummulative distribution function of the standard normal distribution, and $\sigma$ is the standard deviation of the LDP mechanism (i.e., as in Equation~\eqref{eq:var_appox}).
By Base-Cut, estimations below the threshold are considered to be noise.
When using such a threshold, for any value $v\in \Domain$ whose original count is $0$, the probability that it will have an estimated frequency  above $T$ (or the probability a zero-mean Gaussian variable with standard deviation $\delta$ is above $T$) is at most $\frac{\alpha}{d}$.
Thus when we observe an estimated frequency above $T$, the probability that the true frequency of the value is $0$ is (by union bound) at most $d\times \frac{\alpha}{d}=\alpha$.   In~\cite{ccs:ErlingssonPK14}, it is recommended to set $\alpha=5\%$, following conventions in the statistical community.

Empirically we observe that $\alpha=5\%$ performs poorly, because such a threshold can be too high when the population size is not very large and/or the $\epsilon$ is not large.  A large threshold results in all except for a few estimations to be below the threshold and set to 0.
We note that the choice of $\alpha$ is trading off false positives with false negatives.  Given a large domain, there are likely between several and a few dozen values that have quite high frequencies, with most of the remaining values having low true counts.  We want to keep an estimation if it is a lot more likely to be from a frequent value than from a very low frequency one.  In this paper, we choose to set $\alpha=2$, which ensures that the expected number of \emph{false positives}, i.e., values with very low true frequencies but estimated frequencies above $T$, to be around $2$.  If there are around 20 values that are truly frequent and have estimated frequencies above $T$, then ratio of true positives to false positives when using this threshold is 10:1.

This method ensures that all estimations are non-negative.  It does not ensure that the sum of estimations is 1.
The resulting estimations are either high (above the chosen threshold) or zero.  The estimation for each item with non-zero frequency is subject to two bias effects.  The negative bias effect is caused by the situation when the estimations are cut to zero.  The positive effect is when large positive noise causes the estimation to be above the threshold, the resulting estimation is higher than true frequency.

\subsection{Normalization Method}

We now explore several methods that normalize the estimated frequencies of the whole domain to ensure that the sum of the estimates equals $1$.  When the estimations are normalized to sum to 1, the sum of the biases over the whole domain has to be 0.

\begin{lemma}
	\label{lem:general-norm-unbias}
	If a normalization method adjusts the unbiased estimates so that they add up to $1$, the sum of biases it introduces over the whole domain is $0$.
\end{lemma}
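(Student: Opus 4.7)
The plan is to unpack the definition of bias, apply linearity of expectation, and then exploit the two identities that underlie the hypothesis: the original estimates are unbiased, and after normalization the estimates sum to $1$. Since the true frequencies $\{f_v\}_{v\in \Domain}$ also sum to $1$ (they form a distribution over $\Domain$), the two summations cancel exactly.

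More concretely, let $\tilde{f}_v$ denote the unbiased estimates produced by the underlying \fo protocol (so $\EV{\tilde{f}_v}=f_v$ for all $v$), and let $f'_v$ denote the post-processed estimates obtained by the normalization method. By hypothesis, $\sum_{v\in \Domain} f'_v = 1$ deterministically (the normalization forces this for every realization). The bias introduced for value $v$ is $\EV{f'_v} - f_v$. First I would sum these biases over $\Domain$ and pull the expectation outside using linearity. Then I would substitute $\sum_v f'_v = 1$ inside the expectation and $\sum_v f_v = 1$ for the true distribution, yielding $1 - 1 = 0$.

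The argument is essentially a one-line calculation:
\begin{align*}
\sum_{v\in \Domain}\left(\EV{f'_v} - f_v\right)
= \EV{\sum_{v\in \Domain} f'_v} - \sum_{v\in \Domain} f_v
= 1 - 1 = 0.
\end{align*}
There is no real obstacle here; the statement is a direct consequence of linearity of expectation together with the two sum-to-one constraints. The only subtlety worth flagging in the write-up is that the hypothesis ``adjusts the estimates so that they add up to $1$'' must be read as a pointwise (almost sure) constraint on the output of the normalization, not merely a constraint in expectation; otherwise the step that moves $\sum_v f'_v$ inside the expectation still goes through but the phrasing of the assumption matters. I would also briefly remark that this lemma is a statement about the aggregate bias only: individual values can still have nonzero bias, and indeed the purpose of the subsequent discussion is to analyze how this zero-sum bias is distributed across $\Domain$.
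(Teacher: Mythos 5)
Your proof is correct and is essentially identical to the paper's: both apply linearity of expectation to $\sum_v\bigl(\EV{f'_v}-f_v\bigr)$ and cancel the two sum-to-one identities to get $1-1=0$. The extra remark about reading the normalization constraint pointwise rather than in expectation is a reasonable clarification but does not change the argument.
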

\begin{proof}
	Denote $f'_v$ as the estimated frequency of value $v$ after post-processing.  By linearity of expectations, we have
	\begin{align*}
		\sum_{v\in\Domain}\left(\EV{f'_v} - f_v\right)
		= \EV{\sum_{v\in\Domain} f'_v} - \sum_{v\in\Domain}f_v
		= \EV{1} - 1 = 0
	\end{align*}
\end{proof}

One standard way to do such normalization is through additive normalization:

\begin{itemize}[leftmargin=*]
	\item
	      \underline{\textbf{Norm}}:
	      \textit{After standard \fo, add $\delta$ to each estimation so that the overall sum is 1.}
\end{itemize}
The method is formally proposed for the centralized setting~\cite{pvldb:HayRMS10} of DP and is used in the local setting, e.g.,~\cite{icde:WangXYHSSY18,vldb:KulkarniCD18}.  Note the method does not enforce non-negativity.  For \grr, Hadamard Response, and Subset Selection, this method actually does nothing, since each user reports a single value, and the estimations already sum to 1.  For \olh, however, each user reports a randomly selected subset whose size is a random variable, and Norm would change the estimations.
It can be proved that  Norm is unbiased:
\begin{lemma}
	\label{lem:norm-unbias}
	Norm provides unbiased estimation for each value.
\end{lemma}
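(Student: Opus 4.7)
The plan is to reduce the claim to the unbiasedness of the underlying \fo estimator (established in Section~\ref{sec:fo}) together with linearity of expectation. First I would write the Norm post-processing explicitly: for each $v \in \Domain$, set
\[
    f'_v \;=\; \tilde{f}_v + \delta, \qquad \delta \;=\; \frac{1}{d}\Bigl(1 - \sum_{u\in\Domain}\tilde{f}_u\Bigr),
\]
which is the unique additive constant that forces $\sum_v f'_v = 1$.

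Next I would compute $\EV{\delta}$. By linearity of expectation and the fact (from Section~\ref{sec:fo}) that each $\tilde{f}_u$ is unbiased, $\EV{\sum_u \tilde{f}_u} = \sum_u f_u = 1$, so $\EV{\delta} = 0$. Applying linearity once more yields $\EV{f'_v} = \EV{\tilde{f}_v} + \EV{\delta} = f_v$, which is exactly the claim.

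There isn't really a hard step here: the argument is essentially a one-line expectation calculation. The only subtlety worth flagging is that the proof requires both (i) unbiasedness of the \fo output, and (ii) the true frequencies summing to one; both are baseline assumptions of the problem setting. A remark could be added noting that this is why Norm is a no-op for protocols like \grr, Subset Selection, and Hadamard Response, for which $\sum_v \tilde{f}_v = 1$ deterministically (hence $\delta = 0$ almost surely), and genuinely modifies the estimator only for \olh and \oue, where $\sum_v \tilde{f}_v$ is random but with mean one.
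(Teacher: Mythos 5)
Your proof is correct and follows essentially the same route as the paper's: both deduce $\EV{\delta}=0$ from the unbiasedness of the \fo estimates together with $\sum_v f_v = 1$, and then conclude by linearity of expectation. The only cosmetic difference is that you write $\delta$ in closed form while the paper infers $\EV{\delta}=0$ implicitly from the sum-to-one constraint; the substance is identical.
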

\begin{proof}
	By the definition of Norm, we have $\sum_{v \in D}f'_v = \sum_{v \in D}(\tilde{f}_v + \delta) = 1$.
	As the frequency oracle outputs unbiased estimation, i.e., $\EV{\tilde{f}_v} = f_v$, we have
	\begin{align*}
		         & \EV{\sum_{v\in D}f'_v} = 1 = \EV{\sum_{v\in D}(\tilde{f}_v + \delta)}       \\
		=        & \sum_{v\in D}\EV{\tilde{f}_v} + d\cdot \EV{\delta} = 1 + d\cdot \EV{\delta} \\
		\implies & \EV{\delta} = 0
	\end{align*}
	Thus $\EV{f'_v} = \EV{\tilde{f}_v + \delta} = \EV{\tilde{f}_v} + 0 = f_v.$
\end{proof}

Besides sum-to-one, if a method also ensures non-negativity, we first state that it introduces positive bias to values whose frequencies are close to 0.
\begin{lemma}
	\label{lem:norm-nonneg-pos-bias}
	If a normalization method adjusts the unbiased estimates so that they add up to $1$ and are non-negative, then it introduces positive biases to values that are sufficiently close to $0$.
\end{lemma}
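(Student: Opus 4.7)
The plan is to first handle the boundary case $f_v = 0$ and then propagate the conclusion to a neighborhood by continuity.

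First, fix $v$ with $f_v = 0$. The non-negativity constraint forces $f'_v \ge 0$ almost surely, so immediately $\EV{f'_v} \ge 0 = f_v$, i.e. the bias is already non-negative. To upgrade this to strict positivity, I would invoke the non-degeneracy of the raw estimator: by~\eqref{eq:normal_general}, $\tilde{f}_v$ is well-approximated by $f_v + \mathcal{N}(0,\sigma)$ with $\sigma > 0$, and in particular $\Pr{\tilde{f}_v > 0} > 0$ when $f_v = 0$. Any normalization that genuinely \emph{adjusts} the estimates (rather than, say, identically collapsing $v$'s mass to zero regardless of input) will, on the event where $\tilde{f}_v$ is large and positive while other $\tilde{f}_u$'s are moderate, output $f'_v > 0$ with positive probability. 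Combined with $f'_v \ge 0$ everywhere else, this gives $\EV{f'_v} > 0$ and hence strict positive bias at $f_v = 0$.

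Second, for $f_v$ close to but not equal to $0$, I would argue by continuity. The joint distribution of the raw estimate vector $\mathbf{\tilde f}$ varies continuously in the underlying frequencies, and the post-processing map $\mathbf{\tilde f}\mapsto \mathbf{f'}$ is typically piecewise continuous (e.g., projection onto the probability simplex, shifting by a constant followed by thresholding), so $\EV{f'_v}$ is continuous in $f_v$. Since the bias $\EV{f'_v} - f_v$ is strictly positive at $f_v = 0$, it remains strictly positive in some neighborhood of $0$.

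The main obstacle I expect is formalizing the "positive probability that $f'_v>0$" step in a way that covers \emph{every} method satisfying the two constraints, rather than just the natural ones. A fully adversarial normalization rule could, in principle, deterministically dump all mass onto one specific value and set the others to $0$, satisfying both constraints while violating the conclusion. The intended resolution is to either (i) restrict implicitly to coordinate-symmetric or monotone methods (which is the regime all concrete methods in Section~\ref{sec:approach} live in), or (ii) invoke Lemma~\ref{lem:general-norm-unbias}: the total bias is zero, so if some values have strictly negative bias, others must compensate with strictly positive bias, and non-negativity forces the values with $f_v$ near $0$ to be the ones absorbing the positive part.
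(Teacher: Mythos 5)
Your proposal follows essentially the same route as the paper's (three-sentence) proof: non-negativity forces $\EV{f'_v}\ge 0$, the post-processed estimate is strictly positive with positive probability, hence $\EV{f'_v}>0$ and the bias $\EV{f'_v}-f_v$ is positive for $f_v$ sufficiently close to $0$. The obstacle you flag---that a degenerate normalization could deterministically zero out coordinate $v$ while still satisfying both constraints---is real but is equally unaddressed by the paper, which simply asserts ``there exists some possibility that its estimation is positive''; your split into the $f_v=0$ case plus a continuity argument is, if anything, more careful than the original.
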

\begin{proof}
	As the estimates are non-negative and sum up to $1$, some of the estimates must be positive.  For a value close to $0$, there exists some possibility that its estimation is positive; but the possibility its estimation is negative is $0$.  Thus the expectation of its estimation is positive, leading to a positive bias.
\end{proof}
Lemma~\ref{lem:norm-nonneg-pos-bias} shows the biases for any method that ensures both constraints cannot be all zeros.  Thus different methods are essentially different ways of distributing the biases.  Next we present three such normalization methods.

\begin{itemize}[leftmargin=*]
	\item
	      \underline{\textbf{Norm-Mul}}:
	      \textit{After standard \fo, convert negative value to 0.  Then multiply each value by a multiplicative factor so that the sum is 1.}
\end{itemize}
More precisely, given estimation vector $\mathbf{\tilde{f}}$, we find $\gamma$ such that
\begin{align*}
	\sum_{v\in\Domain} \max(\gamma \times \tilde{f}_v,0) = 1,
\end{align*}
and assign $f'_v = \max(\gamma \times \tilde{f}_v,0)$ as the estimations.  This results in a consistent \fo.  Kairouz et al.~\cite{icml:KairouzBR16} evaluated this method and it performs well when the underlying dataset distribution is smooth.
This method results in positive biases for low-frequency items, but negative biases for high-frequency items.  Moreover, the higher an item's true frequency, the larger the magnitude of the negative bias.  The intuition is that here $\gamma$ is typically in the range of $[0, 1]$; and multiplying by a factor may result in the estimation of high frequency values to be significantly lower than their true values.
When the distribution is skewed, which is more interesting in the LDP case, the method performs poorly.

\begin{itemize}[leftmargin=*]
	\item
	      \underline{\textbf{Norm-Sub}}:
	      \textit{After standard \fo, convert negative values to 0, while maintaining overall sum of 1 by adding $\delta$ to each remaining value.}
\end{itemize}
More precisely, given estimation vector $\mathbf{\tilde{f}}$, we want to find $\delta$ such that
$$ \sum_{v\in\Domain} \max(\tilde{f}_v+\delta,0) = 1$$
Then the estimation for each value $v$ is $f'_v = \max(\tilde{f}_v+\delta,0)$. This extends the method Norm and results in consistency.  Norm-Sub was used by Kairouz et al.~\cite{icml:KairouzBR16} and Bassily~\cite{aistats:Bassily19} to process results for some \fo's.
Under Norm-Sub, low-frequency values have positive biases, and high-frequency items have negative biases.  The distribution of biases, however, is more even when compared to Norm-Mul.

\begin{itemize}[leftmargin=*]
	\item
	      \underline{\textbf{Norm-Cut}}:
	      \textit{After standard \fo, convert negative and small positive values to 0 so that the total sums up to 1.}
\end{itemize}
We note that under Norm-Sub, higher frequency items have higher negative biases.
One natural idea to address this is to turn the low estimations to $0$ to ensure consistency, without changing the estimations of high-frequency values.  This is the idea of Norm-Cut.  More precisely, given the estimation vector $\mathbf{\tilde{f}}$, there are two cases.  When $\sum_{v\in\Domain} \max(\tilde{f}_v,0) \le 1$, we simply change each negative estimations to 0.
When $\sum_{v\in\Domain} \max(\tilde{f}_v,0) > 1$, we want to find the smallest $\theta$ such that
$$ \sum_{v\in\Domain | \tilde{f}_v \ge \theta} \tilde{f}_v \le 1$$
Then the estimation for each value $v$ is $0$ if $\tilde{f}_v<\theta$ and $\tilde{f}_v$ if $\tilde{f}_v \ge \theta$.
This is similar to Base-cut in that both methods change all estimated values below some thresholds to 0.  The differences lie in how the threshold is chosen.  This results in non-negative estimations, and typically results in estimations that sum up to 1, but might result in a sum $<1$.

\subsection{Constrained Least Squares}

From a more principled point of view, we note that what we are doing here is essentially solving a Constraint Inference (CI) problem, for which CLS (Constrained Least Squares) is a natural solution.  This approach was proposed in~\cite{pvldb:HayRMS10} but without the constraint that the estimates are non-negative (and it leads to Norm).  Here we revisit this approach with the consistency constraint (i.e., both requirements in Definition~\ref{def:consistency}).

\begin{itemize}[leftmargin=*]
	\item
	      \underline{\textbf{CLS}}:
	      \textit{After standard \fo, use least squares with constraints (summing-to-one and non-negativity) to recover the values.}
\end{itemize}
Specifically, given the estimates $\mathbf{\tilde{f}}$ by \fo, the method outputs $\mathbf{f}'$ that is a solution of the following problem:
\begin{align*}
	\mbox{minimize: } & ||\mathbf{f}' - \mathbf{\tilde{f}}||_2 \\
	\mbox{subject to: }
	                  & \forall_v f'_v \geq 0                  \\
	                  & \sum_v f'_v = 1
\end{align*}

We can use the KKT condition~\cite{book:kuhn2014nonlinear,disertation:karush1939minima} to solve the problem.  The process is presented in Appendix~\ref{app:kkt_cls}.  In the solution, we partition the domain $D$ into $D_0$ and $D_1$, where $D_0\cap D_1 = \emptyset$ and $D_0\cup D_1 = D$.  For $v\in D_0$, assign $f'_v = 0$.  For $v \in D_1$,
\begin{align*}
	f'_v = & \tilde{f}_v - \frac{1}{|D_1|}\left(\sum_{v\in D_1}\tilde{f}_v - 1\right)
\end{align*}

Norm-Sub is the solution to the Constraint Least Square (CLS) formulation to the problem, and $\delta = - \frac{1}{|D_1|}\left(\sum_{v\in D_1}\tilde{f}_v - 1\right)$ is the $\delta$ we want to find in Norm-Sub.

\begin{table*}[!ht]
	\centering
	\begin{tabular}{|c|l|c|c|c|}\hline
		\textbf{Method} & \textbf{Description}                                                & \textbf{Non-neg} & \textbf{Sum to 1} & \textbf{Complexity}  \\\hline
		Base-Pos        & Convert negative est. to 0                                          & Yes              & No                & $O(d)$               \\ \hline
		Post-Pos        & Convert negative query result to 0                                  & Yes              & No                & N/A                  \\ \hline
		Base-Cut        & Convert est. below threshold $T$ to 0                               & Yes              & No                & $O(d)$               \\ \hline
		Norm            & Add $\delta$ to est.                                                & No               & Yes               & $O(d)$               \\ \hline
		Norm-Mul        & Convert negative est. to 0, then multiply $\gamma$ to positive est. & Yes              & Yes               & $O(d)$               \\ \hline
		Norm-Cut        & Convert negative and small positive est. below $\theta$ to 0.       & Yes              & Almost            & $O(d)$               \\ \hline
		Norm-Sub        & Convert negative est. to 0 while adding $\delta$ to positive est.   & Yes              & Yes               & $O(d)$               \\ \hline
		MLE-Apx         & Convert negative est. to 0, then add $\delta$ to positive est.      & Yes              & Yes               & $O(d)$               \\ \hline
		Power           & Fit Power-Law dist., then minimize expected squared error           & Yes              & No                & $O(\sqrt{n}\cdot d)$ \\ \hline
		PowerNS         & Apply Norm-Sub after Power                                          & Yes              & Yes               & $O(\sqrt{n}\cdot d)$ \\ \hline
	\end{tabular}
	\caption{Summary of Methods.}
	\label{tab:method_summary}
	\vspace{-0.8cm}
\end{table*}

\subsection{Maximum Likelihood Estimation}

Another more principled way of looking into this problem is to view it as recovering distributions given some LDP reports.  For this problem, one standard solution is Bayesian inference.
In particular,
we want to find the $\mathbf{{f'}}$ such that
\begin{align}
	\Pr{\mathbf{{f'}}|\mathbf{\tilde{f}}}=
	\frac{\Pr{\mathbf{\tilde{f}}|\mathbf{{f'}}}\cdot\Pr{\mathbf{{f'}}}}{\Pr{\mathbf{\tilde{f}}}}\label{eq:bayesian_inference}
\end{align}
is maximized.
Note that we require $\mathbf{f'}$ satisfies $\forall_v f'_v \geq 0$ and $\sum_v f'_v = 1$.
In~\eqref{eq:bayesian_inference}, $\Pr{\mathbf{{f'}}}$ is the prior, and the prior distribution influence the result.  In our setting, as we assume there is no such prior, $\Pr{\mathbf{{f'}}}$ is uniform.  That is, $\Pr{\mathbf{{f'}}}$ is a constant.  The denominator $\Pr{\mathbf{\tilde{f}}}$ is also a constant that does not influence the result.  As a result, we are seeking for $\mathbf{{f'}}$
which is the maximal likelihood estimator (MLE), i.e., $\Pr{\mathbf{\tilde{f}}|\mathbf{{f'}}}$ is maximized.

For this method, Peter et al.~\cite{icml:KairouzBR16} derived the exact MLE solution for \grr and RAPPOR~\cite{ccs:ErlingssonPK14}.  We compute $\Pr{\mathbf{\tilde{f}}|\mathbf{{f'}}}$ using the general form of Equation~\eqref{eq:normal_general}, which states that, given the original distribution $\mathbf{{f'}}$, the vector $\tilde{\mathbf{f}}$ is a set of independent random variables, where each component $\tilde{f}_v$ follows Gaussian distribution with mean ${f'}_v$ and variance $\sigma_v^{\prime 2} $.
The likelihood of $\tilde{\mathbf{f}}$ given $\mathbf{f'}$ is thus
\begin{align}
	        & \Pr{\mathbf{\tilde{f}}|\mathbf{{f'}}} = \prod_v\Pr{\tilde{f}_v|f'_v}\nonumber                               \\
	\approx & \prod_v\frac{1}{\sqrt{2\pi\sigma_v^{\prime 2}}}\cdot e^{-\frac{(f'_v-\tilde{f}_v)^2}{2\sigma_v^{\prime 2}}}
	=  \frac{1}{\sqrt{2\pi \prod_v \sigma_v^{\prime 2}}}\cdot e^{-\sum_v\frac{(f'_v-\tilde{f}_v)^2}{2\sigma_v^{\prime 2}}}.\label{eq:mle_apx}
\end{align}
To differentiate from~\cite{icml:KairouzBR16}, we call it MLE-Apx.

\begin{itemize}[leftmargin=*]
	\item
	      \underline{\textbf{MLE-Apx}}:
	      \textit{First use standard \fo, then compute the MLE with constraints (summing-to-one and non-negativity) to recover the values.}
\end{itemize}
In Appendix~\ref{app:kkt_mle}, we use the KKT condition~\cite{book:kuhn2014nonlinear,disertation:karush1939minima} to obtain an efficient solution.
In particular, we partition the domain $D$ into $D_0$ and $D_1$, where $D_0\cap D_1 = \emptyset$ and $D_0\cup D_1 = D$.  For $v\in D_0$, $f'_v = 0$; for $v \in D_1$,
\begin{align}
	f'_v = & \frac{q(1 - q)x_v + \tilde{f}_v(p - q)}{p - q - (p-q)(1-p-q)x_v}\label{eq:mle_sol}
\end{align}
where
\begin{align*}
	x_v = & \frac{\sum_{x\in D_1} \tilde{f}_v(p - q)  - (p - q)}{ (p-q)(1-p-q) - |D_1| q(1-q)}
\end{align*}
We can rewrite Equation~\eqref{eq:mle_sol} as
\begin{align*}
	{f'}_v = & \tilde{f}_v  \cdot \gamma + \delta,
\end{align*}
where
\begin{align*}
	\gamma & = \frac{p - q}{p - q + (p-q)(1-p-q)x_v}       \\
	\delta & = \frac{q(1 - q)x_v}{p - q + (p-q)(1-p-q)x_v}
\end{align*}
Hence MLE-Apx appears to represent some hybrid of Norm-Sub and Norm-Mul.
In evaluation, we observe that Norm-Sub and MLE-Apx give very close results, as $\gamma \sim 1$.
Furthermore, when the $f_v$ component in variance is dominated by the other component (as in Equation~\eqref{eq:var_appox}), the CLS formulation is equivalent to our MLE formulation.

\begin{figure*}[ht]
	\centering

	\subfigure[Base (Post-Pos)]{
		\includegraphics[width=0.31\textwidth]{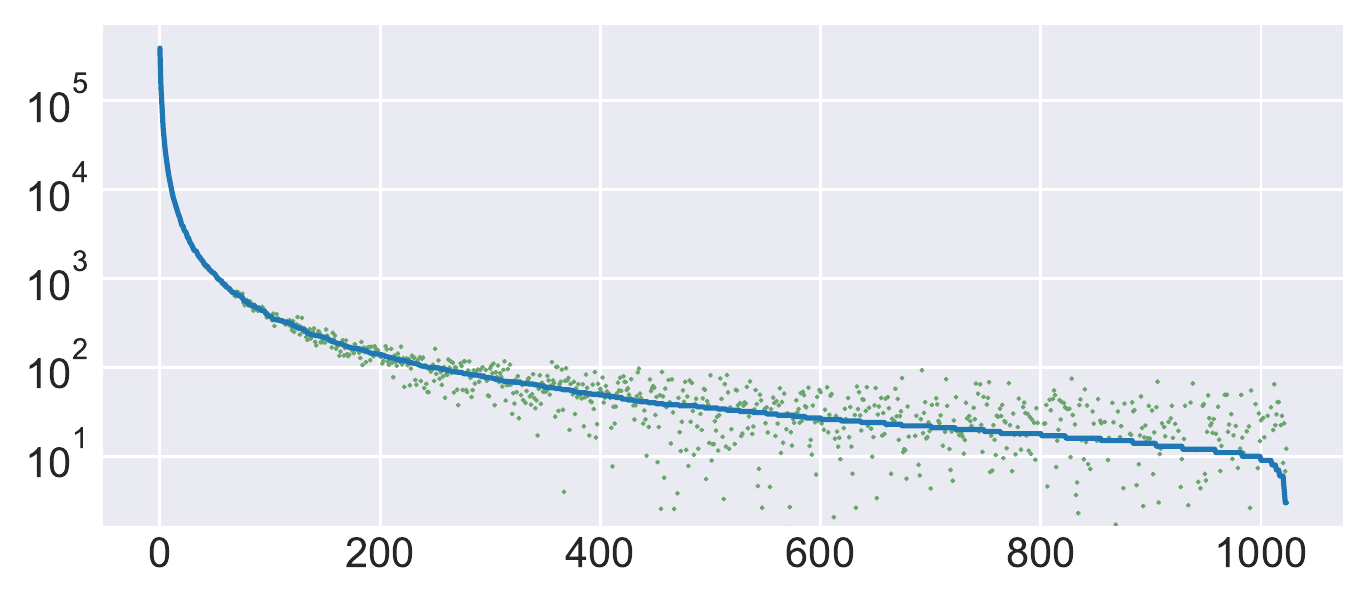}
	}
	\subfigure[Base-Pos]{
		\includegraphics[width=0.31\textwidth]{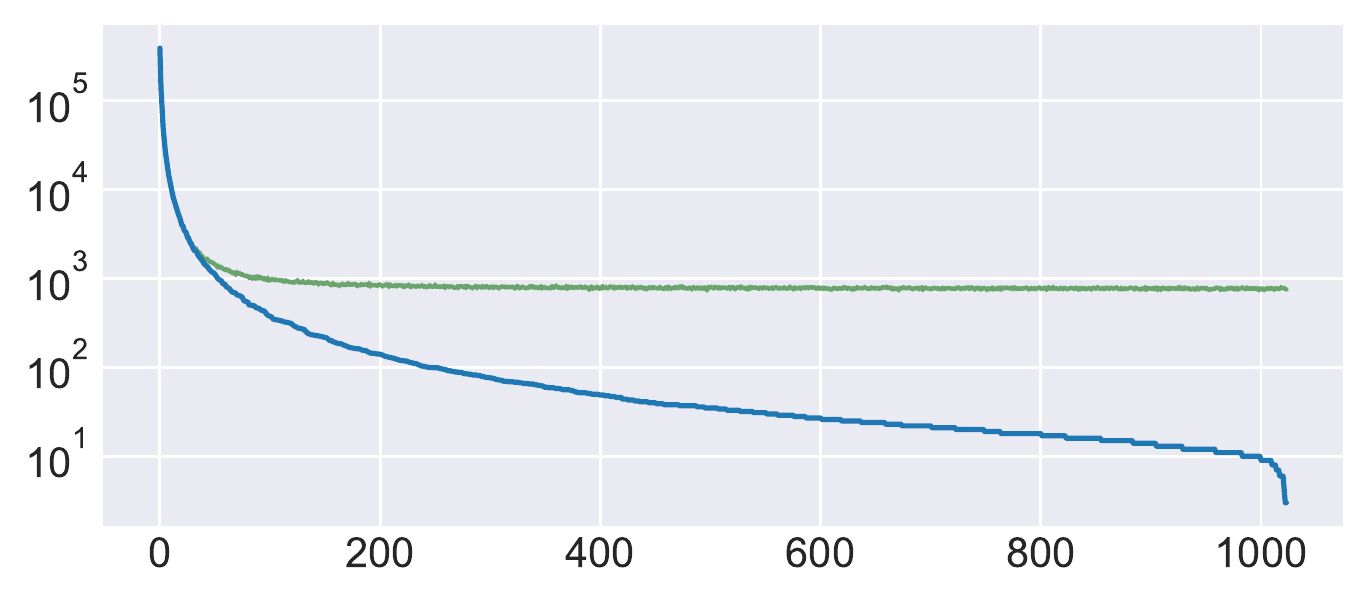}
	}
	\subfigure[Base-Cut]{
		\includegraphics[width=0.31\textwidth]{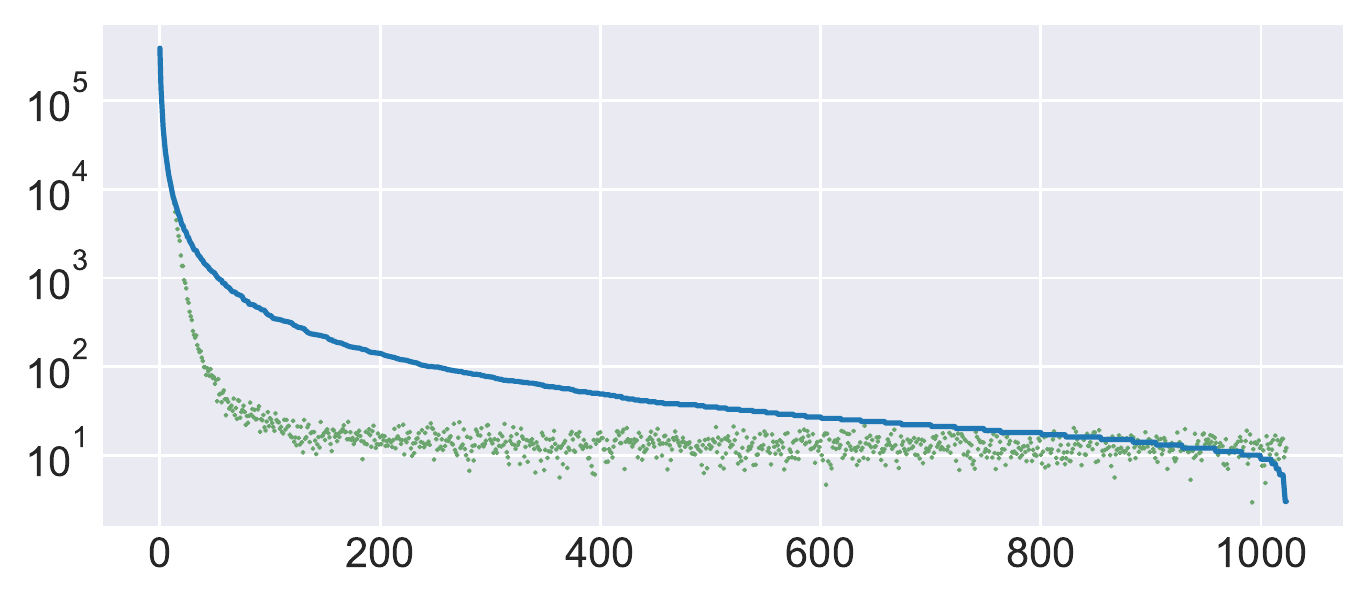}
	}
	\subfigure[Norm]{
		\includegraphics[width=0.31\textwidth]{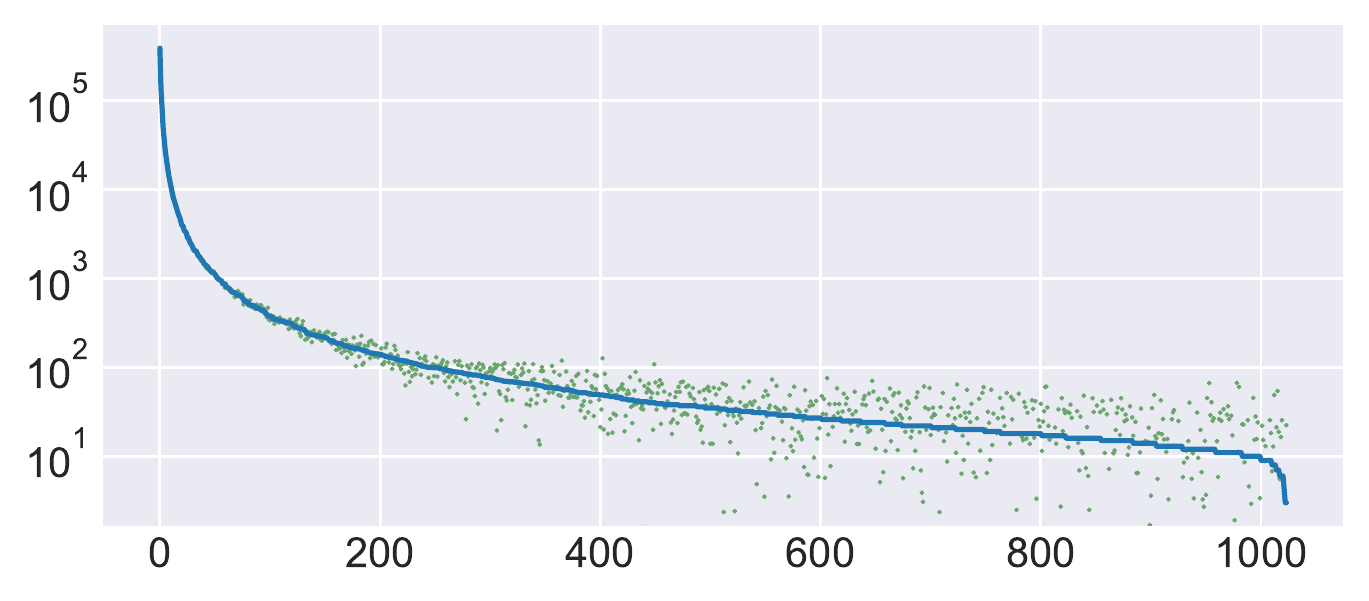}
	}
	\subfigure[Norm-Mul]{
		\includegraphics[width=0.31\textwidth]{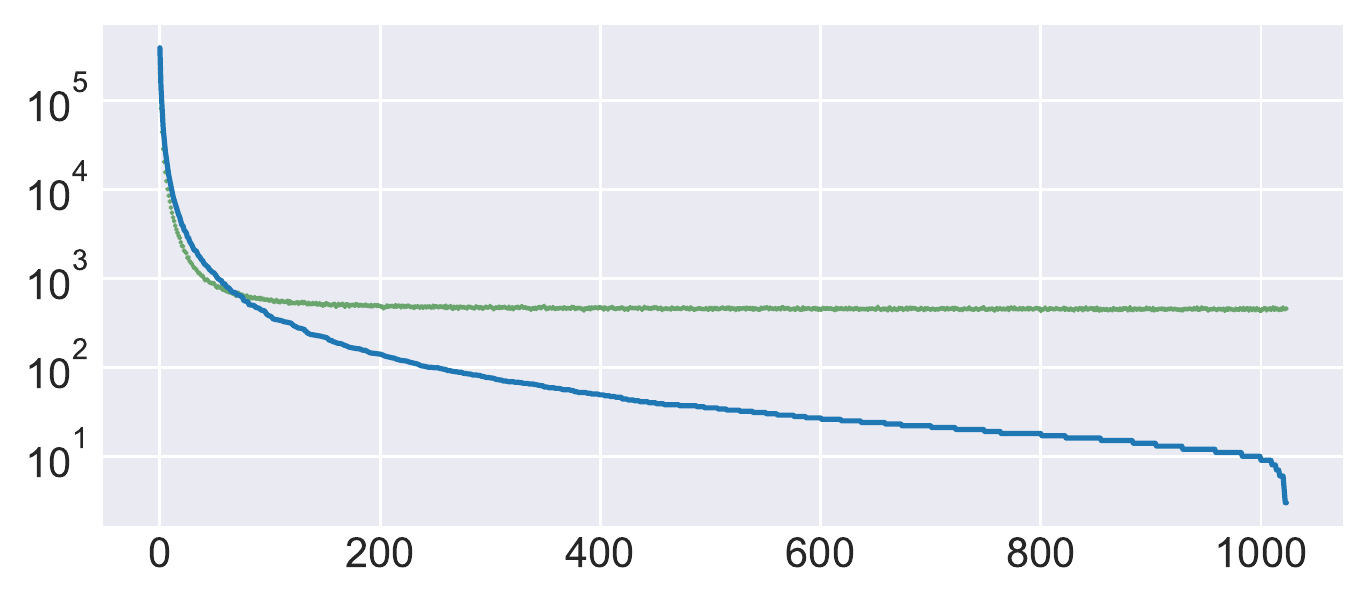}
	}
	\subfigure[Norm-Cut]{
		\includegraphics[width=0.31\textwidth]{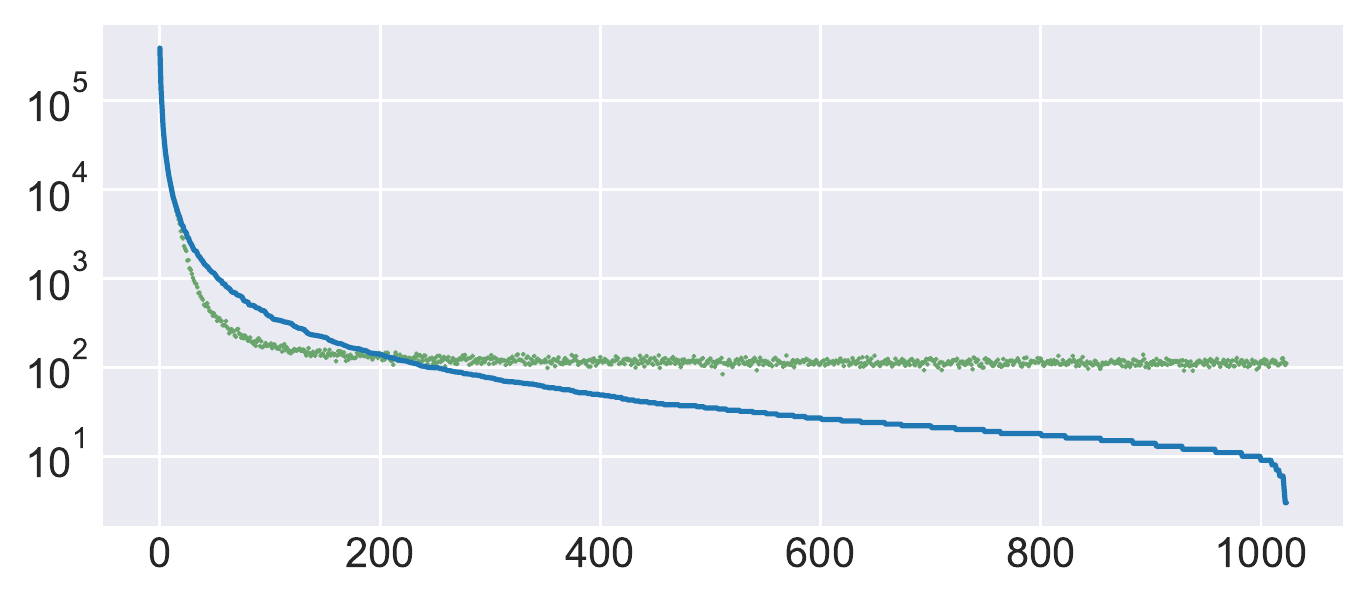}
	}
	\subfigure[Norm-Sub]{
		\includegraphics[width=0.31\textwidth]{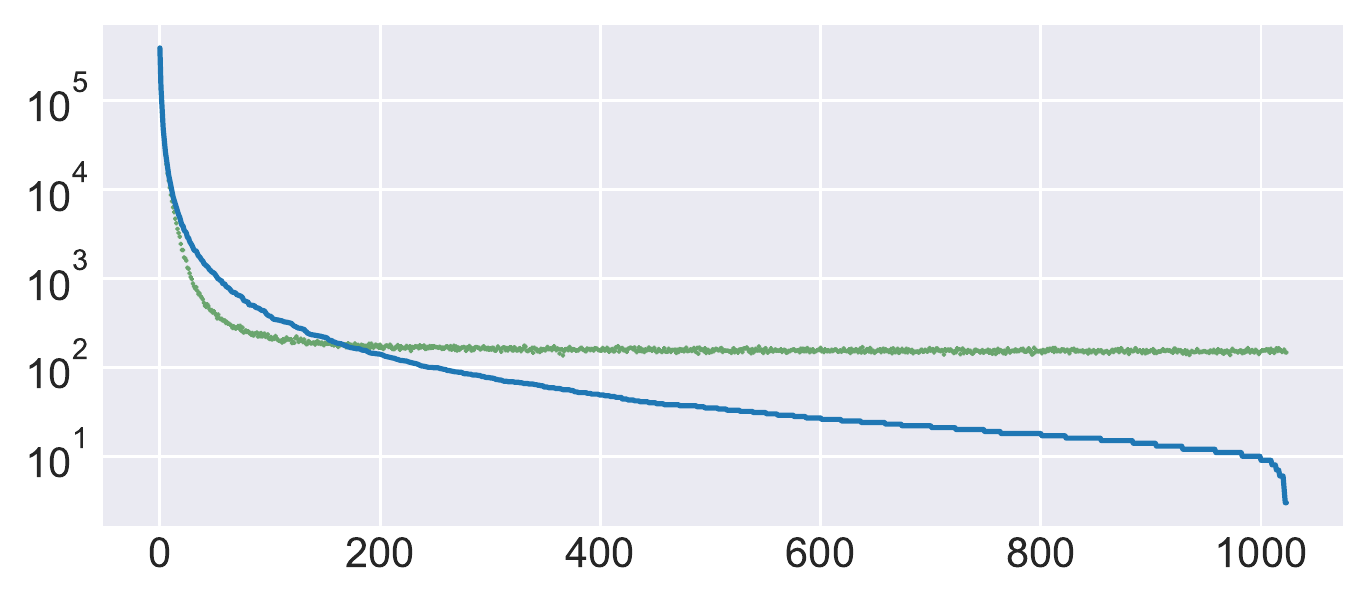}
	}
	\subfigure[Power]{
		\includegraphics[width=0.31\textwidth]{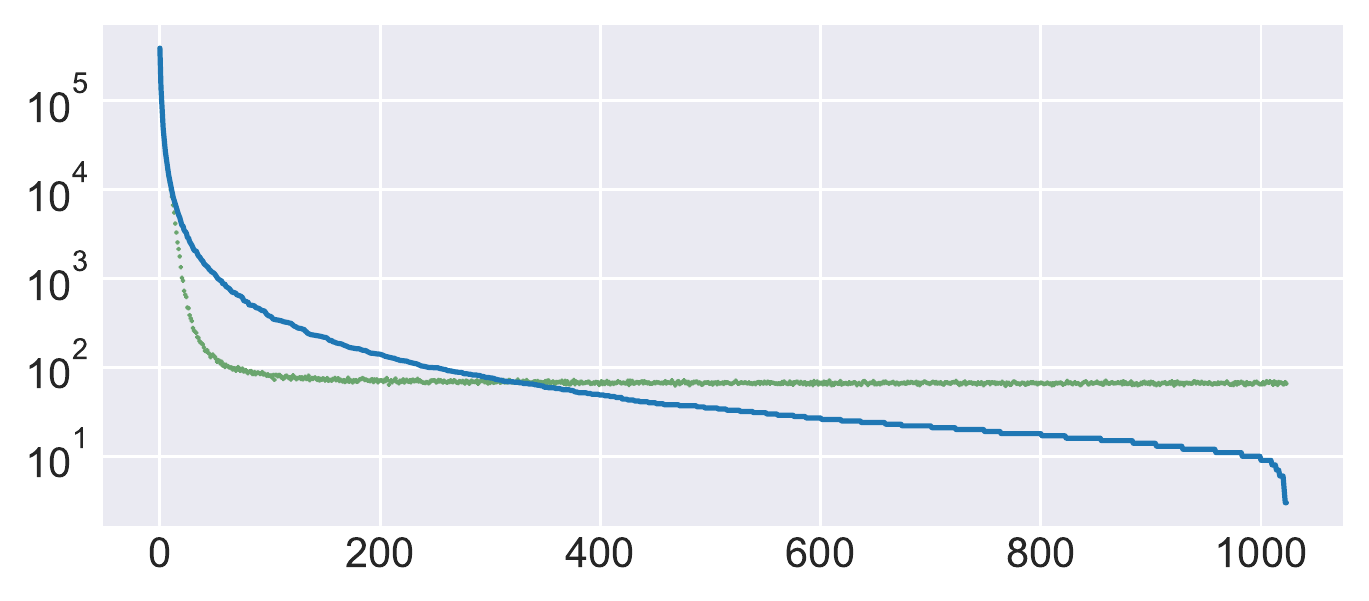}
	}
	\subfigure[PowerNS]{
		\includegraphics[width=0.31\textwidth]{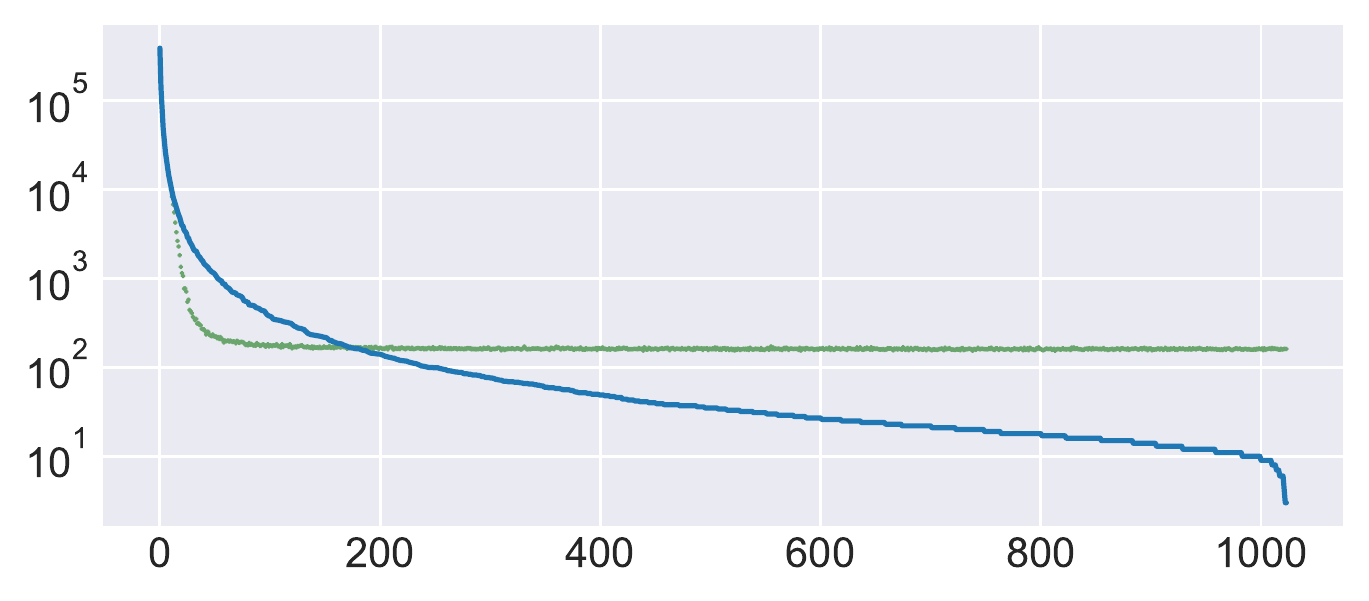}
	}
	\caption{Log-scale distribution of the Zipf's dataset fixing $\epsilon=1$, the $x$-axes indicates the sorted value index and the $y$-axes is its count.  The blue line is the ground truth; the green dots are estimations by different methods.}
	\label{fig:dist_est}
\end{figure*}

\begin{figure*}[ht]
	\centering

	\subfigure[Base (Post-Pos),  bias sum: $-1405$]{
		\includegraphics[width=0.31\textwidth]{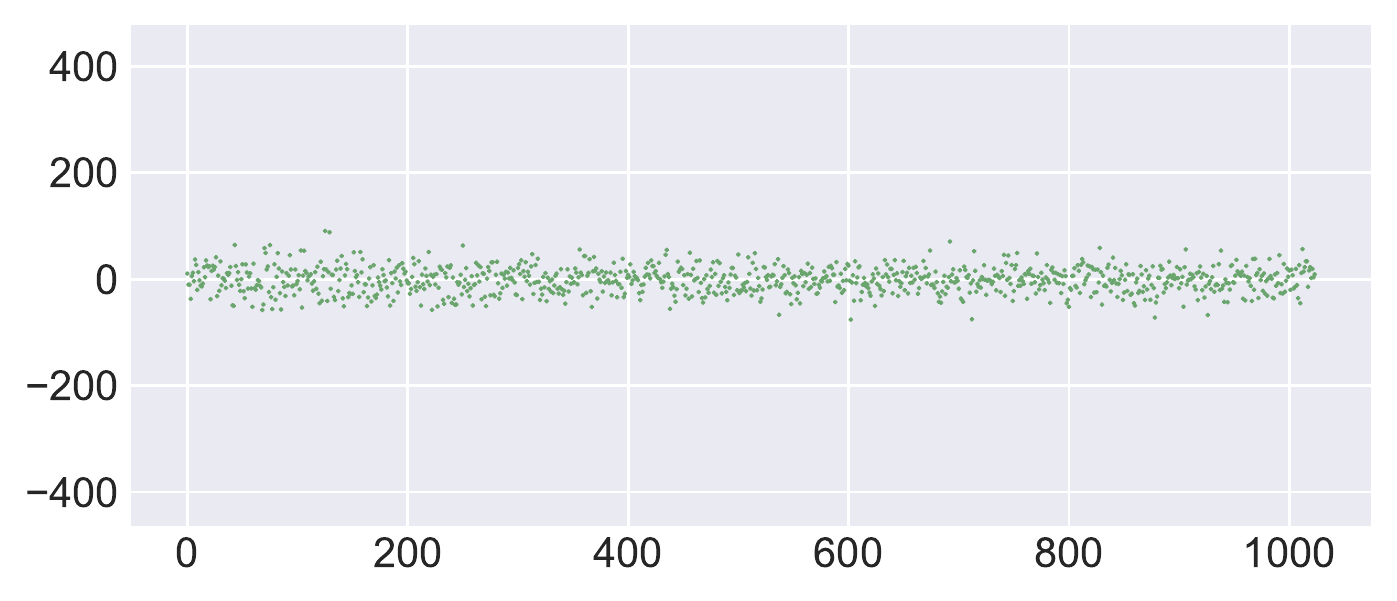}
	}
	\subfigure[Base-Pos,  bias sum: $711932$]{
		\includegraphics[width=0.31\textwidth]{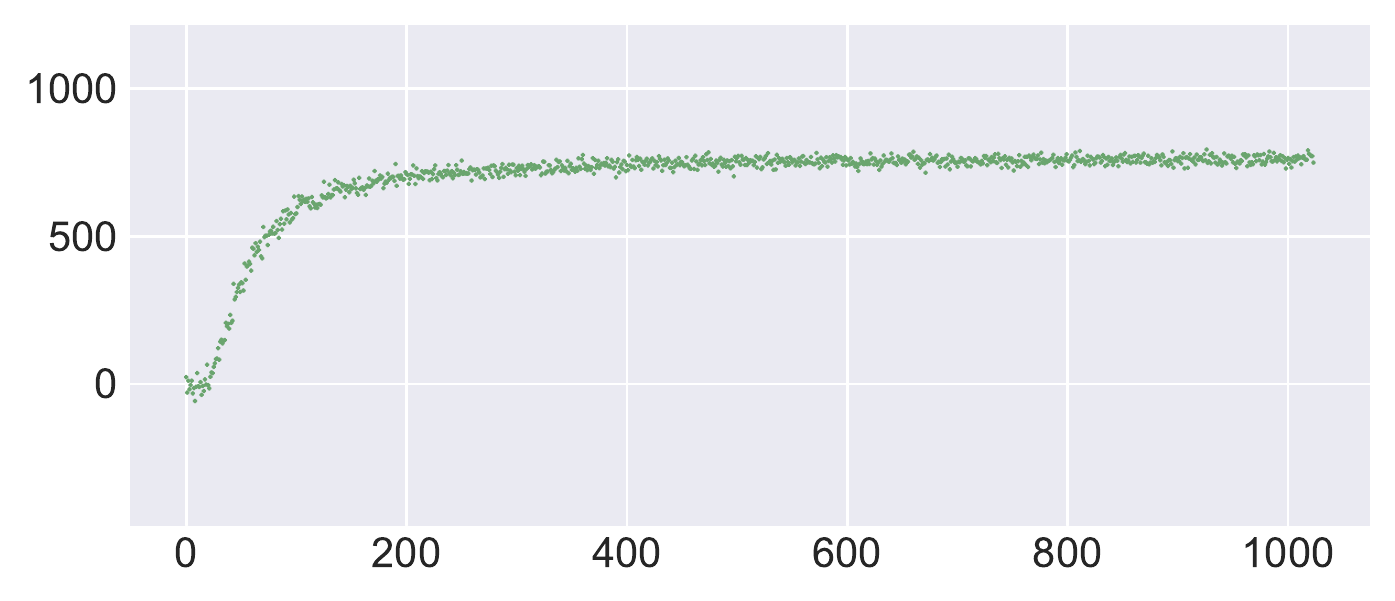}
	}
	\subfigure[Base-Cut,  bias sum: $-137449$]{
		\includegraphics[width=0.31\textwidth]{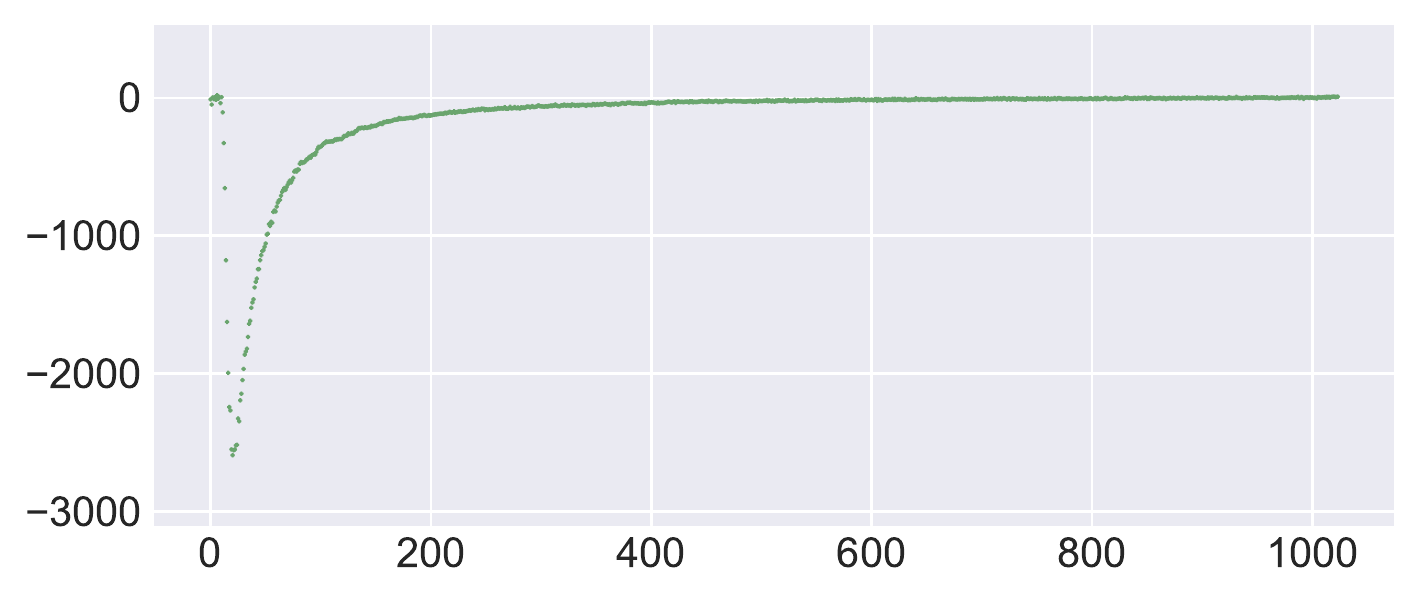}
	}
	\subfigure[Norm,  bias sum: $0$]{
		\includegraphics[width=0.31\textwidth]{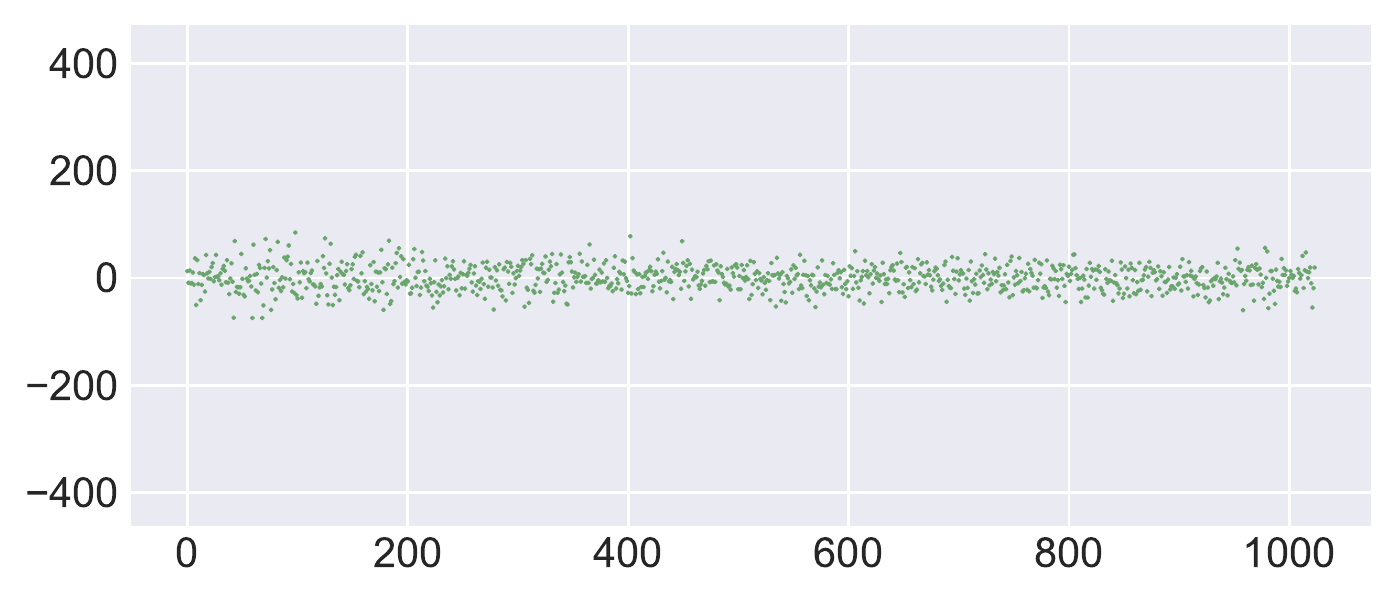}
	}
	\subfigure[Norm-Mul,  bias sum: $0$]{
		\includegraphics[width=0.31\textwidth]{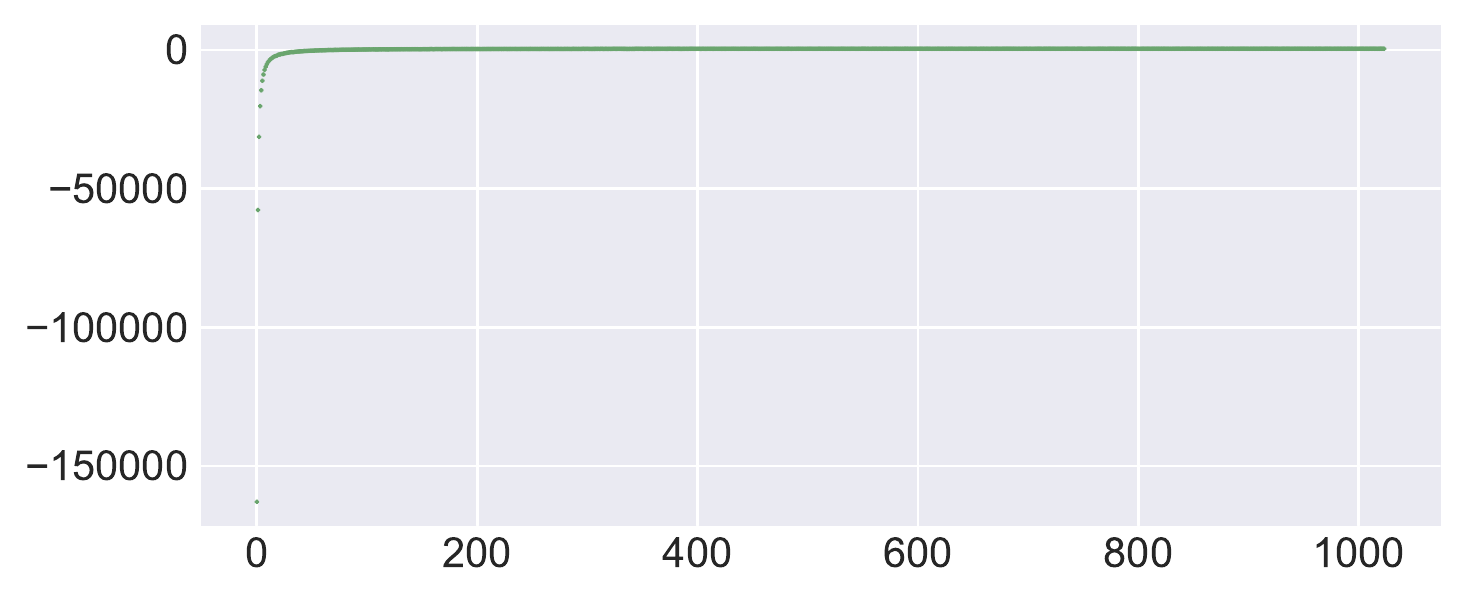}
	}
	\subfigure[Norm-Cut,  bias sum: $0$]{
		\includegraphics[width=0.31\textwidth]{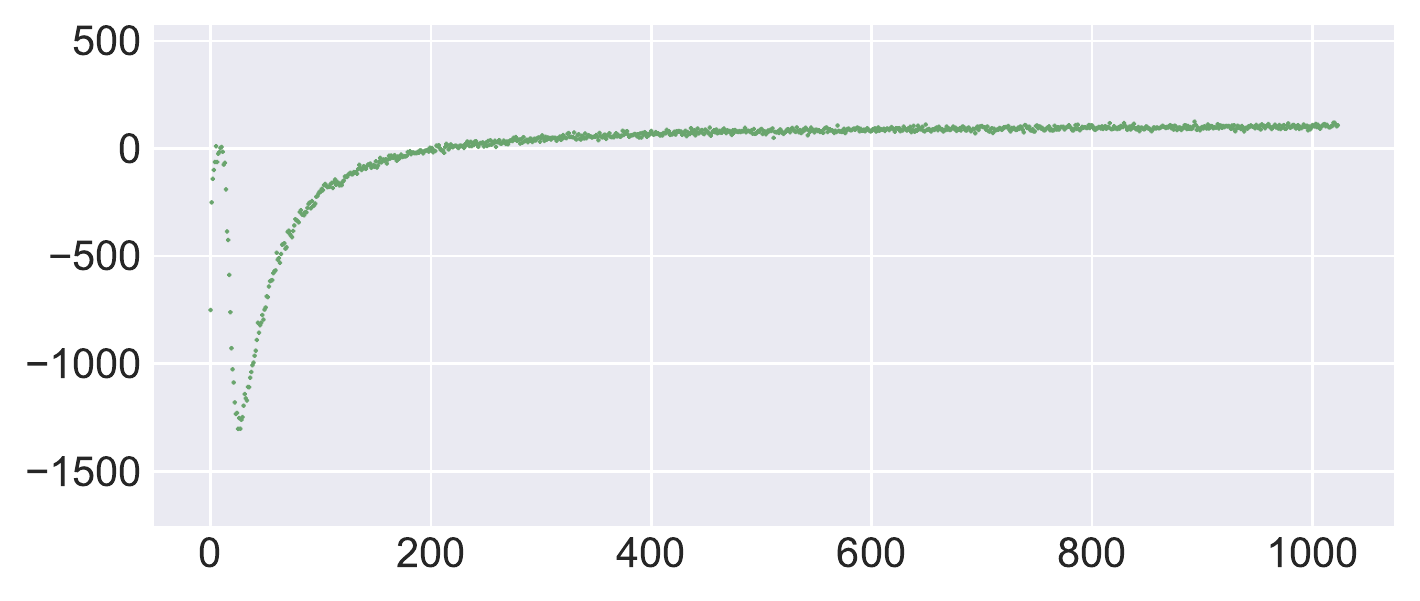}
	}
	\subfigure[Norm-Sub,  bias sum: $0$]{
		\includegraphics[width=0.31\textwidth]{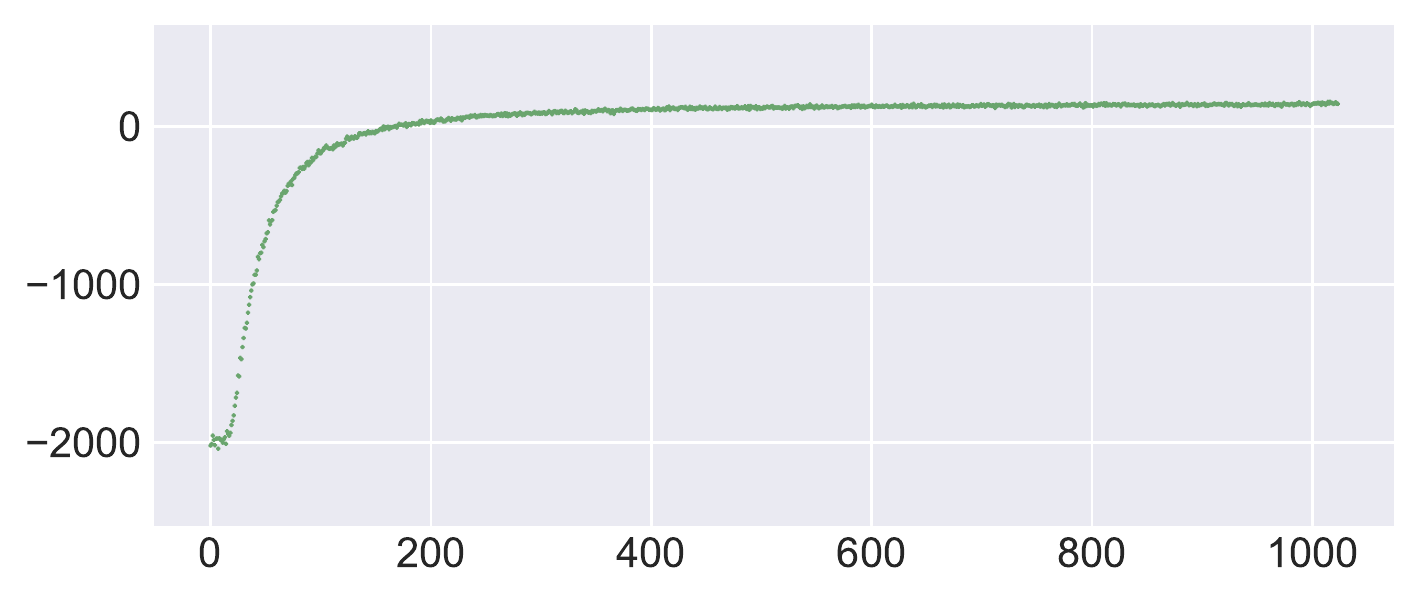}
	}
	\subfigure[Power,  bias sum: $-96332$]{
		\includegraphics[width=0.31\textwidth]{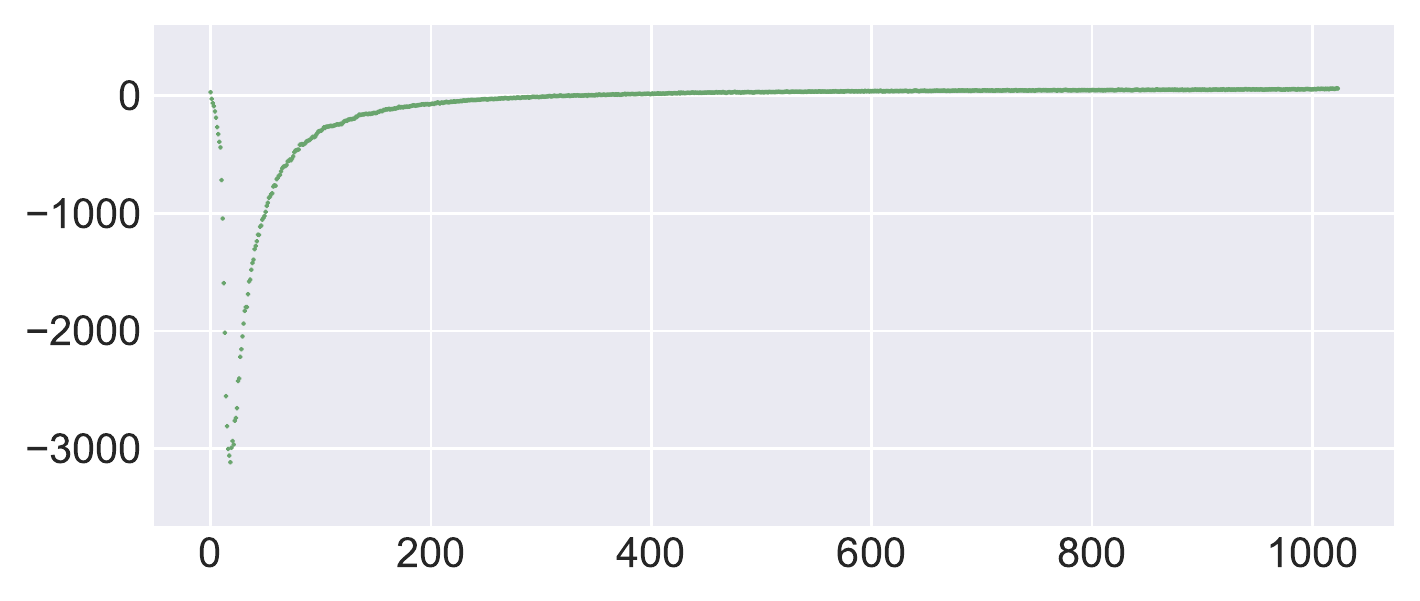}
	}
	\subfigure[PowerNS,  bias sum: $0$]{
		\includegraphics[width=0.31\textwidth]{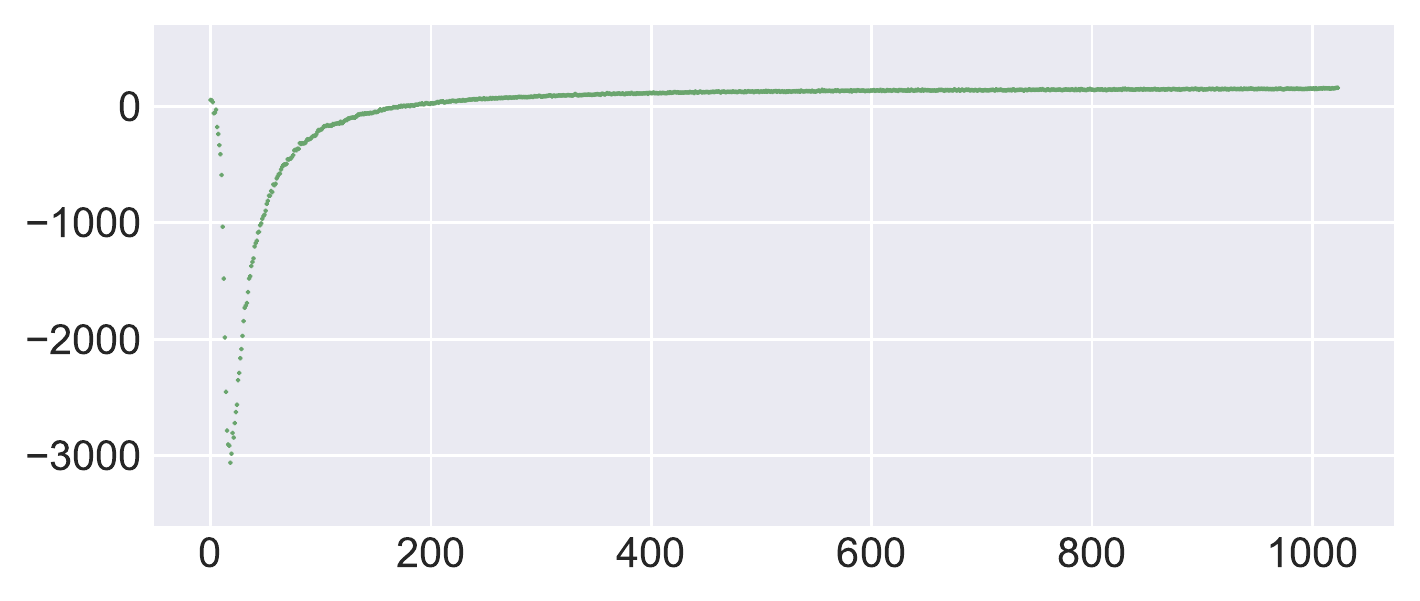}
	}
	\caption{Bias of count estimation for the Zipf's dataset fixing $\epsilon=1$.}
	\label{fig:dist_diff}
\end{figure*}

\begin{figure*}[ht]
	\centering

	\subfigure[Base (Post-Pos)]{
		\includegraphics[width=0.31\textwidth]{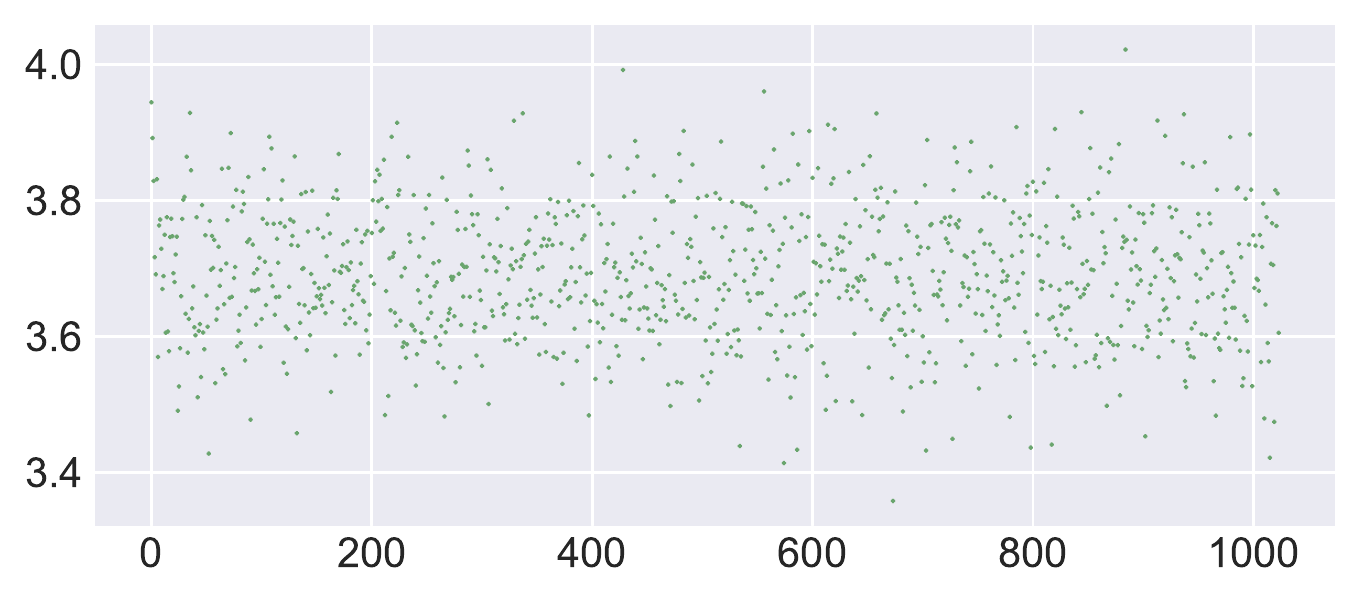}
	}
	\subfigure[Base-Pos]{
		\includegraphics[width=0.31\textwidth]{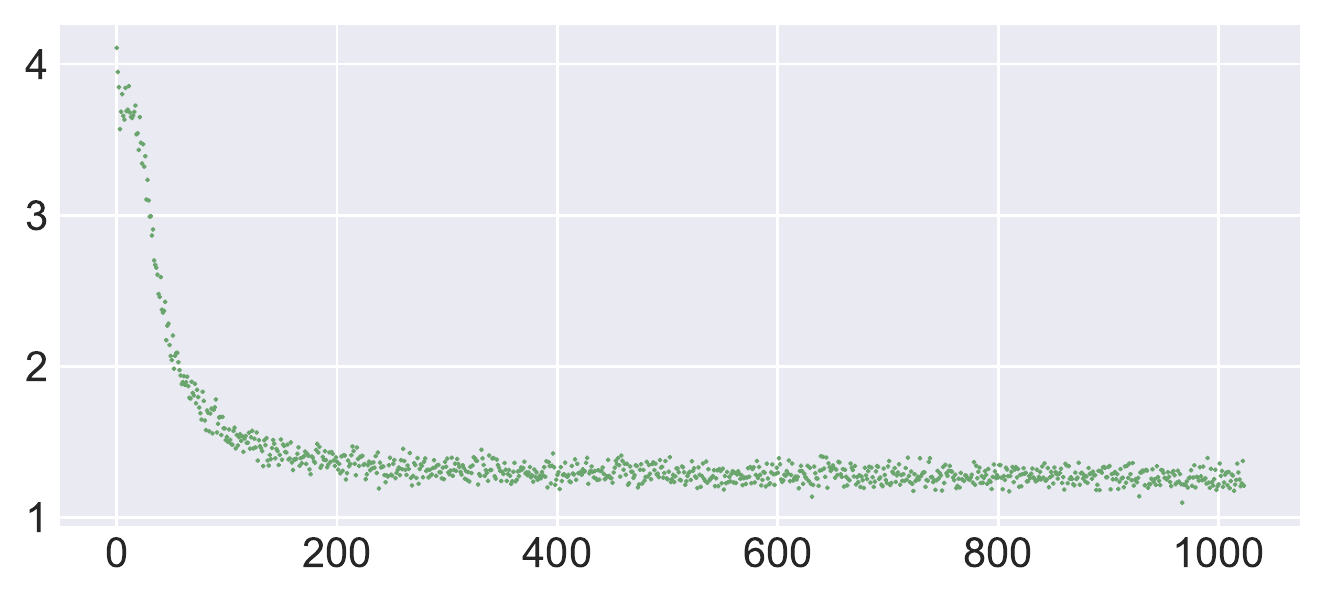}
	}
	\subfigure[Base-Cut]{
		\includegraphics[width=0.31\textwidth]{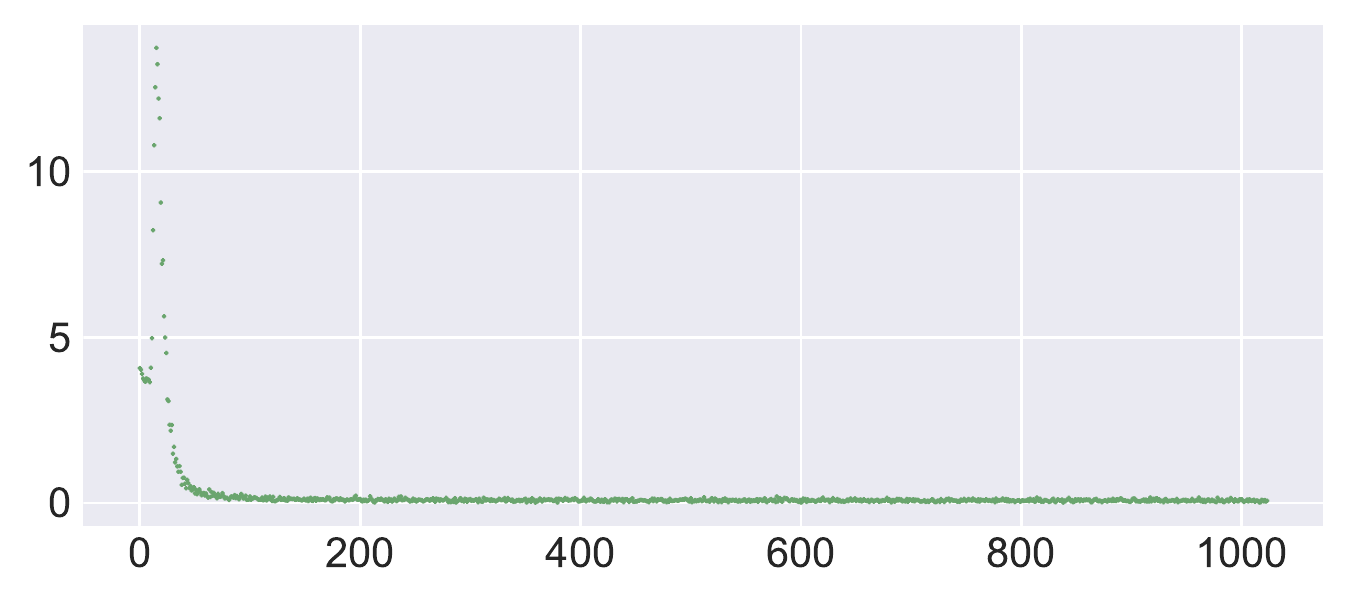}
	}
	\subfigure[Norm]{
		\includegraphics[width=0.31\textwidth]{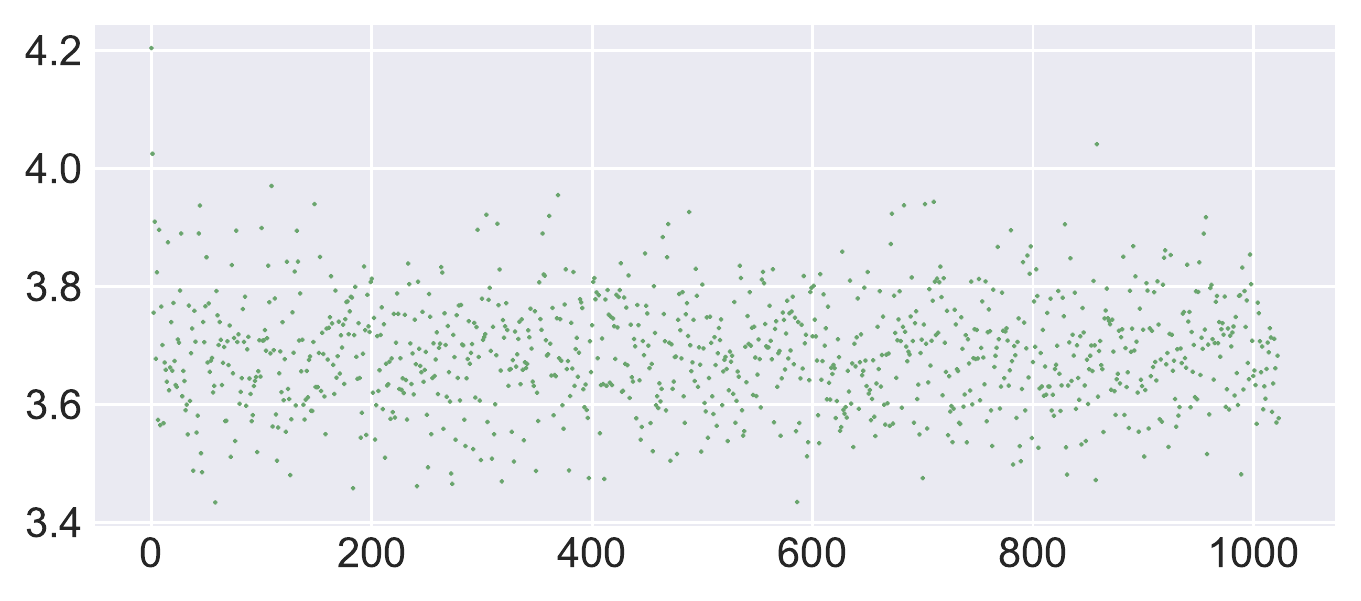}
	}
	\subfigure[Norm-Mul]{
		\includegraphics[width=0.31\textwidth]{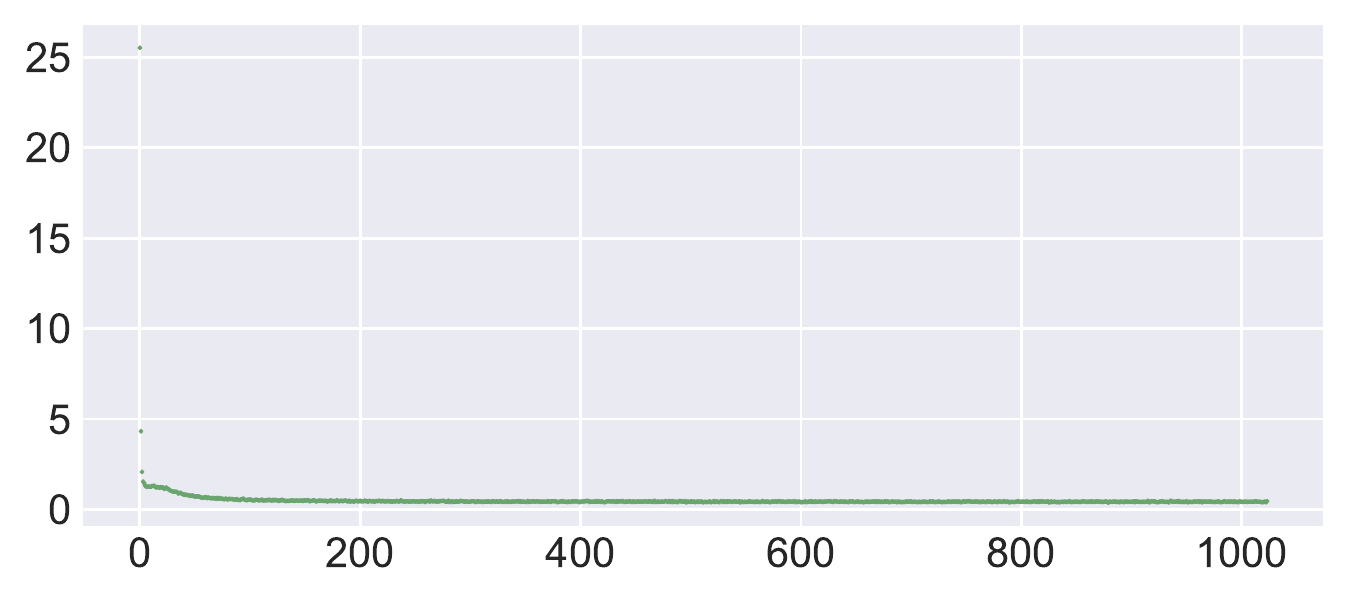}
	}
	\subfigure[Norm-Cut]{
		\includegraphics[width=0.31\textwidth]{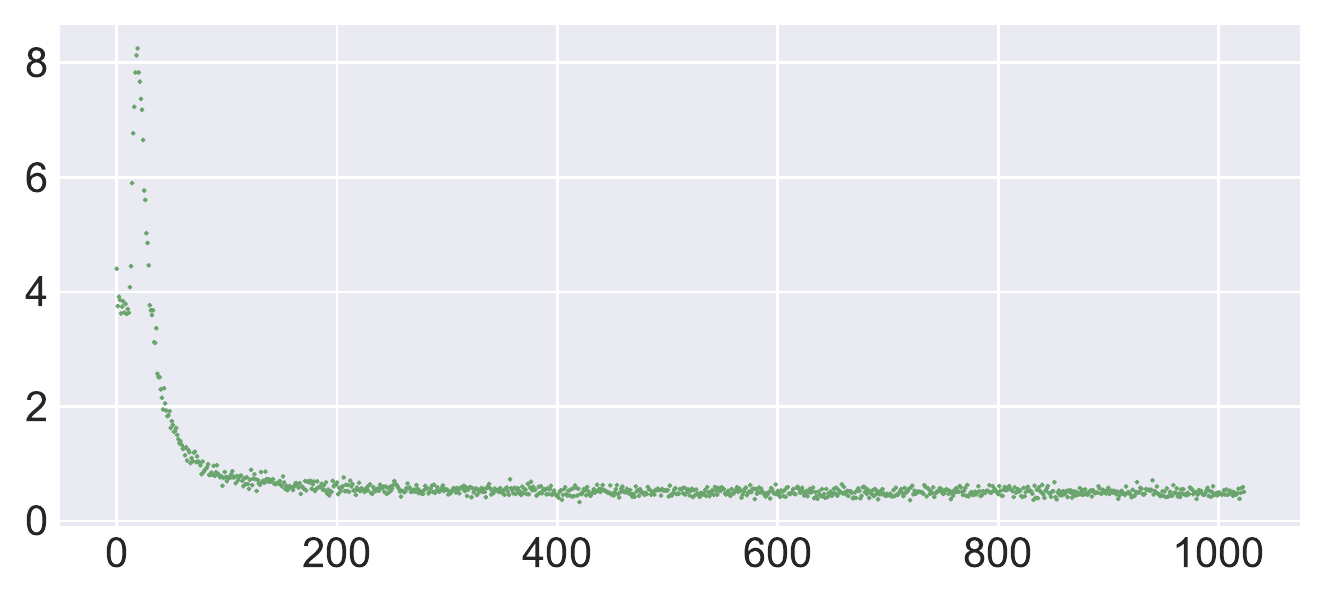}
	}
	\subfigure[Norm-Sub]{
		\includegraphics[width=0.31\textwidth]{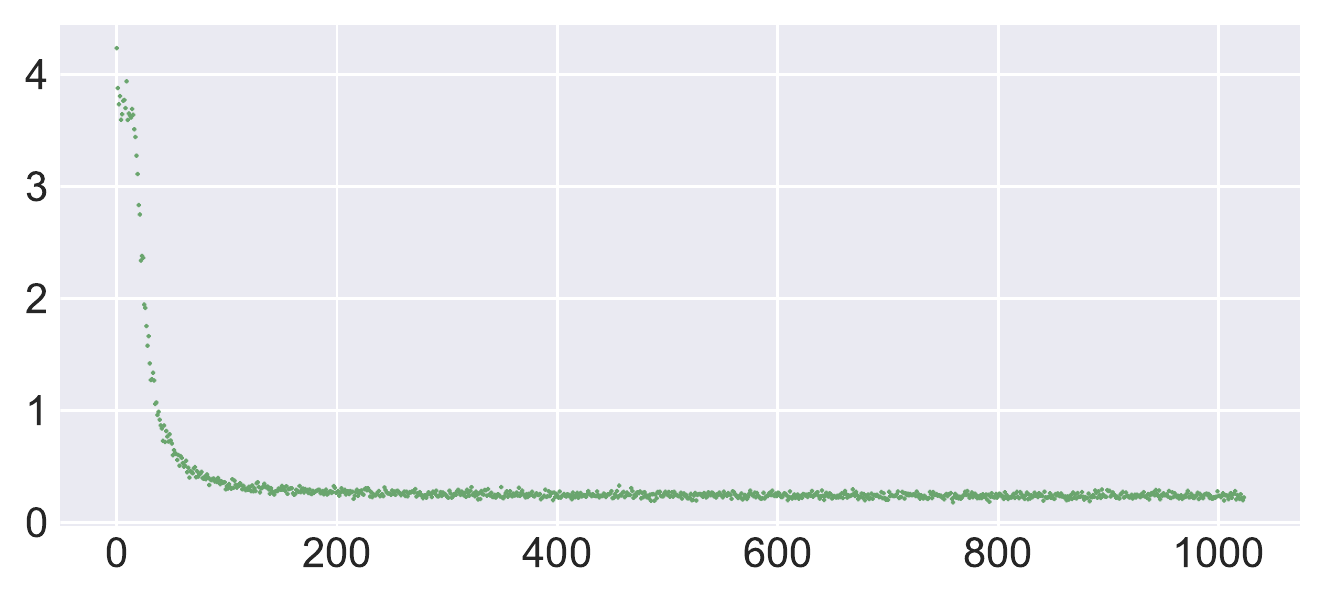}
	}
	\subfigure[Power]{
		\includegraphics[width=0.31\textwidth]{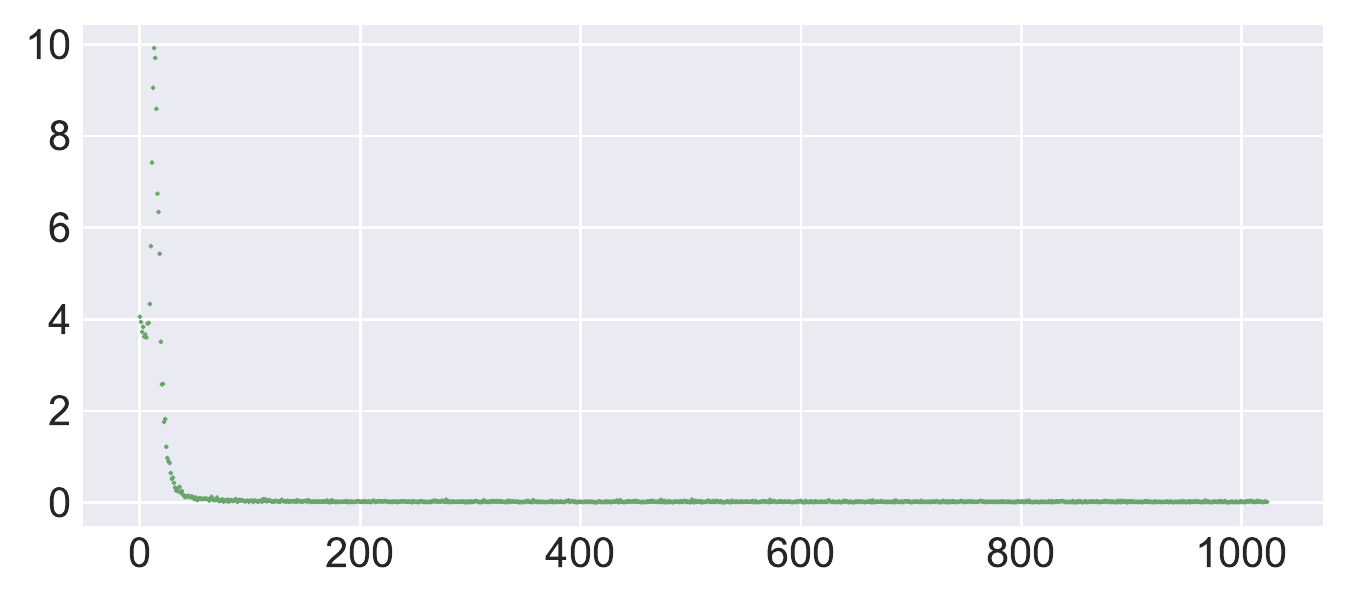}
	}
	\subfigure[PowerNS]{
		\includegraphics[width=0.31\textwidth]{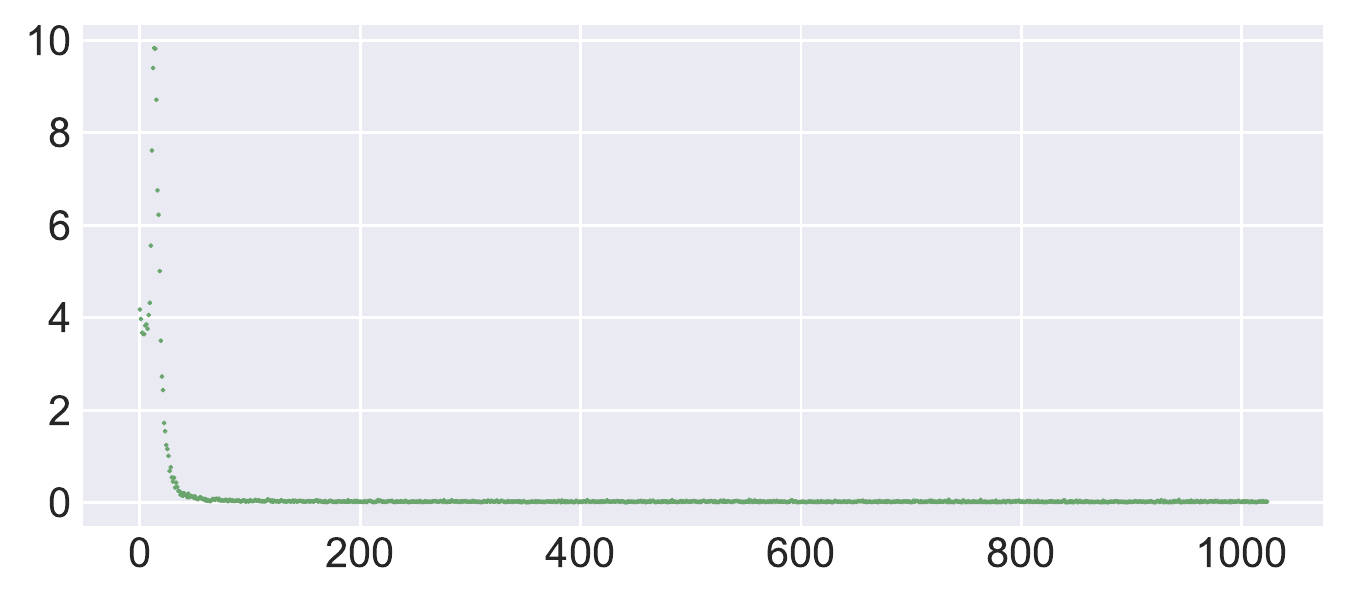}
	}
	\caption{Variance of count estimation of the Zipf's dataset fixing $\epsilon=1$.  The $y$-axes are scaled down by $n=10^6$ (a value $a$ in the figure represents $a\cdot 10^6$).}
	\label{fig:dist_var}
\end{figure*}

\subsection{Least Expected Square Error}

Jia et al.~\cite{infocom:JiaG18} proposed a method in which one first assumes that the data follows some type of distribution (but the parameters are unknown), then uses the estimates to fit the parameters of the distribution, and finally updates the estimates that achieve expected least square.

\begin{itemize}[leftmargin=*]
	\item
	      \underline{\textbf{Power}}:
	      \textit{Fit a distribution, and then minimize the expected squared error.}
\end{itemize}
Formally, for each value $v$, the estimate $\tilde{f}_v$ given by \fo is regarded as the addition of two parts: the true frequency $f_v$ and noise following the normal distribution (as shown in Equation~\eqref{eq:normal_approx}).  The method then finds $f'_v$ that minimizes $\EV{(f_v - f'_v)^2|\tilde{f}_v}$.  To solve this problem, the authors estimate the true distribution $f_v$ from the estimates $\mathbf{\tilde{f}}$ (where $\mathbf{\tilde{f}}$ is the vector of the $\tilde{f}_v$'s).

In particular, it is assume in~\cite{infocom:JiaG18} that the distribution follows Power-Law or Gaussian.  The distributions can be determined by one or two parameters, which can be fitted from the estimation $\mathbf{\tilde{f}}$.
Given $\Pr{x}$ as the probability $f_v = x$ from the fitted distribution, and
$\Pr{ x \sim \mathcal{N}(0, \sigma)}$ as the pdf of $x$ drawn from the Normal distribution with $0$ mean and standard deviation $\sigma$ (as in Equation~\eqref{eq:normal_approx}), one can then minimize the objective.  Specifically, for each value $v\in\Domain$, output
\begin{align}
	f'_v & =\int_0^1\frac{\Pr{(\tilde{f}_v - x) \sim \mathcal{N}(0, \sigma)}\cdot \Pr{x}\cdot x}{\int_0^1 \Pr{ (\tilde{f}_v - y) \sim \mathcal{N}(0, \sigma)}\cdot \Pr{y}dy}dx.\label{eq:prior}
\end{align}

We fit $\Pr{x}$ with the Power-Law distribution and call the method Power.
Using this method requires knowledge and/or assumption of the distribution to be estimated.  If there are too much noise, or the underlying distribution is different, forcing the observations to fit a distribution could lead to poor accuracy.
Moreover, this method does not ensure the frequencies sum up to 1, as Equation~\eqref{eq:prior} only considers the frequency of each value $v$ independently.
To make the result consistent, we use Norm-Sub to post-process results of Power, since Power is close to CLS, and Norm-Sub is the solution to CLS.  We call it PowerNS.

\begin{itemize}[leftmargin=*]
	\item
	      \underline{\textbf{PowerNS}}:
	      \textit{First use standard \fo, then use Power to recover the values, finally use Norm-Sub to further process the results.}
\end{itemize}

\subsection{Summary of Methods}

In summary, Norm-Sub is the solution to the Constraint Least Square (CLS) formulation to the problem.  Furthermore, when the $f_v$ component in variance is dominated by the other component (as in Equation~\eqref{eq:var_appox}), the CLS formulation is equivalent to our MLE formulation.  In that case, Norm-Sub is equivalent to MLE-Apx.

Table~\ref{tab:method_summary} gives a summary of the methods.  First of all, all of the methods preserve the frequency order of the value, i.e., $f'_{v_1}\le f'_{v_2}$ iff $\tilde{f}_{v_1}\le \tilde{f}_{v_2}$.
The methods can be classifies into three classes: First, enforcing non-negativity only.  Base-Pos, Post-Pos, Base-Cut, and Power fall in this category.  Second, enforcing summing-to-one only.  Only Norm is in this class.  Third, enforcing the two requirement simultaneously.  Norm-Mul, Norm-Cut, Norm-Sub, and PowerNS satisfy both requirements.

\begin{figure*}[ht]
	\centering
	\subfigure{
		\includegraphics[width=0.48\textwidth]{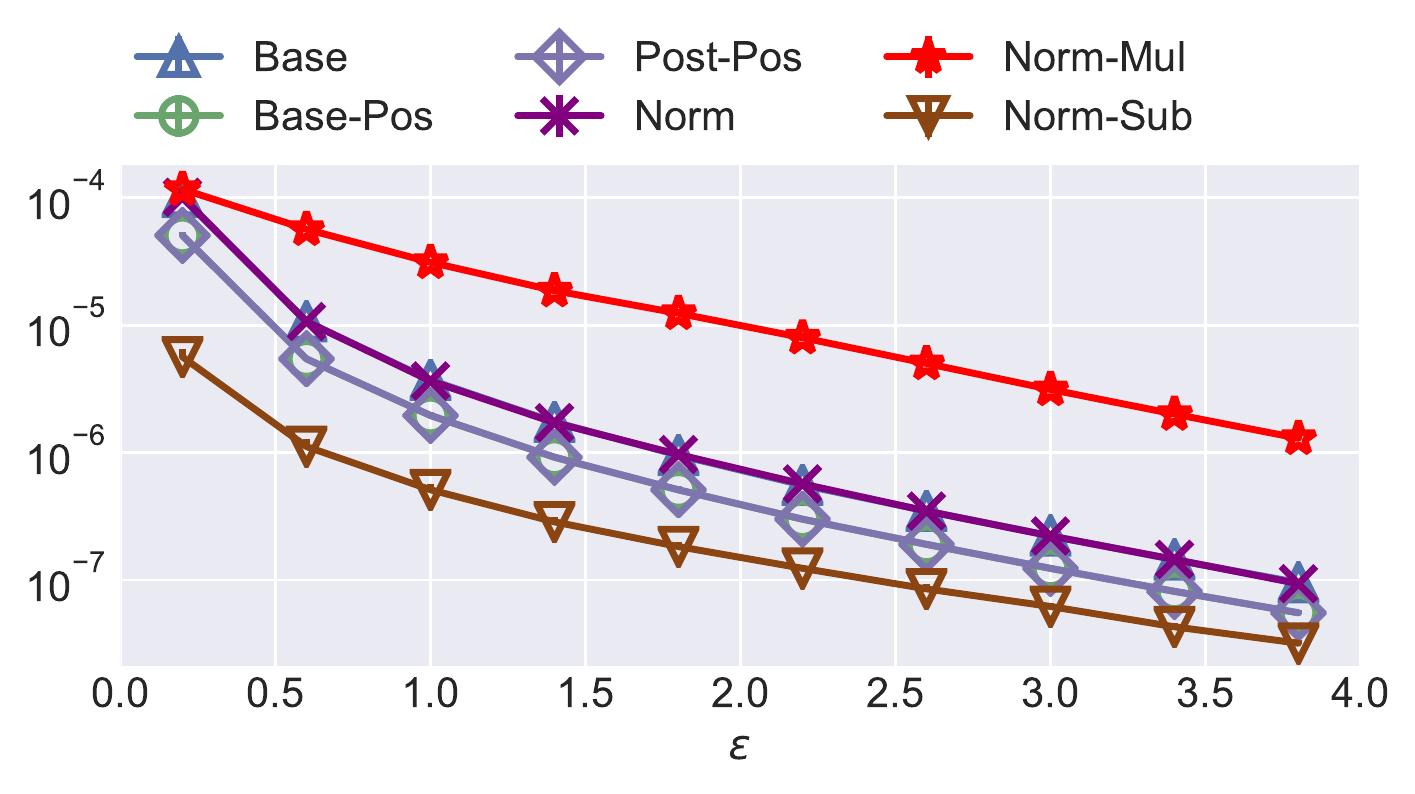}
	}
	\subfigure{
		\includegraphics[width=0.48\textwidth]{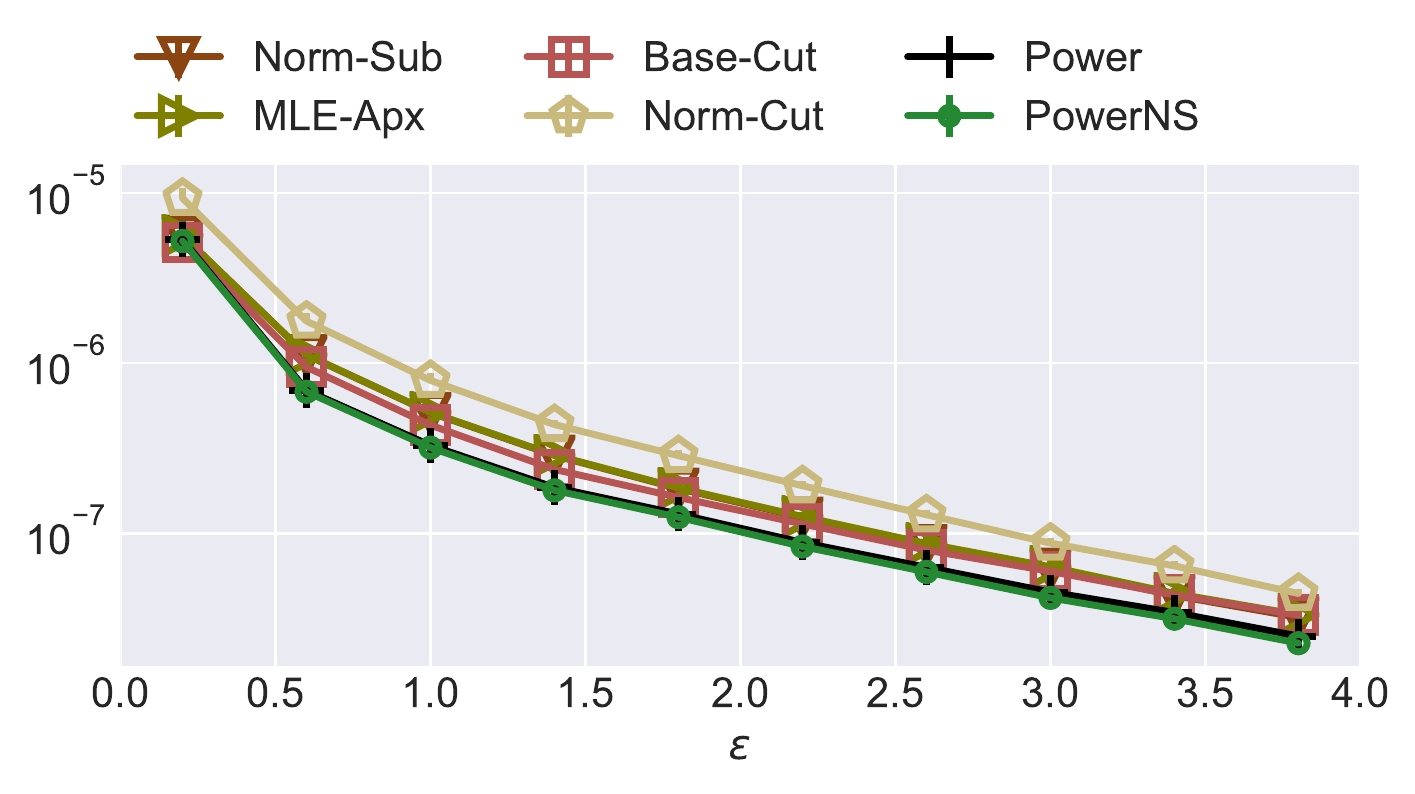}
	}
	Zipf's

	\subfigure{
		\includegraphics[width=0.48\textwidth]{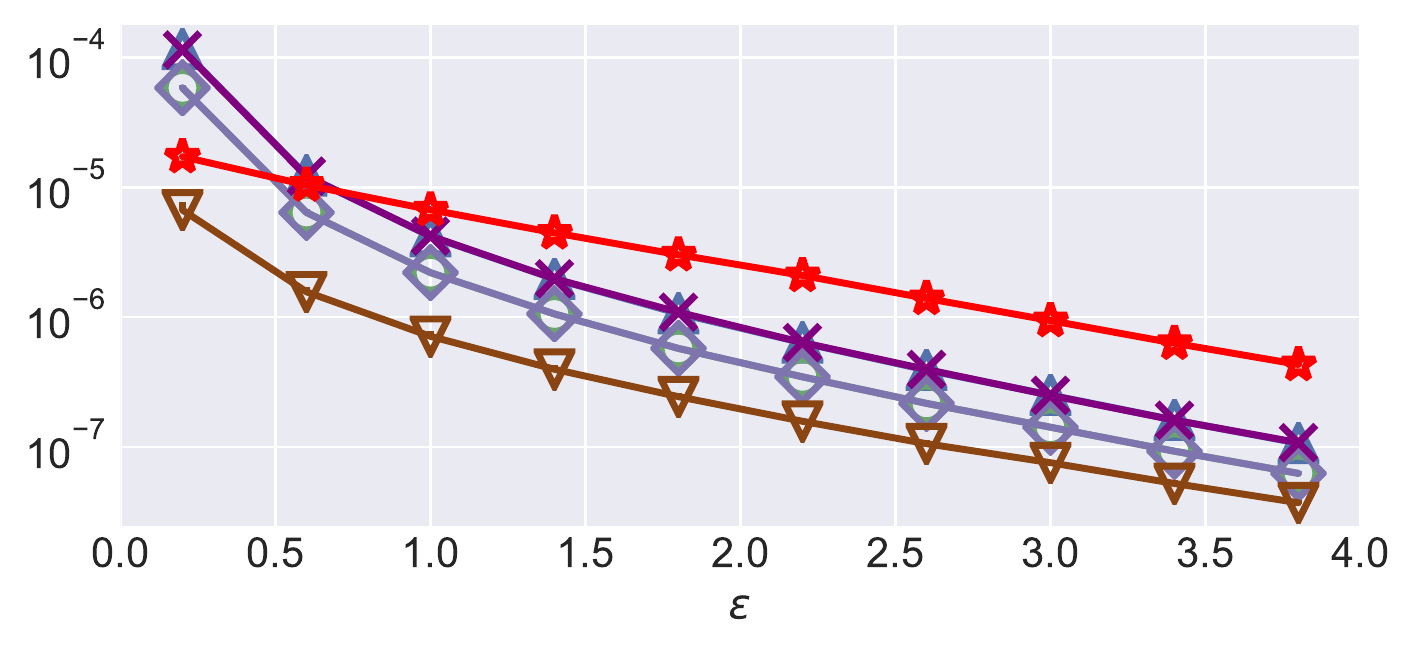}
	}
	\subfigure{
		\includegraphics[width=0.48\textwidth]{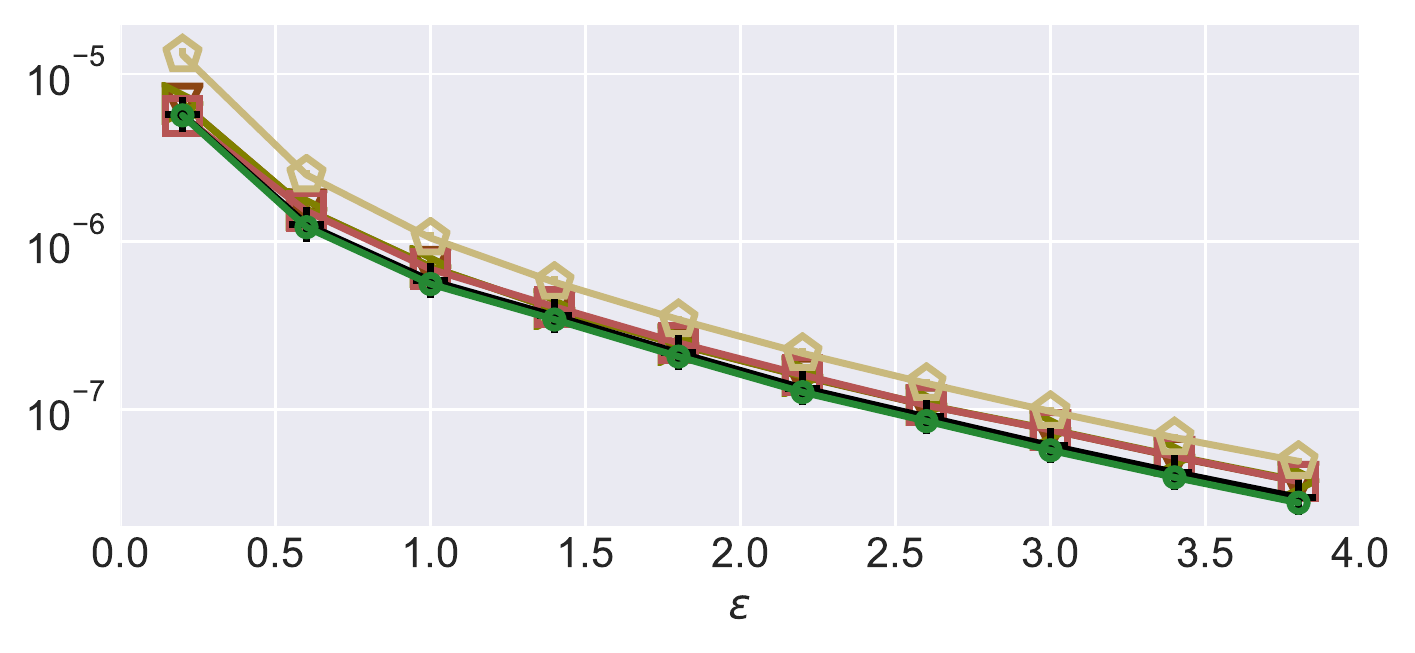}
	}
	Emoji
	\caption{MSE results on full-domain estimation, varying $\epsilon$ from $0.2$ to $4$.}
	\label{fig:eps_full}
\end{figure*}

\section{Evaluation}
\label{sec:exp}
As we are optimizing multiple utility metrics together, it is hard to theoretically compare different methods.  In this section, we run experiments to empirically evaluate these methods.

At the high level, our evaluations show that different methods perform differently in different settings, and to achieve the best utility, it may or may not be necessary to exploit all the consistency constraints.
As a result, we conclude that for full-domain query, Base-Cut performs the best; for set-value query, PowerNS performs the best; and for high-frequency-value query, Norm performs the best.

\subsection{Experimental Setup}\label{ssec:experimental_setup}

\mypara{Datasets.}
We run experiments on two datasets (one synthetic and one real).
\begin{itemize}
	\item Synthetic Zipf's distribution with 1024 values and 1 million reports.  We use $s=1.5$ in this distribution.
	\item Emoji: The daily emoji usage data.  We use the average emoji usage of an emoji keyboard~\footnote{\url{http://www.emojistats.org/}, accessed 12/15/2019 10pm ET}, which gives the total count of $n=884427$ with $d=1573$ different emojis.
\end{itemize}

\mypara{Setup.}
The \fo protocols and post-processing algorithms are implemented in Python 3.6.6 using Numpy 1.15; and all the experiments are conducted on a PC with Intel Core i7-4790 3.60GHz and 16GB memory.  Although the post-processing methods can be applied to any \fo protocol, we focus on simulating \olh as it provides near-optimal utility with reasonable communication bandwidth.

\mypara{Metrics.}
We evaluate three scenarios 1) estimate the frequency of every value in the domain (full-domain), 2) estimate the aggregate frequencies of a subset of values (set-value), and 3) estimate the frequencies of the most frequent values (frequent-value).

We use the metrics of \emph{Mean of Squared Error} (MSE).  MSE measures the mean of squared difference between the estimate and the ground truth for each (set of) value.  For full-domain, we compute
\begin{align*}
	\text{MSE} = \frac{1}{d} \sum_{v \in D} (f_v - f'_v)^2.
\end{align*}
For frequent-value, we consider the top $k$ values with highest $f_v$ instead of the whole domain $D$; and for set-value, instead of measuring errors for singletons, we measure errors for sets, that is, we first sum the frequencies for a set of values, and then measure the difference.

\begin{figure*}[ht]
	\centering
	\subfigure{
		\includegraphics[width=0.48\textwidth]{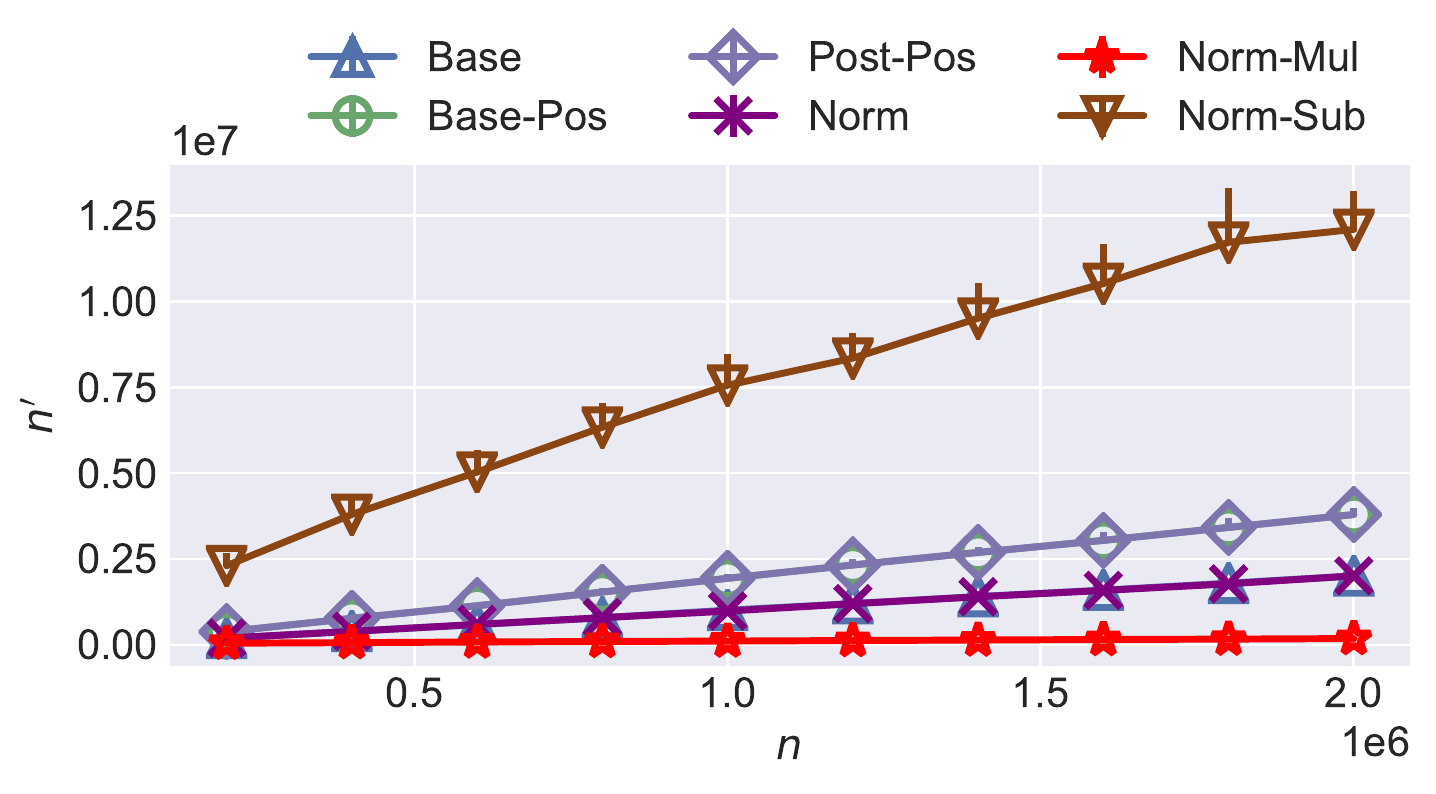}
	}
	\subfigure{
		\includegraphics[width=0.48\textwidth]{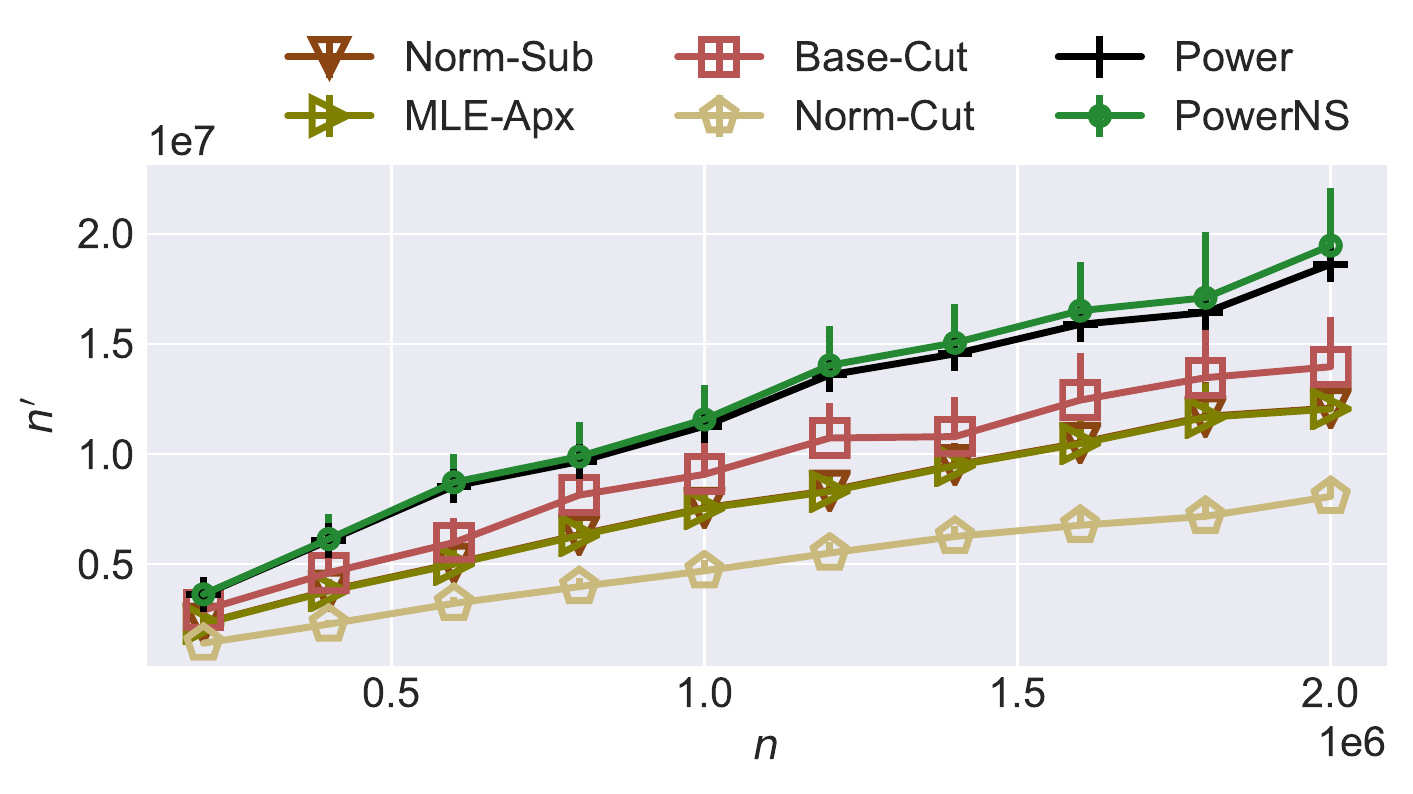}
	}
	\vspace{-0.3cm}
	\caption{MSE results on full-domain estimation on Zipfs dataset, comparing $n$ with $n'$, fixing $\epsilon=1$ while varying $n$ from $0.2\times 10^6$ to $2.0\times 10^6$.  Three pairs of methods have similar performance: Base and Norm, Base-Pos and Post-Pos, Norm-Sub and MLE-Apx.
	}
	\label{fig:eps_vary_n}
\end{figure*}

\begin{figure*}[ht]
	\centering
	\subfigure{
		\includegraphics[width=0.475\textwidth]{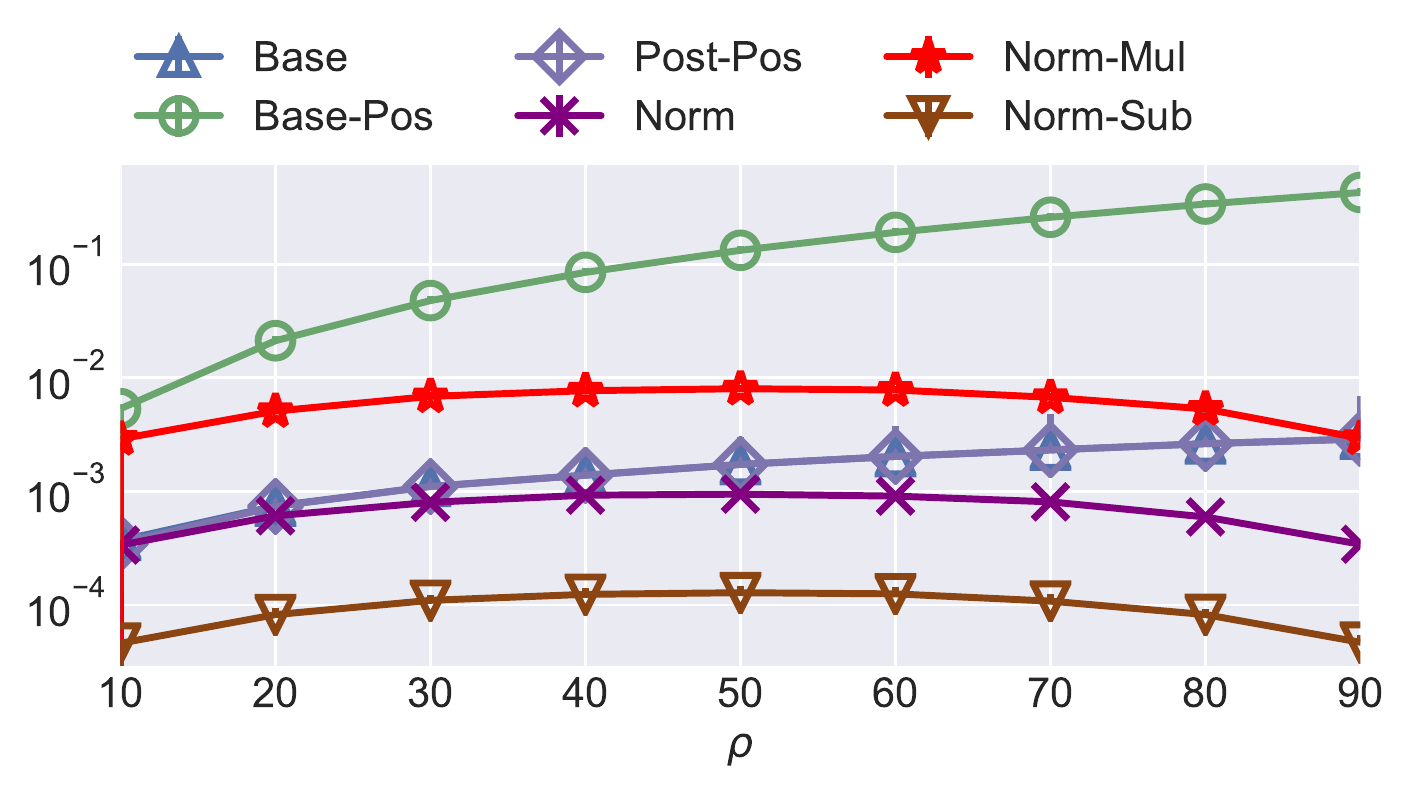}
	}
	\subfigure{
		\includegraphics[width=0.475\textwidth]{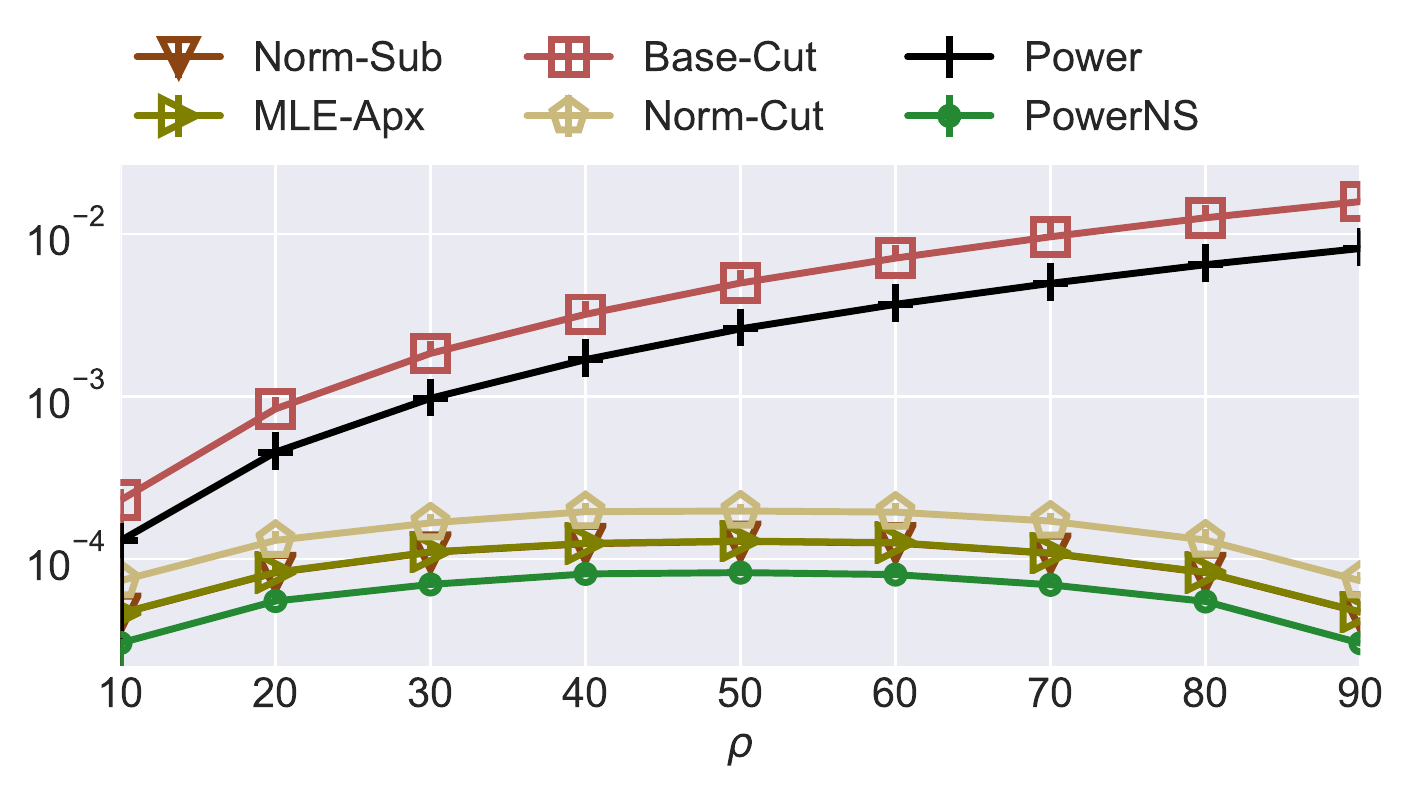}
	}
	Zipf's

	\subfigure{
		\includegraphics[width=0.475\textwidth]{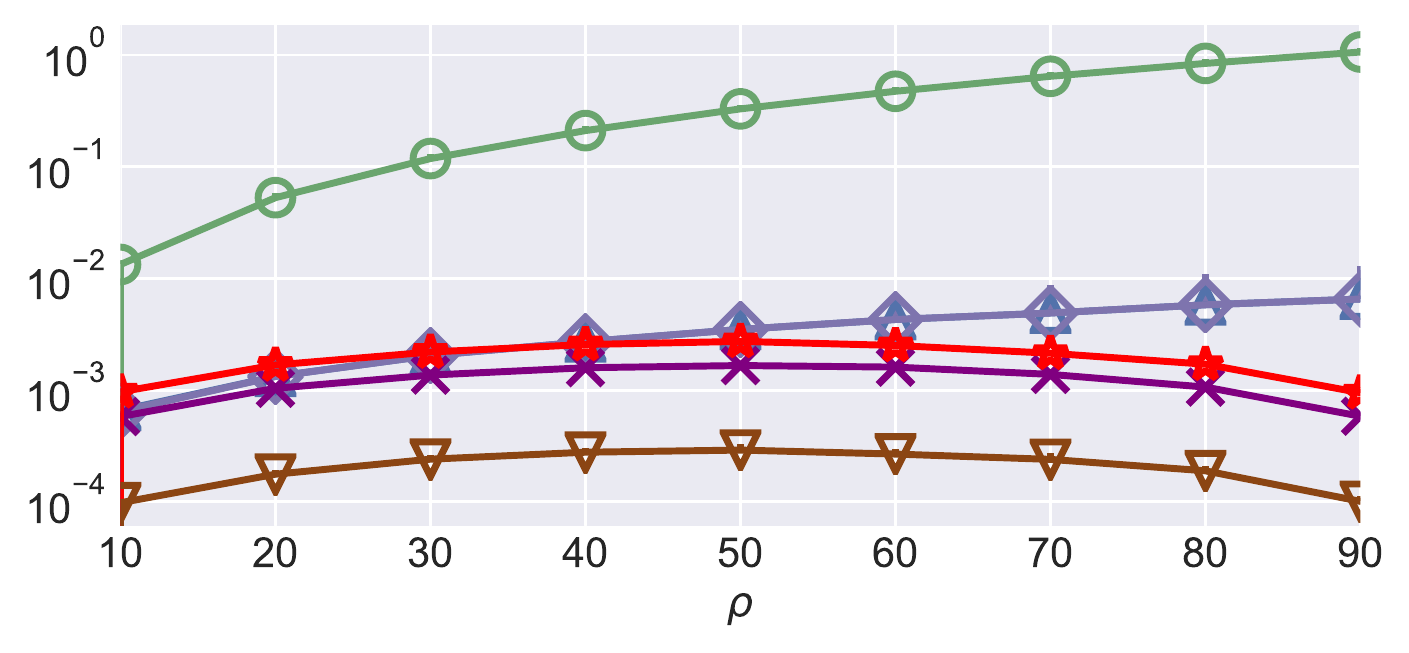}
	}
	\subfigure{
		\includegraphics[width=0.475\textwidth]{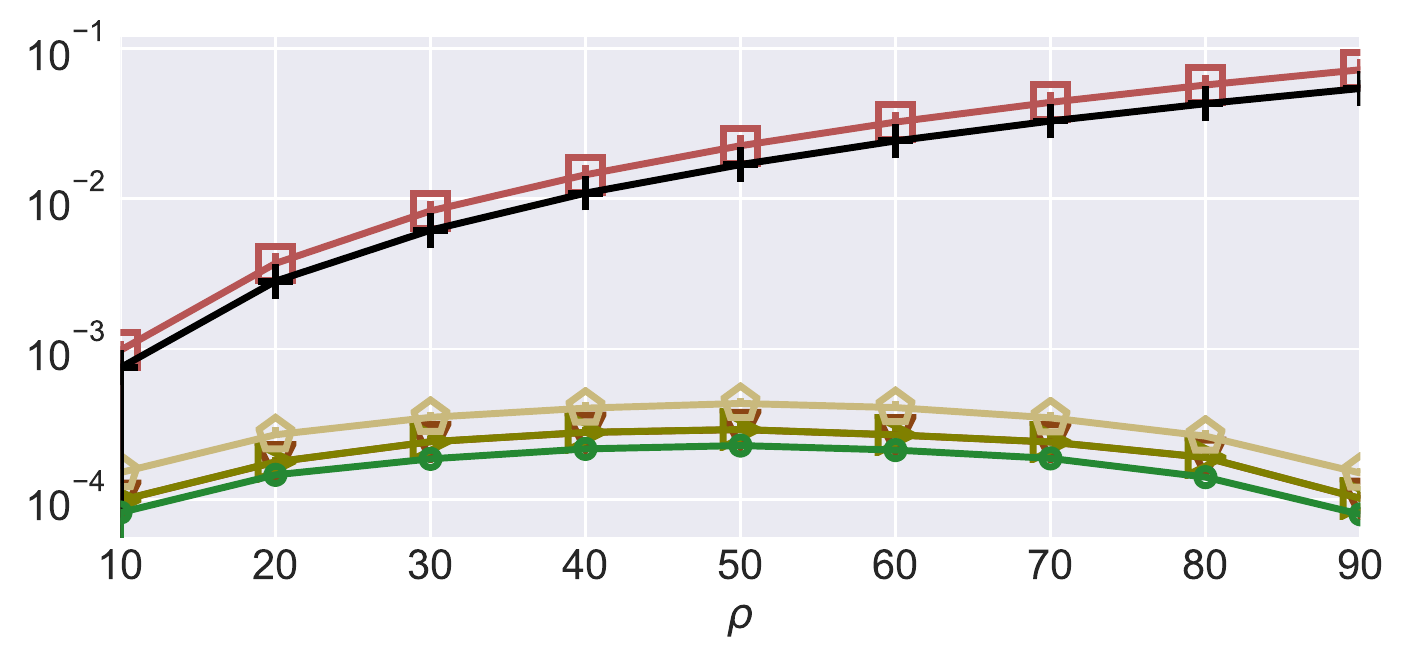}
	}
	Emoji
	\caption{MSE results on set-value estimation, varying set size percentage $\rho$ from $10$ to $90$, fixing $\epsilon=1$.}
	\label{fig:set_size}
\end{figure*}

\mypara{Plotting Convention.}
Unless otherwise specified, for each dataset and each method, we repeat the experiment $30$ times, with result mean and standard deviation reported.  The standard deviation is typically very small, and barely noticeable in the figures.

Because there are 11 algorithms (10 post-processing methods plus Base), and for any single metric there are often multiple methods that perform very similarly, resulting their lines overlapping.  To make Figures 4--8 readable, we plot results on two separate figures on the same row.  On the left, we plot 6 methods, Base, Base-Pos, Post-Pos, Norm, Norm-Mul, and Norm-Sub.  On the right, we plot Norm-Sub with the remaining 5 methods, MLE-Apx, Base-Cut, Norm-Cut, Power and PowerNS.  We mainly want to compare the methods in the right column.

\subsection{Bias-variance Evaluation}
Figure~\ref{fig:dist_est} shows the true distribution of the synthetic Zipf's dataset and the mean of the estimations.  As we plot the count estimations (instead of frequency estimations), the variance is larger (a $n^2=10^{12}$ multiplicative factor than the frequency estimations).  We thus estimate $5000$ times in order to make the mean stabilize.
In Figure~\ref{fig:dist_diff}, we subtract the estimation mean by the ground truth and plot the difference, which representing the empirical bias.
It can be seen that Base and Norm are unbiased.  Base-Pos introduces systematic positive bias.  Base-Cut gives unbiased estimations for the first few most frequent values, as their true frequencies are much greater than the threshold $T$ used to cut off estimation below it to $0$.  As the noise is close to normal distribution, the possibility that a high-frequency value is estimated to be below $T$ is exponentially small.  The similar analysis also holds for the low-frequency values, whose estimates are unlikely to be above $T$.  On the other hand, for values in between, the two biases compete with each other.  At some point, the two effects cancel out with each other, leading to unbiased estimations.  But this point is dependent on the whole distribution, and thus is hard to be found analytically.
For Norm-Cut, the similar reasoning also applies, with the difference that the threshold in Norm-Cut is typically smaller.
For Norm-Sub, each value is influenced by two factors: subtraction by a same amount; and converting to $0$ if negative.  For the high-frequency values, we mostly see the first factor; for the low-frequency values, they are mostly affected by the second factor; and for the values in between, the two factors compete against each other.  We see an increasing line for Norm-Sub.
Finally, Power changes little to the top estimations; but more to the low ones, thus leading to a similar shape as Norm-Cut.  The shape of PowerNS is close to Power because PowerNS applies Norm-Sub, which subtract some amount to the estimations, after Power.

Figure~\ref{fig:dist_var} shows the variance of the estimations among the 5000 runs.  First of all, the variance is similar for all the values in Base and Norm, with Norm being slightly better (smaller) than Base.  For all other methods, the variance drops with the rank, because for low-frequency values, their estimates are mostly zeros.

\subsection{Full-domain Evaluation}

Figure~\ref{fig:eps_full} shows MSE when querying the frequency of every value in the domain.  Note that The MSE is composed of the (square of) bias shown in Figure~\ref{fig:dist_diff} and variance in Figure~\ref{fig:dist_var}.  We vary $\epsilon$ from $0.2$ to $4$.     Let us fist focus on the figures on the left.
Base performs very close to Norm, since the adjustment of Norm can be either positive or negative as the expected value of the estimation sum is 1.
As Base-Pos (which is equivalent to Post-Pos in this setting) converts negative results to $0$, its MSE is around half that of Base (note the y-axis is in log-scale).
Norm-Sub is able to reduce the MSE of Base by about a factor of 10 and 100 in the Zipfs and Emoji dataset respectively.
Norm-Mul behaves differently from other methods.  In particular, the MSE decreases much slower than other methods.
This is because Norm-Mul multiplies the original estimations by the same factor.  The higher the estimate, the greater the adjustment.  Since the estimations are individually unbiased, this is not the correct adjustment.

For the right part of Figure~\ref{fig:eps_full}, we observe that,
Norm-Sub and MLE-Apx perform almost exactly the same, validating the prediction from theoretical analysis.
Norm-Sub, MLE-Apx, Power, PowerNS, and Base-Cut perform very similarly.  In these two datasets, PowerNS performs the best.
Note that PowerNS works well when the distribution is close to Power-Law.  For an unknown distribution, we still recommend Base-Cut.  This is because if one considers average accuracy of all estimations, the dominating source of errors comes from the fact many values have true frequencies close or equal to $0$ are randomly perturbed. And Base-Cut maintains the high-frequency values unchanged, and
converts results below a threshold $T$ to $0$.
Norm-Cut also converts low estimations to 0, but the threshold $\theta$ is likely to be lower than $T$, because $\theta$ is chosen to achieve a sum of 1.

\mypara{Benefit of Post-Processing.}
We demonstrate the benefit of post-processing by measuring the relationship between $n$ and $n'$, so that $n$ records with post-processing can achieve the same accuracy for $n'$ records without it.  In particular, we vary $n$ and measure the errors for different methods.  We then calculate $n'$ using Equation~\ref{eq:var_general}.  In particular, the analytical MSE for $n'$ records is
\begin{align}
	\frac{1}{d}\sum_v\sigma^2_v= & \frac{q(1-q)}{n'(p-q)^2} + \frac{1}{d}\sum_v\frac{{f}_v(1-p-q)}{n'(p-q)}\nonumber \\
	=                            & \frac{q(1-q)}{n'(p-q)^2} + \frac{1}{d}\frac{1-p-q}{n'(p-q)}.\nonumber
\end{align}
Given the empirical MSE, we can obtain $n'$ that achieves the same error analytically.  Note that the MSE does not depend on the distribution.  Thus we only evaluate on the Zipf's dataset.  The result is shown in Figure~\ref{fig:eps_vary_n}.  We vary the size of the dataset $n$ and plot the value of $n'$ (note that the $x$-axes are in the scale of $10^6$ and $y$-axes are $10^7$).  The higher the line, the better the method performs.  Base and Norm are two straight lines with the slope of $1$, verifying the analytical variance.  The $y$ value for Norm-Mul grows even slower than Base, indicating the harm of using Norm-Mul as a post-processing method.  The performance of the other methods follow the similar trend of the full-domain MSE (as shown in the upper row of Figure~\ref{fig:eps_full}), with PowerNS gives the best performance, which saves around $90\%$ of users.

\begin{figure*}[ht]
	\centering
	\subfigure{
		\includegraphics[width=0.475\textwidth]{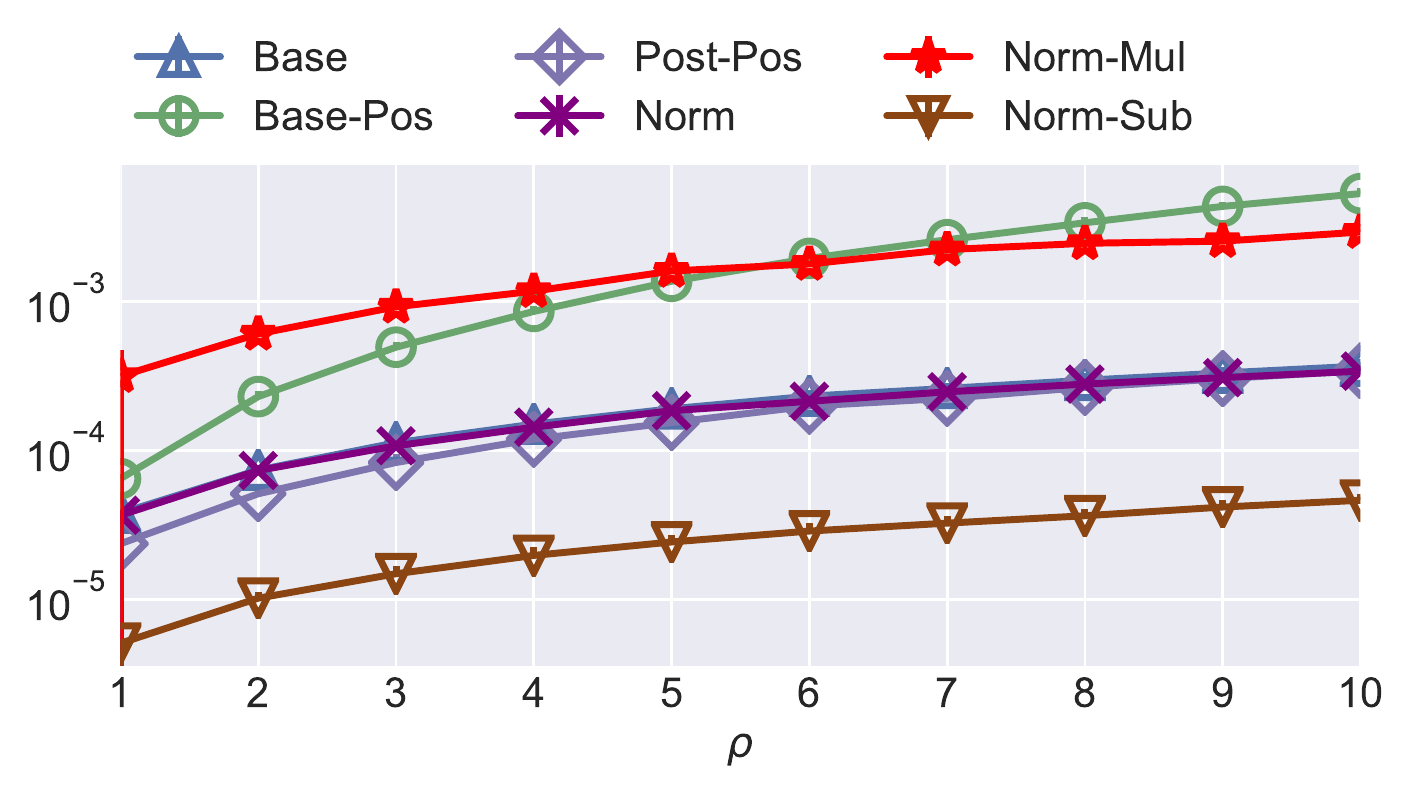}
	}
	\subfigure{
		\includegraphics[width=0.475\textwidth]{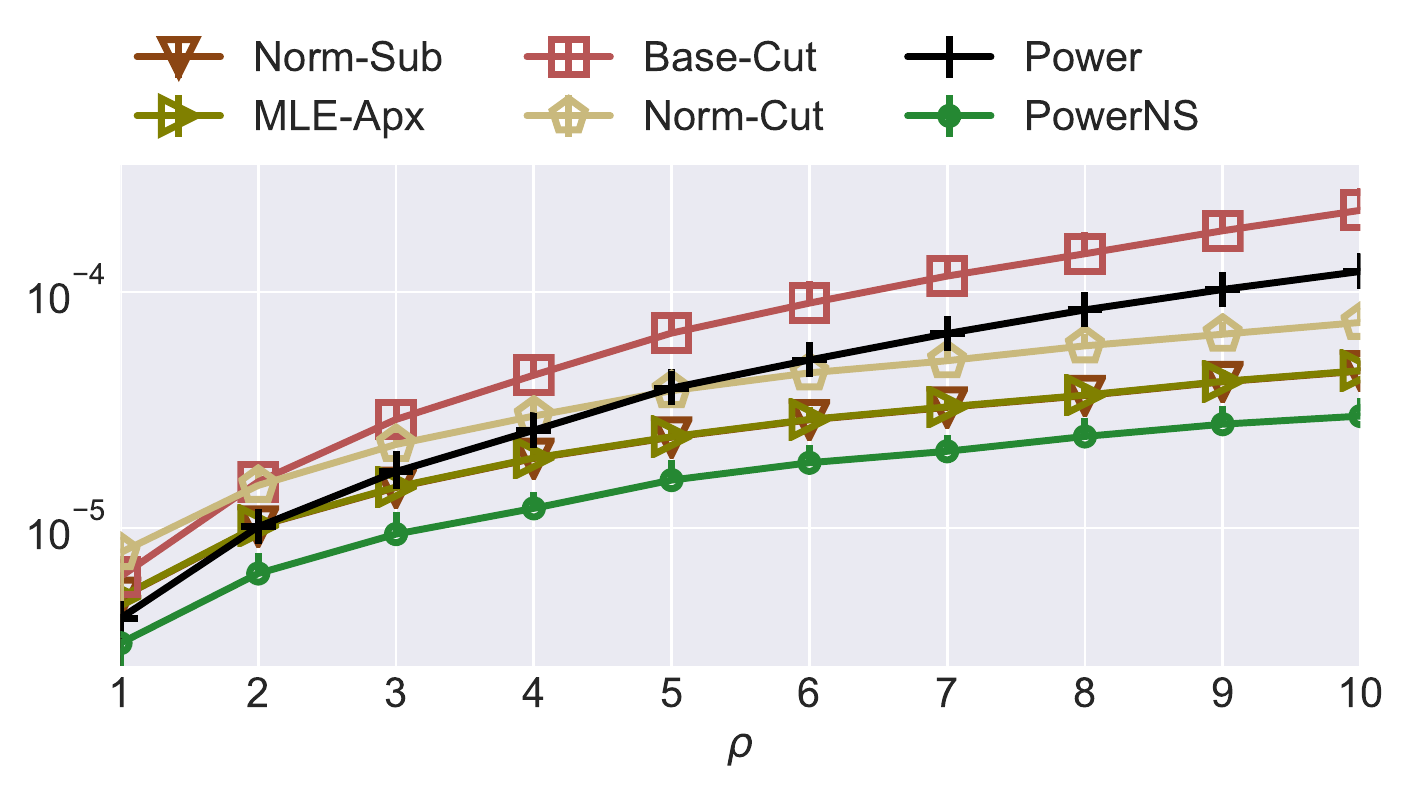}
	}
	Zipf's

	\subfigure{
		\includegraphics[width=0.475\textwidth]{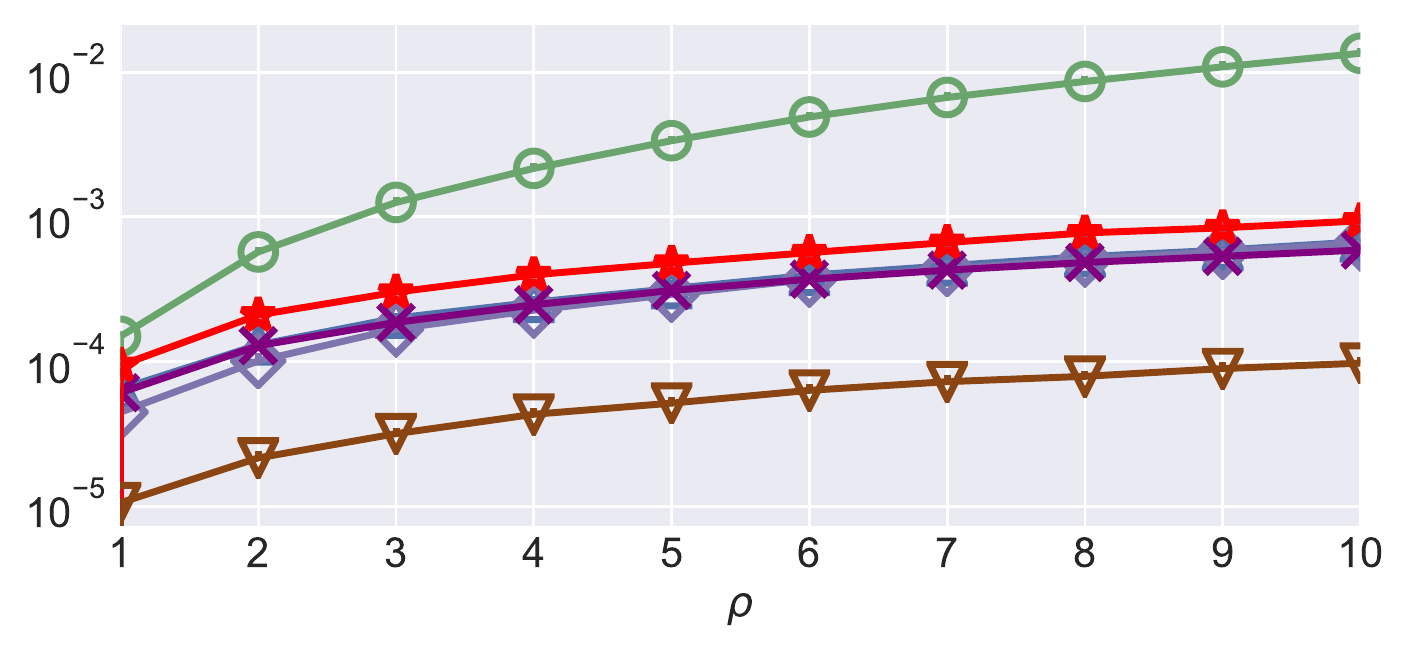}
	}
	\subfigure{
		\includegraphics[width=0.475\textwidth]{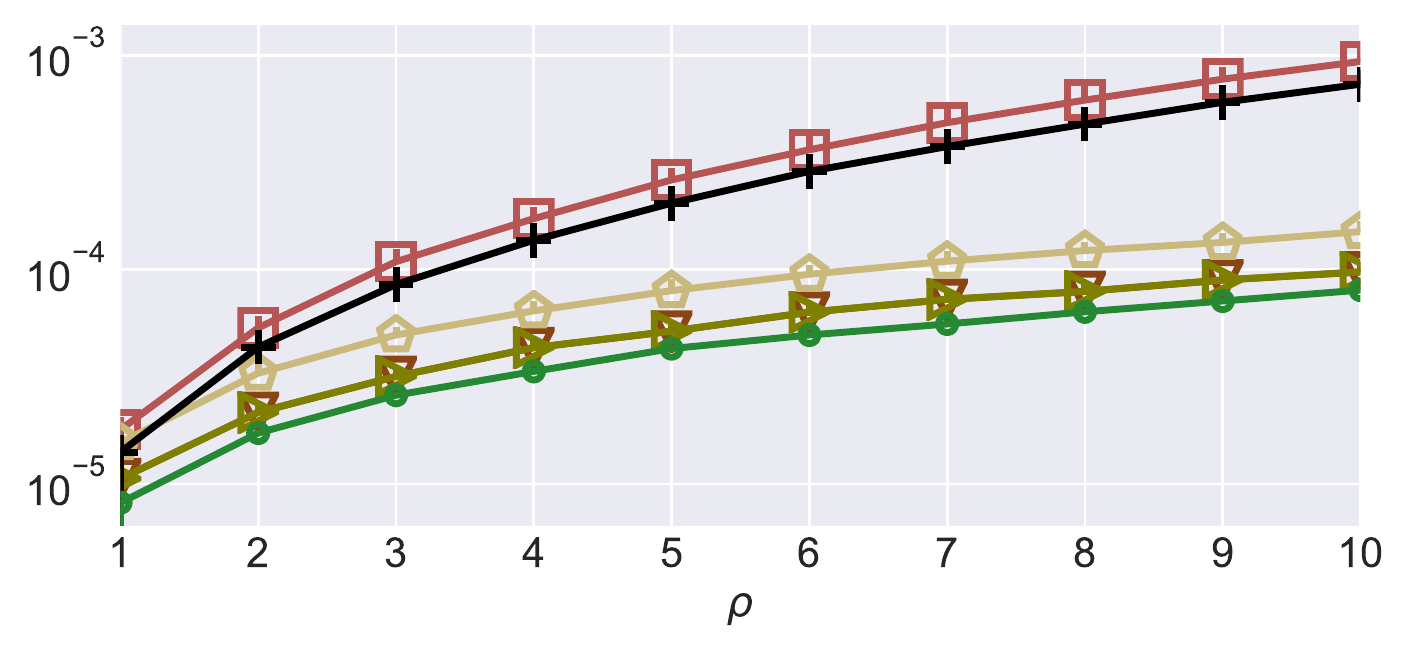}
	}
	Emoji
	\caption{MSE results on set-value estimation, varying set size percentage $\rho$ from $1$ to $10$, fixing $\epsilon=1$.}
	\label{fig:set_size_small}
\end{figure*}

\begin{figure*}[ht]
	\centering
	\subfigure{
		\includegraphics[width=0.475\textwidth]{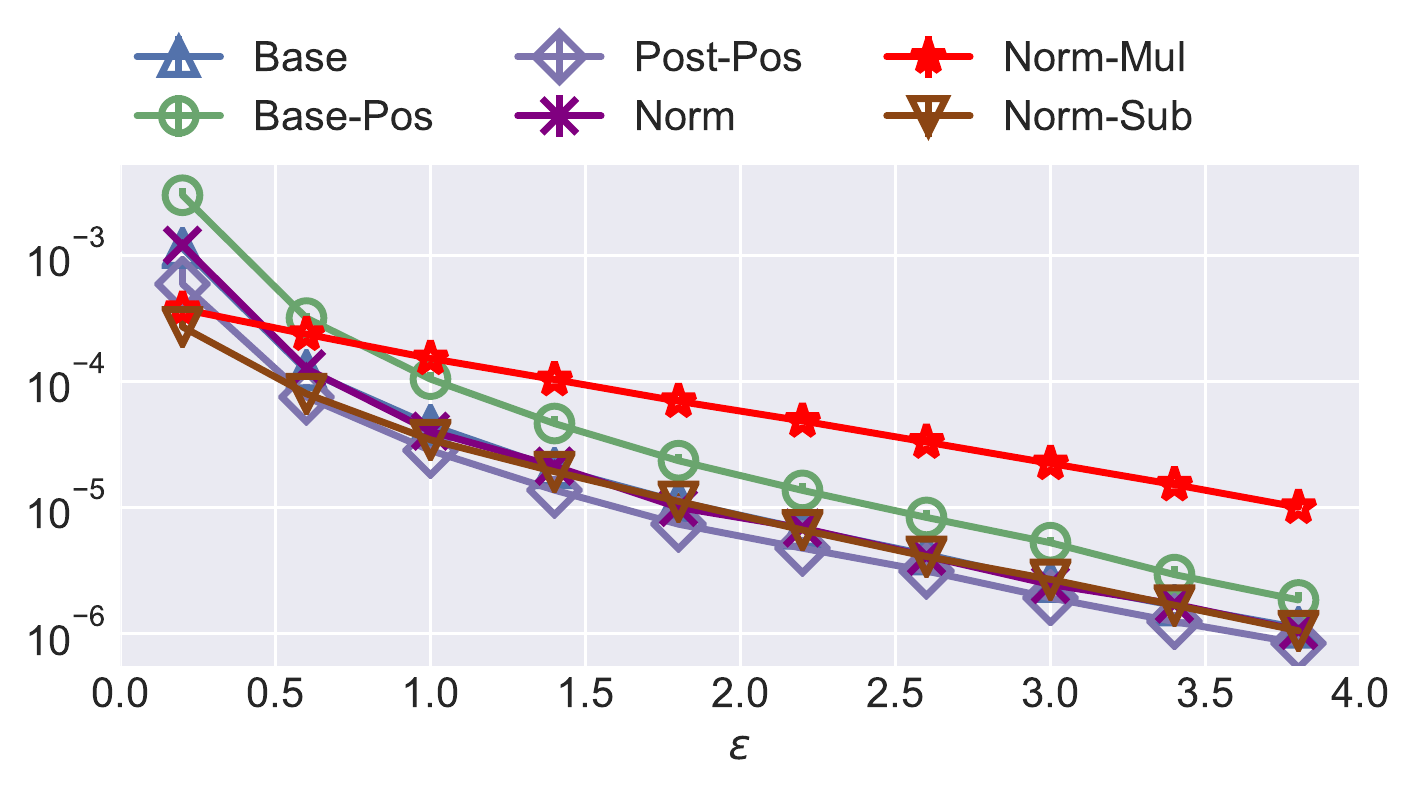}
	}
	\subfigure{
		\includegraphics[width=0.475\textwidth]{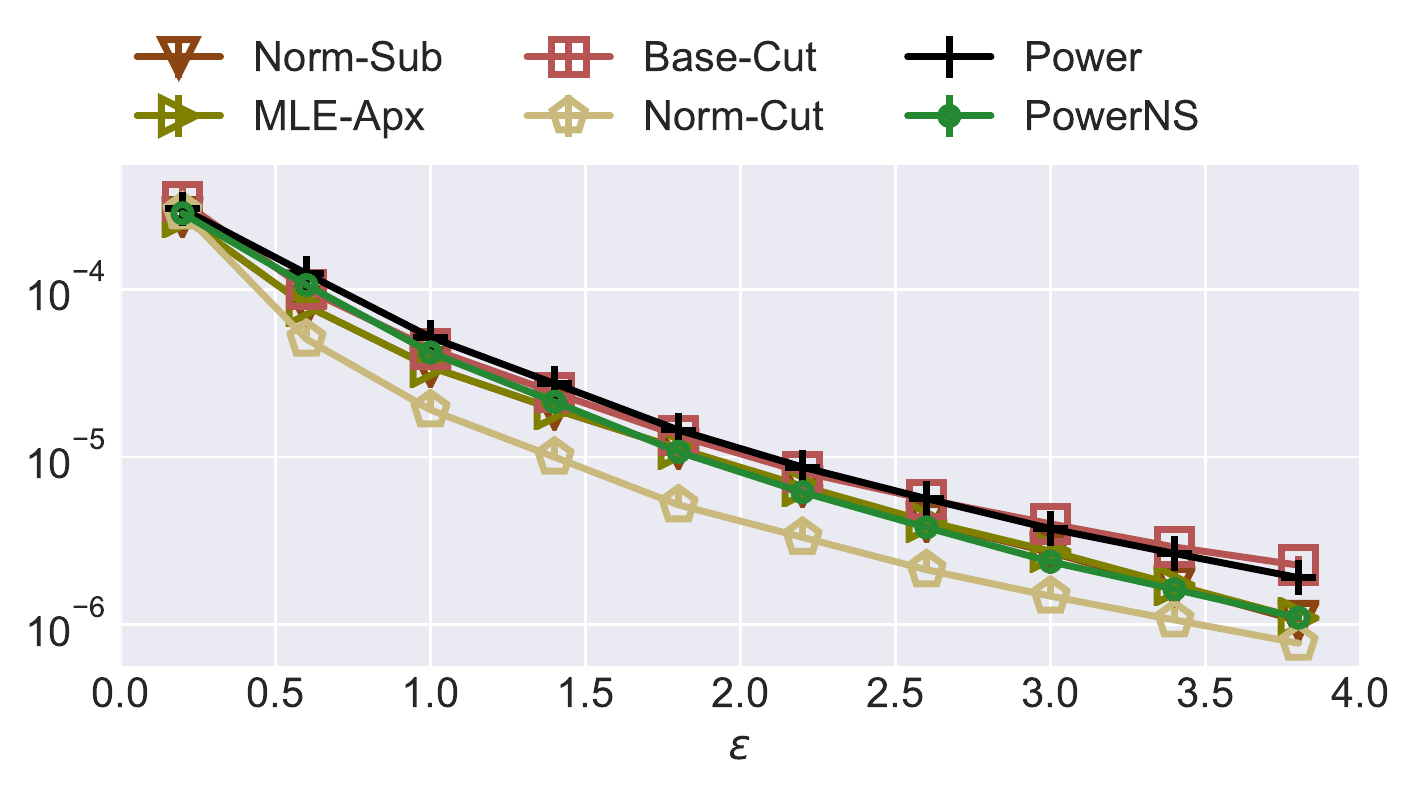}
	}
	\vspace{-0.3cm}
	\caption{MSE results on set-case estimation for the Emoji dataset, varying $\epsilon$ from $0.2$ to $4$.}
	\label{fig:eps_set_case}
\end{figure*}

\subsection{Set-value Evaluation}
Estimating set-values plays an important role
in the interactive data analysis setting (e.g., estimating which category of emoji's is more popular).
Keeping $\epsilon = 1$, we evaluate the performance of different methods by changing the size of the set.
For the set-value queries, we uniformly sample $\rho\%\times |\Domain|$ elements from the domain and evaluate the MSE between the sum of their true frequencies and estimated frequencies.  Formally, define $\Domain_{s\rho}$ as the random subset of $\Domain$ that has $\rho\%\times |\Domain|$ elements; and define $f_{\Domain_{s\rho}} = \sum_{v\in \Domain_{s\rho}} f_v$.  We sample $\Domain_{s\rho}$ multiple times and measure MSE between $f_{\Domain_{s\rho}}$ and $f'_{\Domain_{s\rho}}$.  Overall, the error MSE of set-value queries is greater than that for the full-domain evaluation, because the error for individual estimation accumulates.

\mypara{Vary $\rho$ from $10$ to $90$.}
Following the layout convention, we show results for set-value estimations in Figure~\ref{fig:set_size}, where we first vary $\rho$ from $10$ to $90$.  Overall, the approaches that exploits the summing-to-1 requirement, including Norm, Norm-Mul, Norm-Sub, MLE-Apx, Norm-Cut, and PowerNS, perform well, especially when $\rho$ is large.  Moreover, their MSE is symmetric with $\rho = 50$.  This is because as the results are normalized, estimating set-values for $\rho>50$ equals estimating the rest.  When $\rho = 90$, the best norm-based method, PowerNS, outperforms any of the non-norm based methods by at least $2$ orders of magnitude.

For each specific method, it is observed the MSE for Base-Pos is higher than other methods, because it only turns negative estimates to $0$, introducing systematic bias.  Post-Pos is slightly better than Base, as it turns negative query results to $0$.  In the settings we evaluated, Base-Cut also outperforms Base;  this happens because converting estimates below the threshold $T$ to $0$ is more likely to make the summation $f'_\Domain$ close to one.  Finally, Power only converts negative estimations to be positive, introducing systematic bias; PowerNS further makes them sum to 1, thus achieving better utility than all other methods.

\mypara{Vary $\rho$ from $1$ to $10$.}
Having examined the performance of set-queries for larger $\rho$, we then vary $\rho$ from $1$ to $10$ and demonstrate the results in Figure~\ref{fig:set_size_small}.  Within this $\rho$ range, the errors of all methods increase with $\rho$, which is as expected.  When $\rho$ becomes small, the performance of different methods approaches to that of full-domain estimation.

Norm-Cut varies the threshold so that after cutting, the remaining estimates sum up to one.  Thus the performance of Norm-Cut is better than Base-Cut especially when $\rho\ge 2$.  Intuitively, the norm-based methods should perform better answering set-queries.  But Norm-Mul does not.  This is because the multiplication operation reduces the large estimates a lot, making them biased.  This also demonstrates that enforcing sum-to-one is not enough.  Different approaches perform significantly different.

\mypara{Fixed set queries.}
Besides random set queries, we include a case study of fixed subset queries for the Emoji dataset.
The queries ask the frequency of each category\footnote{\url{https://data.world/kgarrett/emojis}}.  There are 68 categories with the mean of $10.4$ items per set.
The MSE varying $\epsilon$ is reported in Figure~\ref{fig:eps_set_case}.  It is interesting to see that the Post-Pos works best in the left sub-figure, and Norm-Cut from the right performs even better, especially when $\epsilon<3$.  This indicates the set-queries contain values that are infrequent.

\begin{figure}[ht]
	\centering
	\subfigure{
		\includegraphics[width=0.48\textwidth]{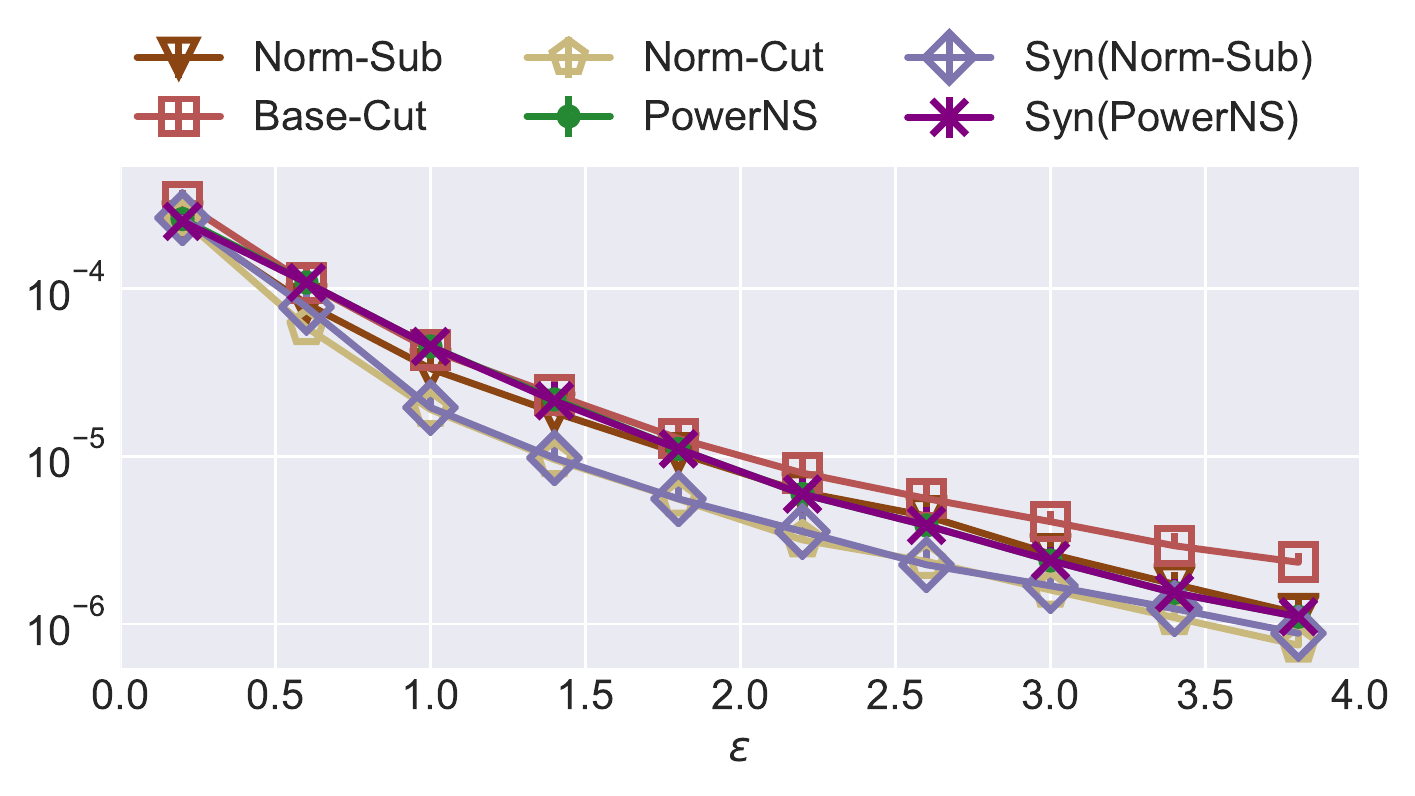}
	}
	\vspace{-0.8cm}
	\caption{Synthetic estimation for set-case query on the Emoji dataset.
	}
	\label{fig:eps_syn_full}
\end{figure}

\mypara{Choosing the method on synthetic dataset.}
As the optimal method in fixed set-values (as shown in Figure~\ref{fig:eps_set_case}) is different from random set-values (shown in Figure~\ref{fig:set_size} and~\ref{fig:set_size_small}), we investigate whether we can select the optimal post-processing method given the query and the LDP reports.  In particular, we first fit a synthetic dataset from the estimation, then we simulate the data collection and estimation process multiple times, with different post-processing methods, and we calculate the errors taking the synthesized dataset as the ground truth.  Figure~\ref{fig:eps_syn_full} shows the result.  Note that as we generate the synthetic dataset from the estimated distribution, the distribution itself should be consistent (non-negative and sum up to 1).  We select Norm-Sub and PowerNS to process the estimated distribution first.  These two methods perform well on full-domain and random set-value queries.

From the figure we can see that if the results are processed by Norm-Sub, the optimal method can be find quite accurately; if PowerNS is used, PowerNS will be selected.  The reason is that PowerNS makes the distribution more close to the prior of Power-Law distribution, while Norm-Sub does not.

\begin{figure*}[ht]
	\centering
	\subfigure{
		\includegraphics[width=0.475\textwidth]{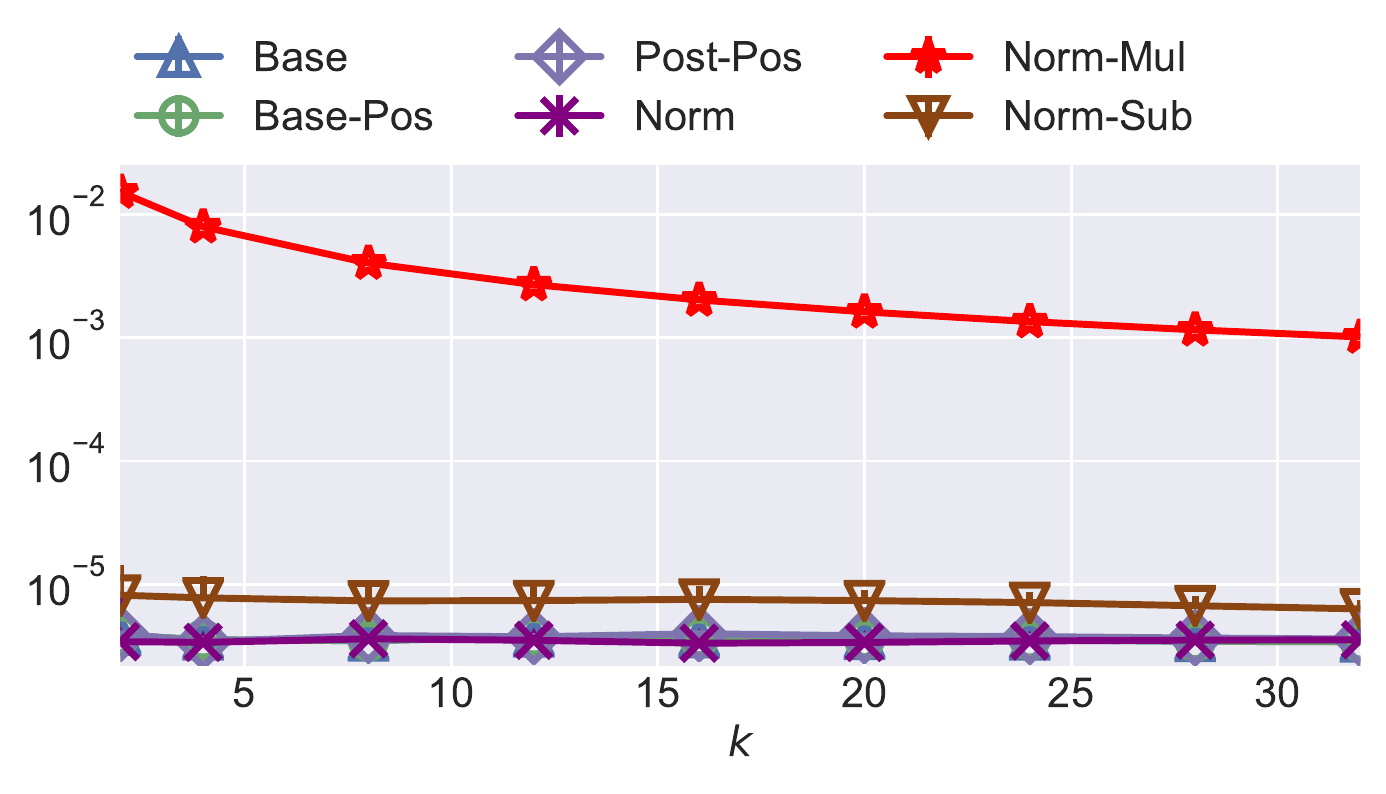}
	}
	\subfigure{
		\includegraphics[width=0.475\textwidth]{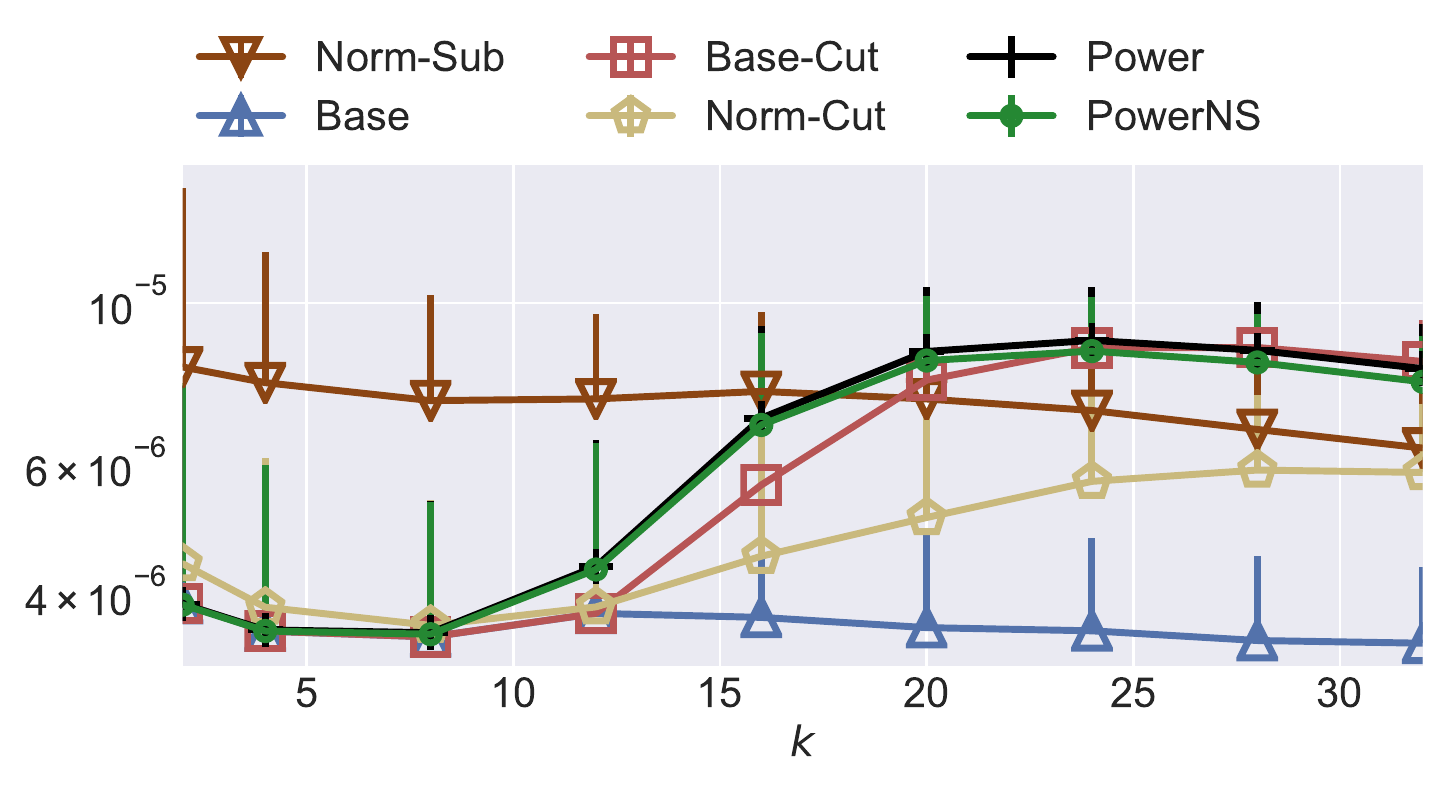}
	}
	Zipf's

	\subfigure{
		\includegraphics[width=0.475\textwidth]{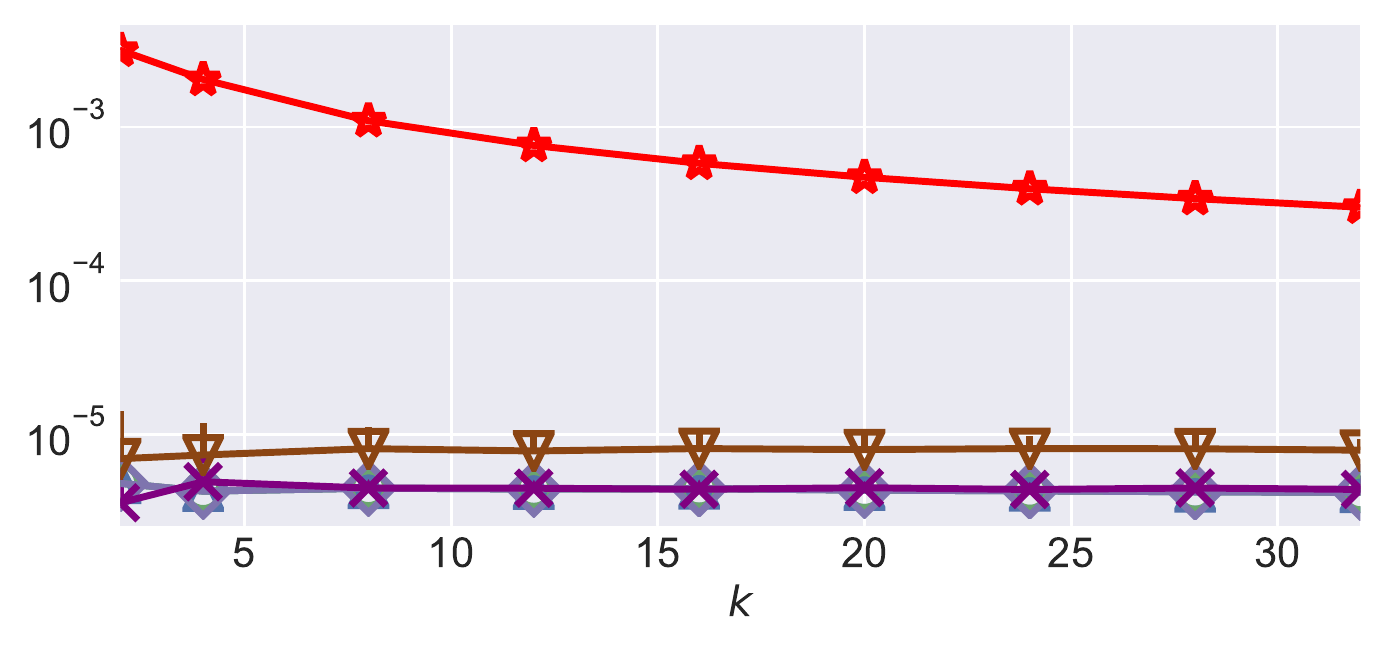}
	}
	\subfigure{
		\includegraphics[width=0.475\textwidth]{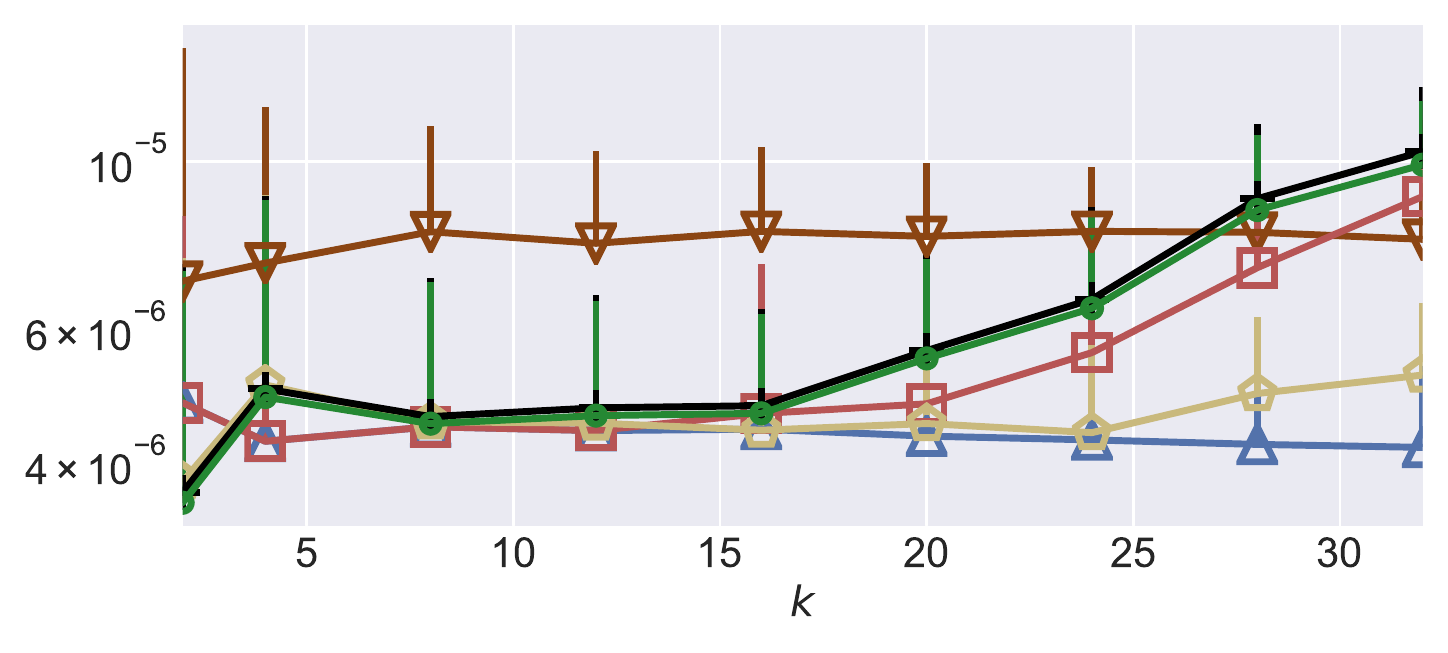}
	}
	Emoji
	\caption{MSE results on top-$k$ value estimation varying $k$ from $2$ to $32$, fixing $\epsilon=1$.}
	\label{fig:top_size}
\end{figure*}

\subsection{Frequent-value Evaluation}
\label{subsec:eval_frequent}
Finally, we evaluate different methods varying the top values to be considered.
Define $\Domain_{tk}$ as $\{v\in \Domain\mid f_v\mbox{ ranks top } k\}$.  We measure MSE between $(f'_v)_{v\in \Domain_{tk}}$ and $(f_v)_{v\in \Domain_{tk}}$ for different values of $k$ (from $2$ to $32$), fixing $\epsilon=1$.  Note that neither the frequency oracle nor the subsequent post-processing operation is aware of $\Domain_{tk}$.

From the left column of Figure~\ref{fig:top_size}, we observe that Base, Base-Pos, Post-Pos, and Norm perform consistently well for different $k$, as the first three methods do nothing to the top values, and Norm touches them in an unbiased way.
Norm-Mul performs at least $10\times$ worse than any other methods because it reduces the higher estimations a lot.
Norm-Sub performs worse than Base, but better than Norm-Mul, because the same amount is subtracted from every estimate, regardless of $k$.

To give a better comparison, we plot both Base and Norm-Sub to the right (i.e., we ignore MLE-Apx for now, as it performs the same as Norm-Sub).  These two methods have consistent MSE for different $k$.  The rest four methods, Base-Cut, Norm-Cut, Power, and PowerNS, all have MSE that grows with $k$.  In particular, for Base-Cut, a fixed threshold $T$ (in Equation~\eqref{eq:sig_threshold}) is used and estimates below it is converted to $0$.  This also suggests that at $\epsilon=1$, around $10$ values can be reliably estimated.  This also happens to Norm-Cut for the similar reason.  As Norm-Cut is better than Base-Cut, it suggests the threshold used in Norm-Cut is smaller than that in Base-Cut.  If $T$ is reduced, MSE of Base-Cut can be lowered until it matches that of Norm-Cut.  Thus $T$ is actually a tradeoff between frequent values and set-values.  In practice, if the desired $k$ is known in advance, one can set $T$ to be the $k$-th highest estimated value.  Finally, the performances of Power and PowerNS are similar, and they are worse than Base-Cut, especially when $k > 10$.

\subsection{Discussion}
In summary, we evaluate the $10$ post-processing methods on different datasets, for different tasks, and varying different parameters.  We now summarize the findings and present guidelines for using the post-processing methods.

With the experiments, we verify the connections among the methods:  Norm-Sub and MLE-Apx perform similarly, and Base and Norm performs similarly.

The best choice for post-processing method depends on the queries one wants to answer.
If set-value estimation is needed, one should use PowerNS.  When the set is fixed, one can also choose the optimal method using a synthetic dataset processed with Norm-Sub.  The intuition is that PowerNS improves over the approximate MLE (i.e., Norm-Sub, which is a theoretically testified method) by making the estimates closer to the underlying distribution.
If one just want to estimate results for the most frequent values, one can use Norm.  While Base can also be used, Norm reduces variance by utilizing the property that the estimates sum up to $1$.  These two methods do not change any value dramatically.
Finally, if one cares about single value queries only,
Base-Cut should be used.  This is because when many values in the dataset are of low frequency, converting low estimates to $0$ benefit the utility.
Overall, one can follow the guideline for choosing post-processing methods.
\begin{itemize}
	\item When single value queries are desired, use Base-Cut.
	\item When frequent values are desired, use Norm.
	\item When set-value queries are desired, use PowerNS or select one using synthetic datasets.
\end{itemize}

\section{Related Work}
\label{sec:related}

LDP frequency oracle (estimating frequencies of values) is a fundamental primitive in LDP. There have been several mechanisms~\cite{ccs:ErlingssonPK14,stoc:BassilyS15,uss:WangBLJ17,nips:BassilyNST17,aistats:AcharyaSZ18,tiot:YeB18} proposed for this task. Among them,~\cite{uss:WangBLJ17} introduces \olh, which achieves low estimation errors and low communication costs on large domains.  Hadamard Response~\cite{nips:BassilyNST17, aistats:AcharyaSZ18} is similar to \olh in essence, but uses the Hadamard transform instead of hash functions.  The aggregation part is faster because evaluating a Hadamard entry is practically faster; but it only outputs a binary value, which gives higher error than \olh for larger $\epsilon$ setting.  Subset selection~\cite{tiot:YeB18,corr:WangHWNXYLQ16} achieves better accuracy than \olh, but with a much higher communication cost.

LDP frequency oracle is also a building block for other analytical tasks, e.g., finding heavy hitters~\cite{nips:BassilyNST17,pods:BunNS18,tdsc:WangLJ19}, frequent itemset mining~\cite{ccs:qin2016heavy,sp:wang2018locally}, releasing marginals under LDP~\cite{tifs:RenYYYYMY18,sigmod:CormodeKS18,ccs:ZhangWLHC18}, key-value pair estimation~\cite{sp:YeHMZ19,uss:gu2019pckv}, evolving data monitoring~\cite{nips:JosephRUW18,soda:ErlingssonFMRTT18}, and (multi-dimensional) range analytics~\cite{sigmod:wang2019answering,vldb:KulkarniCD18}.
Mean estimation is also a building block in LDP; most of existing work transforms the numerical value to a discrete value using stochastic round, and then apply frequency oracles~\cite{focs:DuchiJW13,icde:WangXYZ19,li2019estimating}.

There exist efforts to post-process results in the setting of centralized DP.  Most of them focus on utilizing the structural information in problems other than the simple histogram, e.g., estimating marginals~\cite{sigmod:DingWHL11,sigmod:qardaji2014priview} and hierarchy structure~\cite{pvldb:HayRMS10}.  The methods do not consider the non-negativity constraint.  Other than that, they are similar to Norm-Sub and minimize $L_2$ distance.
On the other hand, the authors of~\cite{kdd:LeeWK15} started from MLE and propose a method to minimize $L_1$ instead of $L_2$ distance, as the DP noise follows Laplace distribution.

In the LDP setting, Kairouz et al. \cite{icml:KairouzBR16} study exact MLE for \grr and RAPPOR~\cite{ccs:ErlingssonPK14}; and empirically show exact MLE performs worse than Norm-Sub.  In~\cite{aistats:Bassily19}, Bassily proves the error bound of Norm-Sub for the Hadamard Response mechanism.  Jia et al.~\cite{infocom:JiaG18} propose to use external information about the dataset's distribution (e.g., assume the underlying dataset follows Gaussian or Zipf's distribution).
We note that such information may not always be available.
On the other hand, we exploit the basic information in each LDP setting.  That is, first, the total number of users is known; second, negative values are not possible.
We found that in the LDP setting, on the contrary to \cite{icml:KairouzBR16}, minimizing $L_2$ distance achieves MLE under the approximation that the noise is close to the Gaussian distribution.
There are also post-processing techniques proposed for other settings: Blasiok et al. \cite{soda:blasiok2019} study the post-processing for linear queries, which generalizes histogram estimation; but their method only applied to a non-optimal LDP mechanism.
\cite{icde:WangXYHSSY18} and \cite{vldb:KulkarniCD18} consider the hierarchy structure and apply the technique of \cite{pvldb:HayRMS10}. \cite{sp:YeHMZ19} considers mean estimation and propose to project the result into $[0, 1]$.

\section{Conclusion}
\label{sec:conc}

In this paper, we study how to post-process results from existing frequency oracles to make them consistent while achieving high accuracy for a wide range of tasks, including frequencies of individual values, frequencies of the most frequent values, and frequencies of subsets of values.  We considered 10 different methods, in addition to the baseline.  We identified Norm performs similar to Base, and MLE-Apx performs similar to Norm-Sub.  We then recommend that for full-domain estimation, Base-Cut should be used; when estimating frequency of the most frequent values, Norm should be used; when answering set-value queries, PowerNS or the optimal one from synthetic dataset should be used.

\section*{Acknowledgement}
This project is supported by NSF grant 1640374, NWO grant 628.001.026, and NSF grant 1931443. We thank our shepherd Neil Gong and the anonymous reviewers for their helpful suggestions.
	{
		\bibliographystyle{abbrv}
		\bibliography{ref}
	}

\appendices

\section{Solution for CLS}
\label{app:kkt_cls}

Using the KKT condition~\cite{book:kuhn2014nonlinear,disertation:karush1939minima}, we augment the optimization target with the following equations:
\begin{eqnarray}
	\recht{minimize} && \sum_{v} (f'_v - \tilde{f}_v)^2 + a + b \nonumber\\
	\recht{where} && \sum_v f'_v = 1,\;\;\forall v: 0 \leq f'_v \leq 1, \nonumber\\
	&& a=\mu\cdot\sum_v f'_v,  b=\sum_v \lambda_v\cdot f'_v,  \forall v: \lambda_v\cdot f'_v = 0.\nonumber
\end{eqnarray}
Since $b=0$, and $a=\mu$ is a constant, the condition that minimizing the target is unchanged.  Given that the target is convex, we can find the minimum by taking the partial derivative with respect to each variable:
\begin{align}
	         & \frac{\partial \left[\sum_{v} (f'_v - \tilde{f}_v)^2 + a + b \right]}{\partial f'_v} = 0\nonumber \\
	\implies & 2(f'_v - \tilde{f}_v) + \mu + \lambda_v = 0\nonumber                                              \\
	\implies & f'_v = \tilde{f}_v - \frac{1}{2}(\mu + \lambda_v)\nonumber
\end{align}

Now suppose there is a subset of domain $D_0\subseteq D$ s.t., $\forall v\in D_0, f'_v=0$ and $\forall v\in D_1=D\setminus D_0, f'_v> 0\wedge \lambda_v=0$.  By summing up $f'_v$ for all $v\in D_1$, we have
\begin{align}
	1 & = \sum_{v\in D_1}\tilde{f}_v -  \frac{|D_1|\mu}{2}\nonumber
\end{align}

Thus for all $v\in D_1$, we can use the formula
\begin{align*}
	f'_v = & \tilde{f}_v - \frac{1}{|D_1|}\left(\sum_{v\in D_1}\tilde{f}_v - 1\right)
\end{align*}
to derive the estimate $f'_v$ for value $v\in D_1$, and $f'_v = 0$ for $v\in D_0$.  One can also find $D_0$ using a similar approach when dealing with MLE.  And it can also be verified $\sum_v f'_v = 1$.

\section{Solution for MLE-Apx}
\label{app:kkt_mle}
From Equation~\eqref{eq:mle_apx}, we first simplify the exponent plugging in the value of $\sigma'_v$ as in Equation~\eqref{eq:var_general}:
\begin{align*}
	\sum_v \frac{(f'_v - \tilde{f}_v)^2}{2\sigma_v^{\prime 2}}
	= & \frac{n}{2} \sum_v \frac{ (f'_v - \tilde{f}_v)^2 (p-q)^2}{q(1-q) + f'_v(p-q)(1-p-q)}
\end{align*}

The factor $\frac{n}{2}$ in the exponent ensures that for large $n$ the exponent will vary the most with $\mathbf{f}'$, which dominates the coefficient $\frac{1}{\sqrt{2\pi \prod_v \sigma_v^{\prime 2}}}$.  Thus approximately we find $\mathbf{f}'$ that achieves the following optimization goal:
\begin{align}
	\mbox{minimize: }   & \sum_v \frac{ (f'_v - \tilde{f}_v)^2 (p-q)^2}{q(1-q) + f'_v(p-q)(1-p-q)}\nonumber \\
	\mbox{subject to: } & \sum_v f'_v = 1, \nonumber                                                        \\
	                    & \forall v, 0 \leq f'_v \leq 1. \nonumber
\end{align}

Using the KKT condition~\cite{book:kuhn2014nonlinear,disertation:karush1939minima}, we augment the optimization target with the following equations:
\begin{eqnarray}
	\recht{minimize} && \sum_v \frac{ (f'_v - \tilde{f}_v)^2 (p-q)^2}{q(1-q) + f'_v(p-q)(1-p-q)}  + a + b \nonumber\\
	\recht{where} && \sum_v f'_v = 1,\;\; \forall v: 0 \leq f'_v \leq 1, \nonumber\\
	&& a=\mu\cdot\sum_v f'_v, b=\sum_v \lambda_v\cdot f'_v, \forall v: \lambda_v\cdot f'_v = 0.\nonumber
\end{eqnarray}
Since $b=0$, and $a=\mu$ is a constant, the condition for minimizing the target is unchanged.  Given that the target is convex, we can find the minimum by taking the partial derivative with respect to each variable:
\begin{align}
	  & \frac{\partial \left[\sum_{v} \frac{ (f'_v - \tilde{f}_v)^2 (p-q)^2}{q(1-q) + f'_v(p-q)(1-p-q)} + a + b\right]}{\partial f'_v} \nonumber \\
	= & \frac{-(f'_v - \tilde{f}_v)^2 (p-q)^2 \cdot (p-q)(1-p-q)}{(q(1-q) + f'_v(p-q)(1-p-q))^2} \nonumber                                       \\
	+ & \frac{2(f'_v - \tilde{f}_v)(p - q)^2}{q(1-q) + f'_v(p-q)(1-p-q)} + \mu + \lambda_v = 0\nonumber
\end{align}
Define a temporary notation
\begin{align}
	x_v =                 & \frac{(f'_v - \tilde{f}_v)(p - q)}{q(1-q) + f'_v(p-q)(1-p-q)}\nonumber           \\
	\mbox{so that }f'_v = & \frac{q(1 - q)x_v + \tilde{f}_v(p - q)}{p - q - (p-q)(1-p-q)x_v}\label{eq:f_equ}
\end{align}
With $x_v$, we can simplify the previous equation:
\begin{align}
	(p-q)(1-p-q) x_v^2 - 2(p - q)x_v - \mu - \lambda_v = 0\label{eq:x_equ}
\end{align}

Now suppose there is a subset of domain $D_0\subseteq D$ s.t., $\forall v\in D_0, f'_v=0$ and $\forall v\in D_1=D\setminus D_0, f'_v> 0$ and $\lambda_v=0$.  Thus for those $v\in D_1$, solution of $x_v$ in Equation~\eqref{eq:x_equ} does not depend on $v$.  We solve $x_v$ by summing up $f'_v$ for all $v\in D_1$:
\begin{align}
	\sum_{v\in D_1} f'_v = & 1
	=  \sum_{v\in D_1}\frac{q(1 - q)x_v + \tilde{f}_v(p - q)}{p - q - (p-q)(1-p-q)x_v} \nonumber                            \\
	=                      & \frac{|D_1| q(1-q) x_v + \sum_{v\in D_1} \tilde{f}_v(p - q)}{p - q + (p-q)(1-p-q)x_v}\nonumber \\
	\implies x_v =         & \frac{\sum_{x\in D_1} \tilde{f}_v(p - q)  - (p - q)}{ (p-q)(1-p-q) - |D_1| q(1-q)}\nonumber
\end{align}
Given $x_v$, we can compute $f'_v$ from Equation~\eqref{eq:f_equ} for each value $v\in D_1$ efficiently; and $f'_v = 0$ for $v\in D_0$.  It can be verified $\sum_v f'_v = 1$.

Finally, to find $D_0$, one initiates $D_0=\emptyset$ and $D_1=D$, and iteratively tests whether all values in $D_1$ are positive.  In each iteration, for any negative $a_x$, $x$ is moved from $D_1$ to $D_0$.  The process terminates when no negative $a_x$ is found for all $x\in D_1$.

\end{document}